\newcommand{\blind}{1}
\tikzset{>=latex}
\tikzstyle{plate caption} = [caption, node distance=0, inner sep=0pt,
\def\UrlAlphabet{%
	\do\a\do\b\do\c\do\d\do\e\do\f\do\g\do\h\do\i\do\j%
	\do\k\do\l\do\m\do\n\do\o\do\p\do\q\do\r\do\s\do\t%
	\do\u\do\v\do\w\do\x\do\y\do\z\do\A\do\B\do\C\do\D%
	\do\E\do\F\do\G\do\H\do\I\do\J\do\K\do\L\do\M\do\N%
	\do\O\do\P\do\Q\do\R\do\S\do\T\do\U\do\V\do\W\do\X%
	\do\Y\do\Z}
\def\UrlDigits{\do\1\do\2\do\3\do\4\do\5\do\6\do\7\do\8\do\9\do\0}
\g@addto@macro{\UrlBreaks}{\UrlOrds}
\g@addto@macro{\UrlBreaks}{\UrlAlphabet}
\g@addto@macro{\UrlBreaks}{\UrlDigits}
\normalfont\fontsize{12.5}{12.5}\bfseries}{\thesection}{1em}{}
\normalfont\fontsize{12}{12}\bfseries}{\thesubsection}{1em}{}
\titlespacing{\section}{0pt}{3pt}{3pt}
\titlespacing{\subsection}{0pt}{1pt}{1pt}
\titlespacing{\subsubsection}{0pt}{1pt}{1pt}
\newcommand{\nodisplayskips}{
	\setlength{\abovedisplayskip}{3pt}
	\setlength{\belowdisplayskip}{3pt}
	\setlength{\abovedisplayshortskip}{3pt}
	\setlength{\belowdisplayshortskip}{3pt}}
\newtheorem{theorem}{\bf{Theorem}}
\newtheorem{remark}{\bf{Remark}}
\newtheorem{lemma}{\bf{Lemma}}
\newtheorem{proposition}{\bf{Proposition}}
\newtheorem{condition}{\bf{Condition}}
\newtheorem{assumption}{\bf{Assumption}}
\newcommand{\nn}{\nonumber}
\newcommand{\scB}{\mathscr{B}}
\newcommand{\var}{{\rm var}}
\newcommand{\tr}{{{\rm tr}}}
\newcommand{\cov}{{{\rm cov}}}
\newcommand{\diag}{{\rm diag}}
\newcommand{\cA}{\mathcal{A}}
\newcommand{\bg}{\boldsymbol{g}}
\newcommand{\bG}{\boldsymbol{G}}
\newcommand{\bU}{\boldsymbol{U}}
\newcommand{\betax}{\beta_{X}}
\newcommand{\taup}{\tau_{\rm p}}
\newcommand{\bgamma}{\boldsymbol{\gamma}}
\newcommand{\bGamma}{\boldsymbol{\Gamma}}
\newcommand{\bbeta}{\boldsymbol{\beta}}
\newcommand{\balpha}{\boldsymbol{\alpha}}
\newcommand{\bxi}{\boldsymbol{\xi}}
\newcommand{\bzero}{\boldsymbol{0}}
\newcommand{\bw}{\boldsymbol{w}}
\newcommand{\bl}{\boldsymbol{l}}
\newcommand{\bbL}{\boldsymbol{L}}
\newcommand{\bbY}{\mathbf{Y}}
\newcommand{\bbX}{\mathbf{X}}
\newcommand{\bbG}{\mathbf{G}}
\newcommand{\bbPsi}{\mathbf{\Psi}}
\newcommand{\bbPhi}{\mathbf{\Phi}}
\newcommand{\bbS}{\mathbf{S}}
\newcommand{\bbV}{\mathbf{V}}
\newcommand{\bbD}{\mathbf{D}}
\newcommand{\bbQ}{\mathbf{Q}}
\newcommand{\bbH}{\mathbf{H}}
\newcommand{\bbR}{\mathbf{R}}
\newcommand{\bbI}{\mathbf{I}}
\newcommand{\bbA}{\mathbf{A}}
\newcommand{\bbM}{\mathbf{M}}
\newcommand{\bbOmega}{\mathbf{\Omega}}
\newcommand{\bbSig}{\mathbf{\Sigma}}
\newcommand{\bbP}{\mathbf{P}}
\newcommand{\hbeta}{\hat{\beta}}
\newcommand{\enbeta}{\hat{\beta}}
\newcommand{\hgamma}{\widehat{\boldsymbol{\gamma}}}
\newcommand{\tgamma}{\widetilde{\boldsymbol{\gamma}}}
\newcommand{\hGamma}{\widehat{\boldsymbol{\Gamma}}}
\newcommand{\htau}{\hat{\tau}}
\newcommand{\hD}{\widehat{\mathbf{D}}}
\newcommand{\hQ}{\widehat{\mathbf{Q}}}
\newcommand{\hR}{\widehat{\mathbf{R}}}
\newcommand{\hSig}{\widehat{\mathbf{\Sigma}}}
\newcommand{\hPsi}{\widehat{\mathbf{\Psi}}}
\newcommand{\hS}{\widehat{\mathbf{S}}}
\newcommand{\hs}{\widehat{s}}
\newcommand{\cD}{\mathcal{D}}
\def\T{{\mathrm{\scriptscriptstyle T} }}
\begin{document}
	\pagenumbering{gobble}

	\def\spacingset#1{\renewcommand{\baselinestretch}%
		{#1}\small\normalsize} \spacingset{1}

	
	\if1\blind
	{
		\title{\bf \fontsize{13}{13}\selectfont Debiased Estimating Equation Method for Robust and Efficient Mendelian Randomization Using a Large Number of Correlated Weak and Invalid Instruments
		}
	\author{\normalsize Ruoyu Wang, Haoyu Zhang, Xihong Lin
	\thanks{Ruoyu Wang is a postdoctoral fellow in the Department of Biostatistics at Harvard T.H. Chan School of Public Health ({\em ruoyuwang@hsph.harvard.edu}). Haoyu Zhang is Earl Stadtman tenure-track investigator in the Division of Cancer Epidemiology and Genetics, National Cancer Institute ({\em haoyu.zhang2@nih.gov}). Xihong Lin is Professor of Biostatistics at Harvard T.H. Chan School of Public Health and Professor of Statistics at Harvard University ({\em xlin@hsph.harvard.edu}).  This work was supported by the National Institutes of Health grants R35-CA197449, R01-HL163560, U01-HG012064, U19-CA203654,  and P30 ES000002.}}
\date{}
\maketitle
} \fi

\if0\blind
{
\bigskip
\bigskip
\bigskip
\begin{center}
	{\bf \fontsize{13}{13}\selectfont Debiased Estimating Equation Method for Robust and Efficient Mendelian Randomization Using a Large Number of Correlated Weak and Invalid Instruments}
\end{center}
\medskip
} \fi

\bigskip
\begin{abstract}
	Mendelian randomization (MR) is a widely used tool for causal inference in the presence of unmeasured confounders, which uses single nucleotide polymorphisms (SNPs) as instrumental variables to estimate causal effects. However,  SNPs often have weak effects on complex traits, leading to bias 
	in existing MR analysis when weak instruments are included.
	In addition, existing MR methods often restrict analysis to independent SNPs via linkage disequilibrium clumping and result in a loss of efficiency  in estimating the causal effect due to discarding correlated SNPs. To address these issues, we propose the Debiased Estimating Equation Method (DEEM), a summary statistics-based MR approach that can incorporate a large number of correlated, weak-effect, and invalid SNPs. DEEM effectively eliminates the weak instrument bias and improves the statistical efficiency of the causal effect estimation by leveraging information from a large number of correlated SNPs. DEEM also allows for pleiotropic effects, adjusts for the winner's curse, and applies to both two-sample and one-sample MR analyses. Asymptotic analyses of the DEEM estimator demonstrate its attractive theoretical properties. Through extensive simulations and two real data examples, we demonstrate that DEEM significantly improves the efficiency and robustness of MR analysis compared with existing methods.
\end{abstract}

\noindent%
{\it Keywords:}  Causal inference, Genome-wide association studies, Linkage disequilibrium, Statistical efficiency, Weak instrumental variables.
\vfill

\newpage
\pagenumbering{arabic}
\spacingset{1.7} 
\section{Introduction}\label{sec: intro}
Mendelian randomization (MR) is a widely-used statistical method in epidemiological and medical research for deducing causal relationships between exposures and outcomes in the presence of unmeasured confounders \citep{sanderson2022mendelian}. It leverages 
Single Nucleotide Polymorphisms (SNPs) as instrumental variables (IVs) to assess causal effects, circumventing the biases associated with unobserved confounders in observational studies.
In recent years, the availability of the data of large-scale genome-wide association studies (GWAS), especially publicly available GWAS summary statistics, has increased the 
popularity of MR as a powerful tool for causal inference accounting for unmeasured confounders \citep{burgess2019guidelines,zhao2020statistical}.
Standard MR methods typically use independent and/or strong-effect SNPs as IVs, which can be subject to a loss of efficiency and vulnerable to weak and invalid IVs.
We propose in this paper a GWAS summary statistics-based debiased estimating equation method (DEEM) for robust and statistically efficient MR analysis using a large number of correlated weak and invalid instruments.

Despite significant recent attention, the origin of MR methods can be traced back to nearly century-old IV methods. IV methods have been extensively studied and have yielded fruitful results in economics and statistics \citep{anderson1950asymptotic, newey2009generalized}. Classical IV methods typically require individual-level data and a small number of IVs. However, MR analysis encounters several 
challenges. First, privacy concerns often prevent the public availability of individual-level GWAS data, hindering the use of traditional IV methods. 
Second, a vast majority of GWAS SNPs have small or no effects and some of them may exhibit pleiotropic effects, thereby making them weak and invalid IVs \citep{sanderson2022mendelian}.
Third, GWAS SNPs  can be  correlated due to linkage disequilibrium (LD), which poses additional statistical hurdles.  

Many MR methods have been developed in the last ten years  
\citep{bowden2015mendelian, burgess2016combining, zhao2020statistical, ye2021debiased}.
Although individual-level GWAS data are often not publicly available, GWAS summary statistics are widely available. This has led to the development of a variety of GWAS summary statistics-based MR methods.
Among them, the most popular MR method is the IVW method \citep{burgess2013mendelian}. 
A main limitation of the IVW method is that it is vulnerable to weak and invalid IVs. Specifically, due to the weak IV problem \citep{bound1995problems}, the IVW estimator can be biased when SNPs with small effects are included in the analysis \citep{ye2021debiased}. Moreover, an SNP is an invalid IV if it has no effect on the exposure or, in the case of horizontal pleiotropy, affects the outcome through other biological pathways besides the exposure. Invalid IVs can result in the IVW estimator being biased \citep{bowden2016consistent}. 

The weak IV problem can be avoided by only including SNPs with strong effects. However, this selection strategy overlooks valuable information from many SNPs with weak or moderate effects. Genome-wide significant SNPs typically account for only a small portion of the variance in complex traits, challenging the statistical efficiency of MR methods that rely exclusively on such variants \citep{brion2013calculating}.
Furthermore, an SNP with a disproportionately large effect on the exposure may be a pleiotropic outlier that violates the IV assumptions \citep{zhu2018causal}. Such SNPs can introduce substantial bias to MR estimators that rely solely on SNPs with strong effects.

Several summary statistics-based methods have emerged to tackle challenges associated with invalid or weak IVs, primarily under the common two-sample design.   Specifically, MR-Egger \citep{bowden2015mendelian} adapts Egger regression from the meta-analysis literature \citep{egger1997bias} to adjust for the bias caused by invalid IVs. The weighted median method \citep{bowden2016consistent} and the weighted mode method \citep{hartwig2017robust} mitigate the influence of invalid IVs by utilizing the median and mode of causal effect estimates obtained from different SNPs, respectively. CIIV \citep{windmeijer2021confidence} and
cML \citep{xue2021constrained} either explicitly or implicitly compare causal effect estimates obtained from different SNPs and exclude SNPs that yield causal effect estimates deviating from the consensus among the plurality of SNPs. These methods are robust to invalid IVs under specific assumptions. However, they can be biased in the presence of weak IVs. 

RAPS \citep{zhao2020statistical}, dIVW \citep{ye2021debiased}, and pIVW \citep{xu2022novel} can accommodate invalid IVs while effectively eliminating the weak IV bias using corrections motivated by asymptotic analysis. However, these three methods employ a normal assumption on the size of pleiotropic effects to account for invalid IVs, which assume all variants are causal. This assumption is restrictive in practice, as different traits have different genetic architectures, with some being dense and some being sparse across the genome.
Moreover, while the above methods advance MR analysis, they generally assume IVs are uncorrelated, an assumption often violated due to LD among SNPs. Although LD clumping can be used to select a subset of uncorrelated SNPs, it can lead to a substantial loss in robustness and efficiency, as we will demonstrate. For instance, when focusing on variants in a single gene region with a known biological connection to the exposure  \citep{xue2023causal}, LD clumping could limit the selection to only a single SNP, which makes the analysis results highly sensitive to the validity of the selected SNP and undermines the effectiveness of methods that require multiple independent IVs such as CIIV, RAPs, dIVW, and pIVW. 

Several existing Bayesian methods can potentially incorporate correlated SNPs \citep{shapland2019bayesian,morrison2020mendelian,cheng2022mendelian}.  However, these methods rely on parametric prior distributions of the effect sizes, which can be biased if the prior distributions are misspecified, and their theoretical underpinnings have not been thoroughly validated. Meanwhile, approaches like
weighted generalized linear regression (WLR) \citep{burgess2016combining}, PCA-IVW \citep{burgess2017mendelian}, and 2ScML \citep{xue2023causal}, can handle correlated SNPs without relying on prior distributions, but face challenges with weak IVs.   GENIUS-MAWII \citep{ye2024genius} can incorporate correlated weak and invalid IVs. However, it requires access to individual-level data and  the conditional distribution of the exposure given the genotype to be heteroscedastic to identify the causal effect. The existing literature underscores a pressing need to develop robust and efficient summary data-based methods that can effectively handle correlated, weak-effect, and invalid SNPs. 

In this paper, to address limitations of  existing methods, we propose DEEM, which incorporates correlated, weak-effect and potentially invalid SNPs  into  MR analysis. Our work contributes to advancing the field of MR in several key ways. First, DEEM offers more  robust and efficient estimators than existing   methods, which typically rely on independent and/or strong-effect SNPs, by incorporating a large number of correlated, weak-effect and potentially invalid SNPs as IVs.  

Second, the proposed approach formulates MR analysis within a new general framework of summary statistics-based estimating equations (EEs). Unlike traditional EEs, which typically rely on individual-level data \citep{mccullagh1989generalized}, our summary statistics-based EEs enable flexible construction of MR estimators using summary statistics. 
The proposed general summary statistics-based EE framework facilitates a rigorous study of the theoretical properties of the DEEM estimator and several existing estimators.
We show that several existing MR methods, such as IVW, WLR, and PCA-IVW, correspond to specific forms of EEs based on summary statistics, yet suffer from biases due to non-zero expectations at the true causal effect values
when SNP effects are weak. These biases stem from the fact that the traditional individual-level data based EE framework cannot be directly applied to construct EEs using summary statistics. 

Third, to address the bias in existing methods and the challenges in applying traditional EEs to summary statistics-based MR analysis, we introduce DEEM that is based on summary statistics while yielding unbiased estimators.  
Specifically, DEEM uses a new decorrelation technique, enabling the inclusion of a large number of correlated SNPs without yielding weak IV bias, thereby enhancing the efficiency of the resulting causal effect estimator. Notably, through a diagonalization procedure, DEEM overcomes the difficulty in estimating large covariance matrices in the decorrelation step under the high-dimensional setting, making it particularly appealing with a large number of SNPs. This technique also enables DEEM to avoid the common assumption that eigenvalues of IVs' covariance matrix must be bounded away from zero \citep{newey2009generalized, kang2016instrumental, guo2018confidence, ye2024genius}—a condition often violated in the presence of highly correlated IVs. This advance widens DEEM's applicability compared to traditional methods. To our knowledge, the proposed diagonalized decorrelation procedure is new not only in summary statistics-based MR but also in general debiased EE context. We also extend DEEM  to accommodate invalid IVs with horizontal pleiotropic effects under mild conditions.

Fourth, we propose an ensemble estimator that leverages supplemental samples to further improve the efficiency of the DEEM estimator. Specifically, 
we use a supplemental exposure sample for SNP selection to mitigate the winner's curse \citep{zhao2019powerful,zhao2020statistical}.   In addition, we construct another debiased EE that incorporates supplemental data to further improve the efficiency of the DEEM estimator.  The ensemble estimator combines the solutions to two debiased EEs via a weighted average. We establish the asymptotic normality of the ensemble estimator and derive the optimal weighting that minimizes its asymptotic variance. 
We also showcase attractive flexibility of DEEM by extending its applicability from two-sample settings to one-sample settings, a domain where summary statistics-based MR methods often struggle with weak IVs.



We conduct extensive simulation studies to compare DEEM's performance with existing methods across diverse settings, which demonstrate that DEEM considerably improves efficiency and robustness over existing methods. We apply DEEM to real-world datasets to study the causal effect between body mass index (BMI) and systolic blood pressure (SBP), and between low-density lipoprotein cholesterol (LDL-C) and coronary artery disease (CAD) risk. The findings highlight DEEM's notable improvements in robustness and efficiency in these practical applications.

The rest of this paper is organized as follows. In Section \ref{sec: DEEM}, we discuss the summary statistics-based MR methods in the EE framework, and propose DEEM in the two-sample setting. In Section \ref{sec: extension}, we extend  DEEM to accommodate pleiotropic effects and the one-sample MR analysis. We report extensive simulation results and two real data analyses in Sections \ref{sec: sim} and \ref{sec: real data}, followed by discussions. Additional simulations and all proofs are provided in Appendix.

\section{The Debiased Estimating Equation Method}\label{sec: DEEM}
\subsection{Set-up}\label{subsec: set up}

Consider the objective of estimating the causal effect of an exposure $X$ on an outcome $Y$ in the presence of unmeasured confounders $\bU$. We adopt the model:
\begin{equation}
\label{eq: outcome model}
\nodisplayskips
Y = \beta_{0} + X\betax  + f(\bU)  + \epsilon_{Y},
\end{equation}
where $\betax$ is the causal effect of interest, $\bU$ is a vector of unmeasured confounders that are correlated with $X$, $f(\cdot)$ is an unknown function, and $\epsilon_{Y}$ is a mean zero error term independent of $X$ and $\bU$. While Model (\ref{eq: outcome model}) can be extended to include measured covariates, we focus on Model (\ref{eq: outcome model}) for clarity. An extension of the proposed method to allow for measured covariates is straightforward (see Appendix Section \ref{app: adjust for covariate}). 
In this scenario, traditional least squares regression of $Y$ on $X$, disregarding $\bU$, typically fails to consistently estimate $\beta_{X}$ due to the confounding effect of $\boldsymbol{U}$ on both $X$ and $Y$. MR offers a solution to this problem by employing SNPs as IVs.

Consider $\bG$ as a $d$-dimensional vector of genotypes (SNPs). Suppose
$X = \alpha_{0} + \bG^{\T}\balpha_{G} + g(\bU) + \epsilon_{X}$,
where $g(\cdot)$ is an unknown function, and $\epsilon_{X}$ is an error term with mean zero, assumed independent of $\bG$, $\bU$, and $\epsilon_{Y}$.
The standard IV assumptions include: (i) $\balpha_{G} \neq 0$, ensuring the SNPs are associated with the exposure; (ii) $\bG\Perp \bU$, meaning the SNPs are independent of the unmeasured confounders; and (iii) exclusion restriction, asserting that $\bG$ affects the outcome $Y$ only through $X$ and not directly \citep{burgess2013mendelian}. Potential violations of these IV assumptions particularly due to pleiotropic effects
will be discussed in Section \ref{subsec: DEEM p}.

\begin{figure}[h]
\centering
\begin{tikzpicture}[scale = 0.5]
\node [circle, draw=black, fill=white, inner sep=3pt, minimum size=0.5cm] (x) at (0,0) {\large $X$};
\node [circle,draw=black,fill=white,inner sep=3pt,minimum size=0.5cm] (z) at (-4, 0) { $G$};
\node [circle,draw=black,fill=white,inner sep=3pt,minimum size=0.5cm] (y) at (4,0) {\large $Y$};
\node [obs, minimum size=0.7cm] (u) at (2,2.5) {\large $U$};
\path [draw,->] (z) edge (x);
\path [draw,->,dashed] (z) edge node[ anchor=center, pos=0.5,font=\bfseries]{$\times$} (u);
\path [draw,->,dashed] (z) edge [out=-30, in=-150] node[ anchor=center, pos=0.5,font=\bfseries]{$\times$} (y);
\path [draw,->] (x) edge node[ anchor=center, above, pos=0.5,font=\bfseries] {$\betax$} (y);
\path [draw,->] (u) edge (x);
\path [draw,->] (u) edge (y);
\end{tikzpicture}
\caption{An illustration of the IV assumptions.}
\end{figure}
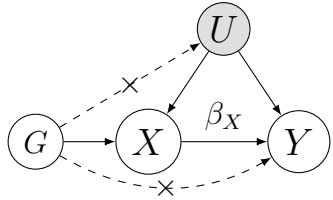

MR analysis infers the causal effect $\betax$ in the presence of unmeasured confounders $\bU$ by utilizing publicly available GWAS marginal regression coefficients of $X$ and $Y$ on $\bG$. Specifically, for $j = 1\dots, d$, let $G_{j}$ be the $j$th component of $\bG$, and $\widehat{\gamma}_{j}$ and $\widehat{\Gamma}_{j}$ the regression coefficients of $G_{j}$ under the marginal working models 
$X = \gamma_{0j} + G_{j}\gamma_{j} + \epsilon_{X,j}$ and $Y = \Gamma_{0j} + G_{j}\Gamma_{j} + \epsilon_{Y,j}$, respectively.
According to the law of large numbers, $\widehat{\gamma}_{j}$ and $\widehat{\Gamma}_{j}$ converge in probability to
$\gamma_{j} = \var(G_{j})^{-1}\cov(G_{j}, X)$
and
$\Gamma_{j} = \var(G_{j})^{-1}\cov(G_{j}, Y)$, 
respectively. Under model \eqref{eq: outcome model} and the above standard IV assumptions, one can easily show that $\Gamma_{j} = \betax \gamma_{j}$ and hence
\begin{equation}\label{eq: population equation}
\nodisplayskips
\bGamma = \betax \bgamma,
\end{equation}
where $\bgamma = (\gamma_{1},\dots,\gamma_{d})^{\T}$ and $\bGamma = (\Gamma_{1},\dots,\Gamma_{d})^{\T}$.
Equation \eqref{eq: population equation} connects the causal effect $\betax$ and the marginal regression coefficients, thereby facilitating the estimation of $\betax$. 

Note that the parameters $\gamma_{j}= \var(G_{j})^{-1}\cov(G_{j}, X)$
and
$\Gamma_{j} = \var(G_{j})^{-1}\cov(G_{j}, Y)$ 
are well-defined without assuming the linear models. Moreover, model \eqref{eq: outcome model} by itself implies $\Gamma_{j}  = \var(G_{j})^{-1}\cov(G_{j}, Y)  = \betax \var(G_{j})^{-1}\cov(G_{j}, X) = \betax \gamma_{j}$ and hence the relationship \eqref{eq: population equation} holds irrespective of whether the relationship between $X$ and $\bG$ is linear. The linear $X$ -- $\bG$ model is presented here primarily for the ease of illustration. 

In MR analysis, we infer the causal effect $\betax$ utilizing the relationship \eqref{eq: population equation} and the estimated marginal regression coefficients $\hgamma = (\widehat{\gamma}_{1}, \dots, \widehat{\gamma}_{d})^{\T}$ and $\hGamma =  (\widehat{\Gamma}_{1}, \dots, \widehat{\Gamma}_{d})^{\T}$ from publicly available GWAS summary statistics.   Our focus in this section is on a two-sample design where these coefficients are derived from two independent cohorts, which implies that $\hgamma$ and $\hGamma$ are independent. On the other hand, due to the LD among SNPs, individual components within $\hgamma$ and $\hGamma$ are often correlated.   We assume the following normal models: 
\begin{equation}\label{eq: normal model}
\nodisplayskips
\hgamma - \bgamma \sim N(\bzero, \bbSig_{\gamma})\ \ \text{and} \ \ \hGamma - \bGamma \sim N(\bzero, \bbSig_{\Gamma}),
\end{equation}
where $\bbSig_{\gamma}$ and $\bbSig_{\Gamma}$ represent the covariance matrices of $\hgamma$ and $\hGamma$. The normality assumption can be justified by the large sample size typical of modern GWAS and the central limit theory. 

When the covariance matrices $\bbSig_{\gamma}$ and  $\bbSig_{\Gamma}$ are diagonal, the jointly normal assumption \eqref{eq: normal model} reduces to the widely adopted independent bivariate normal model \citep{burgess2013mendelian,zhao2019powerful,zhao2020statistical,ye2021debiased}. The independent bivariate normal model essentially assumes the independence of SNPs used in the MR analysis \citep{zhao2020statistical}. In contrast, by allowing for non-diagonal $\bbSig_{\gamma}$ and  $\bbSig_{\Gamma}$, model \eqref{eq: normal model} can accommodate correlated SNPs, which can increase the statistical power of MR analyses by making full use of the available genetic information.

The relationship \eqref{eq: population equation} implies 
$\hGamma - \betax \hgamma$ has zero mean. In scenarios with a single IV, $\beta_X$ could traditionally be estimated by solving the equation $\widehat{\Gamma} - \beta \widehat{\gamma} = 0$, which  yields the estimator $\widehat{\Gamma} / \widehat{\gamma}$. However, MR analysis often involves numerous SNPs as candidate IVs, and hence the equation represents an overdetermined system with $d$ equations but only one unknown parameter $\beta_X$, which usually has no solution. Thus, we can combine the components of the estimating function $\hGamma - \beta \hgamma$ into a single EE $\boldsymbol{a}^{\T}(\hGamma - \beta \hgamma) = 0$ via a linear combination, where $\boldsymbol{a}$ is a $d$-dimensional vector of combination coefficients. According to Section 9.5 of \cite{mccullagh1989generalized}, the combination coefficient vector $\boldsymbol{a}=\bbSig^{-1}\bgamma$ is optimal in the sense that it minimizes the asymptotic variance of the resulting estimator $\hat{\beta}$ over different choices of $\boldsymbol{a}$, where $\bbSig = \cov(\hGamma - \betax \hgamma) = \bbSig_{\Gamma} + \betax^{2}\bbSig_{\gamma}$.   This yields the optimal EE 
\begin{equation}\label{eq: EE opt}
\bgamma^{\T}\bbSig^{-1}(\hGamma - \beta \hgamma) = 0. 
\end{equation}   
The optimal EE is useful in theoretical analysis but infeasible in practice because $\bgamma$  and $\bbSig$ are unknown. Under regularity conditions, the solution to the optimal EE is both consistent and asymptotically normal, with an asymptotic variance of $(\bgamma^{\T}\bbSig^{-1}\bgamma)^{-1}$. 

\begin{remark}
In conventional MR studies, only uncorrelated SNPs with strong effects are included in the analysis \citep{burgess2013mendelian}. Suppose $\bgamma = (\bgamma_{1}^{\T}, \bgamma_{2}^{\T})^{\T}$ and $\bGamma = (\bGamma_{1}^{\T}, \bGamma_{2}^{\T})^{\T}$, where $\bgamma_{1}$ and $\bGamma_{1}$ are the marginal regression coefficients of uncorrelated strong IVs, and $\bgamma_{2}$ and $\bGamma_{2}$ are the coefficients of the rest of the IVs. Rewrite $\bbSig$ into the corresponding block form
\[
\nodisplayskips
\bbSig = 
\begin{pmatrix}
\bbSig_{11} & \bbSig_{12}\\
\bbSig_{21} & \bbSig_{22}
\end{pmatrix}.
\]
Then the solution of the optimal EE based on only uncorrelated strong IVs has the asymptotic variance $(\bgamma_{1}^{\T}\bbSig_{11}^{-1}\bgamma_{1})^{-1}$. By some algebra, we have
\[
\nodisplayskips
\begin{aligned}
\bgamma^{\T}\bbSig^{-1}\bgamma - \bgamma_{1}^{\T}\bbSig_{11}^{-1}\bgamma_{1} 
& = (\bbSig_{21}\bbSig_{11}^{-1}\bgamma_{1}  - \bgamma_{2})^{\T}\bbSig^{22}(\bbSig_{21}\bbSig_{11}^{-1}\bgamma_{1}  - \bgamma_{2}) \geq 0,
\end{aligned}
\]
where $\bbSig^{22} = (\bbSig_{22} - \bbSig_{21}\bbSig_{11}^{-1}\bbSig_{12})^{-1}$, which implies $(\bgamma^{\T}\bbSig^{-1}\bgamma)^{-1} \leq (\bgamma_{1}^{\T}\bbSig_{11}^{-1}\bgamma_{1})^{-1}$. 
This demonstrates the potential efficiency gain of incorporating correlated weak IVs.
\end{remark}

The optimal EE is practically infeasible since $\bbSig$ and $\bgamma$ are unknown. One can estimate $\bbSig$ and $\bgamma$ to obtain a practically feasible EE. The matrix $\bbSig$ is high-dimensional and may be nearly singular \citep{zhu2017bayesian}, which makes the estimation of $\bbSig^{-1}$ problematic. Note that when $\bbSig$ is misspecified, the estimator of $\beta_X$ that solves the EE $\bgamma^{\T}\bbSig^{-1}(\hGamma - \beta \hgamma) = 0$ remains consistent \citep{mccullagh1989generalized}. Thus, we approximate $\bbSig$ by a possibly data-dependent positive definite working covariance matrix $\bbV$.
\begin{remark}\label{remark: impact V}
The choice of $\bbV$ affects the efficiency of the resulting estimator. Under regularity conditions, the solution to $\bgamma^{\T}\bbV^{-1}(\hGamma - \beta \hgamma) = 0$ has the asymptotic variance $\bgamma^{\T}\bbV^{-1}\bbSig\bbV^{-1}\bgamma/(\bgamma^{\T}\bbV^{-1}\bgamma)^{2}$ and the relative efficiency $\cos^{2}(\bbSig^{-1/2}\bgamma, \bbSig^{1/2}\bbV^{-1}\bgamma)$ with respect to the solution of the optimal EE, where $\cos(\boldsymbol{a}_{1}, \boldsymbol{a}_{2}) = \boldsymbol{a}_{1}^{\T}\boldsymbol{a}_{2}/(\|\boldsymbol{a}_{1}\|\|\boldsymbol{a}_{2}\|)$ is the cosine similarity between two vectors $\boldsymbol{a}_{1}$ and $\boldsymbol{a}_{2}$ and $\|\cdot\|$ is the Euclid norm. Thus, the efficiency loss compared to the optimal EE is determined by the discrepancy between the directions of the vectors $\bbSig^{-1/2}\bgamma$ and $\bbSig^{1/2}\bbV^{-1}\bgamma$. As a general principle, one needs to specify $\bbV$ to make the above discrepancy as small as possible using side information or domain knowledge.
\end{remark}
\begin{remark}\label{remark: choice V}
Users have the flexibility in choosing the working covariance matrix  $\bbV$ for their analysis.   As an example, we provide a specific construction of $\bbV$ adopted in our numerical implementations which leverages the block-diagonal correlation structure among SNPs (LD blocks) to approximate $\bbSig$. Suppose $K$ LD blocks are specified by the user or existing literature, though these specifications are not required to be correct. 
Let $\hR_{{\rm ref}, k}$ be the estimated LD matrix for the $k$-th LD block from a reference panel, such as the 1000 Genome European samples, for $k = 1,\dots, K$. 
As publicly available GWAS summary statistics often provide the SEs of $\widehat{\bGamma}$ and $\widehat{\bgamma}$, we use these SEs to construct a diagonal working variance matrix  $\bbV_d$
to approximate the variances of the individual  components of  $\hGamma - \betax\hgamma$ (see Appendix Section \ref{app: est Sig} for its specific form). Notice that the correlation structure of $\hGamma - \betax\hgamma$ can be approximated by the LD matrix \citep{zhu2017bayesian}. We set 
$\bbV = \bbV_d^{1/2}\hR\bbV_d^{1/2}$ where $\hR = \diag\{\hR_{1}, \dots, \hR_{K}\}$ 
is a block diagonal working correlation matrix of $\hGamma - \betax\hgamma$ with  $\hR_{k} = \hR_{{\rm ref}, k} + c \bbI_{k}$, $c$ being some positive constant, and $\bbI_{k}$ the identity matrix of a proper size. 
The term $c\bbI_{k}$ is included to ensure that $\hR_{k}$ is well-conditioned as some SNPs in a LD block may be highly correlated  and $\hR_{{\rm ref}, k}^{-1}$ may be unstable.

This construction of the working covariance matrix $\bbV$ performs well in our numerical experiments. Please refer to Supplementary Section \ref{app: construction of V} for more details of the rationale behind the above construction and Section \ref{app: choice V} for a numerical illustration for the effectiveness of incorporating the correlation among SNPs for improving the efficiency of the resulting estimator.
\end{remark}

Replacing $\bgamma^{\T}\bbSig^{-1}$ in \eqref{eq: EE opt} by $\hgamma^{\T}\bbV$, we obtain a feasible EE 
\begin{equation}\label{eq: plug-in EE}
\nodisplayskips
\hgamma^{\T}\bbV^{-1}(\hGamma - \beta\hgamma) = 0.
\end{equation}

Despite distinct original motivations, many existing MR estimators, such as the two-stage least-squares (TSLS) estimator \citep{hayashi2011econometrics}, the IVW estimator \citep{burgess2013mendelian}, the WLR estimator \citep{burgess2016combining}, and the PCA-IVW estimator \citep{burgess2017mendelian}, can be unified as the solutions of Equation \eqref{eq: plug-in EE} by adopting specific working covariance matrices $\mathbf{V}$. The connection between these estimators and Equation \eqref{eq: plug-in EE} is further elaborated in Appendix Section \ref{app: proof of sec setup}. The solution $\hbeta_{\rm PlugIn} = \hgamma^{\T}\bbV^{-1}\hGamma / \hgamma^{\T}\bbV^{-1}\hgamma$ of equation \eqref{eq: plug-in EE} is asymptotic normal and exhibits the same asymptotic variance as the solution of the oracle optimal EE when $\|\bgamma\|$ is large and $\bbV^{-1}$ is a consistent estimator for $\bbSig^{-1}$. Please refer to Appendix Section \ref{app: weak IV} for more details. However, $\hbeta_{\rm PlugIn}$ is susceptible to the weak IV bias when SNPs effects are small. Such a bias originates from the fact that
\begin{equation}\label{eq: weak IV bias}
\nodisplayskips
\begin{aligned}
E\{\hgamma^{\T}\bbV^{-1}(\hGamma - \betax \hgamma)\} 
& = \tr\{\bbV^{-1}\cov(\hgamma, \hGamma - \betax \hgamma)\} 
= -\betax\tr\{\bbV^{-1}\bbSig_{\gamma}\} \neq 0, 
\end{aligned}
\end{equation}
when $\betax\neq 0$, which introduces a non-negligible bias to $\hbeta_{\rm PlugIn}$ if $\|\bgamma\|/\tr\{\bbSig_{\gamma}\}\not\to \infty$ as shown in Appendix Section \ref{app: weak IV}. We assume $\bbV$ is fixed in the above derivation. Theoretical results involving a data-dependent working matrix $\bbV$ are in Section \ref{subsec: AN and combine}. To mitigate the above issue, instead of solving the EE \eqref{eq: plug-in EE}, we  estimate the causal effect by solving the debiased EE
\begin{equation}\label{eq: orDEEM}
\nodisplayskips
\{\hgamma - \hQ(\beta)(\hGamma - \beta \hgamma)\}^{\T} \bbV^{-1}(\hGamma - \beta \hgamma) = 0,
\end{equation}
where $ \hQ(\beta)$ is a data-dependent matrix whose  form will be provided in Section \ref{subsec: orDEEM}. We delve into the rationale behind \eqref{eq: orDEEM} in the next section.

\subsection{The Debiased Estimating Equation}\label{subsec: orDEEM}
Subsequently, we designate an EE as \emph{unbiased} if it has mean zero at the true value $\betax$ of the causal effect.
Equation \eqref{eq: weak IV bias} reveals that the weak IV problem roots in the correlation between $\hgamma$ and $\hGamma - \betax \hgamma$. This motivates us to alleviate the weak IV problem by replacing $\hgamma$ with a decorrelated estimator that is uncorrelated with $\hGamma - \betax \hgamma$. 
Notice that for any square matrix $\bbQ^{\star}$, the estimator $\hgamma - \bbQ^{\star}(\hGamma - \betax \hgamma)$ is unbiased for $\bgamma$ and satisfies
\begin{equation*}
\nodisplayskips
\cov\left\{\hgamma - \bbQ^{\star}(\hGamma - \betax \hgamma), \hGamma - \betax \hgamma\right\} = -\betax \bbSig_{\gamma} - \bbQ^{\star}(\bbSig_{\Gamma} + \betax^{2}\bbSig_{\gamma}),
\end{equation*}
which equals zero when $\bbQ^{\star} = - \betax \bbSig_{\gamma}(\bbSig_{\Gamma} + \betax^{2}\bbSig_{\gamma})^{-1}$. For any $\beta$, define the decorrelation matrix $\bbQ^{\star}(\beta) =  - \beta \bbSig_{\gamma}(\bbSig_{\Gamma} + \beta^{2}\bbSig_{\gamma})^{-1}$.  Then, $\hgamma - \bbQ^{\star}(\betax)(\hGamma - \betax \hgamma)$ is uncorrelated to $\hGamma -\betax \hgamma$
and 
\begin{equation}\label{eq: full DEEM}
\nodisplayskips
\{\hgamma - \bbQ^{\star}(\beta)(\hGamma - \beta \hgamma)\}^{\T}\bbV^{-1}(\hGamma - \beta \hgamma) = 0
\end{equation}
is an unbiased EE. Although motivated differently, the profile score in \cite{zhao2020statistical}, which assumes SNPs are independent, can be interpreted as a special case of the EE \eqref{eq: full DEEM} when $\bbSig_{\gamma}$ and $\bbSig_{\Gamma}$ are diagonal and $\bbV = \bbSig_{\Gamma} + \beta^{2}\bbSig_{\gamma}$.
The matrix $\bbQ^{\star}(\beta)$ contains unknown matrices $\bbSig_{\gamma}$ and $\bbSig_{\Gamma}$ which are estimable based on summary statistics \citep{zhu2017bayesian}. The estimation error for 
$\bbSig_{\gamma}$ and $\bbSig_{\Gamma}$	is overlooked in \cite{zhao2020statistical}, a reasonable simplification considering that these matrices are diagonal and the estimation errors for diagonal matrices are typically small \citep{cai2016estimating}.  However, in this paper, we deal with non-diagonal $\bbSig_{\gamma}$ and $\bbSig_{\Gamma}$, where the corresponding estimation error becomes significant, particularly if $d$ is large. Moreover, $\bbSig_{\gamma}$ and $\bbSig_{\Gamma}$ may be singular if the LD matrix is singular 
\citep{zhu2017bayesian}, which makes $(\bbSig_{\Gamma} + \beta^{2} \bbSig_{\gamma})^{-1}$ and hence $\bbQ^{\star}(\beta)$ hard to estimate and even undefined. 

We modify $\bbQ^{\star}(\beta)$ to address the above issues. Note that estimators for diagonal matrices typically converge faster than those for matrices without structural assumptions \citep{cai2016estimating}. Thus, we replace $\bbSig_{\gamma}$ and $\bbSig_{\Gamma}$ in $\bbQ^{\star}(\beta)$ with diagonal matrices $\bbD_{\gamma}$ and $\bbD_{\Gamma}$, which leads to the EE
\begin{equation}\label{eq: DEEM pop}
\nodisplayskips
\{\hgamma - \bbQ(\beta)(\hGamma - \beta \hgamma)\}^{\T}\bbV^{-1}(\hGamma - \beta \hgamma) = 0,
\end{equation}	
where $\bbQ(\beta) = -\beta \bbD_{\gamma}(\bbD_{\Gamma} + \beta^{2}\bbD_{\gamma})^{-1}$. However, $\hgamma - \bbQ(\betax)(\hGamma - \betax \hgamma)$ and $\hGamma -\betax \hgamma$ are correlated in general, which may introduce bias in the EE \eqref{eq: DEEM pop}. To address this, a careful design of the matrices $\bbD_{\gamma}$ and $\bbD_{\Gamma}$ is necessary. 

We begin by identifying the targets 
$\bbD_{\gamma}$ and $\bbD_{\Gamma}$	at the population level. The details of their estimation will be addressed later. A naive approach is to use the diagonal elements of $\bbSig_{\gamma}$ and $\bbSig_{\Gamma}$ to form $\bbD_{\gamma}$ and $\bbD_{\Gamma}$, but this can result in a biased EE as detailed in Appendix Section \ref{app: proof of sec orDEEM}. Define the inner product of two square matrices $\bbA_{1}$ and $\bbA_{2}$ as $\langle\bbA_{1}, \bbA_{2}\rangle_{\bbV} =\tr\{\bbA_{1}^{\T}\bbV^{-1}\bbA_{2}\}$. To ensure the unbiasedness of the EE \eqref{eq: DEEM pop}, we propose to take
$\bbD_{\gamma}$ and $\bbD_{\Gamma}$ as the projections of $\bbSig_{\gamma}$ and $\bbSig_{\Gamma}$ onto the set of all diagonal matrices $\cD$, i.e.,
\begin{equation}
\nodisplayskips
\bbD_{\gamma} = \mathop{\arg\min}_{\bbD \in \cD} \langle\bbSig_{\gamma} - \bbD, \bbSig_{\gamma} - \bbD\rangle_{\bbV}\enspace \text{and} \enspace \bbD_{\Gamma} = \mathop{\arg\min}_{\bbD \in \cD} \langle\bbSig_{\Gamma} - \bbD, \bbSig_{\Gamma} - \bbD\rangle_{\bbV}.
\label{eq:dd}
\end{equation}
With this choice of $\bbD_{\gamma}$ and $\bbD_{\Gamma}$, Proposition 1 shows that \eqref{eq: DEEM pop} is unbiased, and provides the explicit expression of $\bbD_{\gamma}$ and $\bbD_{\Gamma}$.
For any matrix $\bbA$, let $[\bbA]_{ij}$ be the $(i,j)$th element of $\bbA$.

\begin{proposition} \label{prop: D and unbiasedness}
For $j=1, \dots, d$, the $j$th diagonal elements of $\bbD_{\gamma}$ and $\bbD_{\Gamma}$ that solve (\ref{eq:dd}) equal to
$[\bbV^{-1}\bbSig_{\gamma}]_{jj}/[\bbV^{-1}]_{jj}$ and $[\bbV^{-1}\bbSig_{\Gamma}]_{jj}/[\bbV^{-1}]_{jj}$, respectively.
If $\bbD_{\Gamma} + \betax^{2}\bbD_{\gamma}$ is nonsingular, then
\[
\nodisplayskips
E\left[\{\hgamma - \bbQ(\betax)(\hGamma - \betax \hgamma)\}^{\T} \bbV^{-1}(\hGamma - \betax \hgamma)\right] = 0.
\] 
\end{proposition}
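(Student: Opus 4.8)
The plan is to treat the two claims separately. The explicit form of $\bbD_{\gamma}$ and $\bbD_{\Gamma}$ is a standard least-squares/projection computation, and the unbiasedness then reduces to a short trace identity that crucially exploits the diagonality of $\bbQ(\betax)$.

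For the first claim, recall that $\langle\bbA_{1},\bbA_{2}\rangle_{\bbV}=\tr\{\bbA_{1}^{\T}\bbV^{-1}\bbA_{2}\}$ is an inner product on $d\times d$ matrices (as $\bbV$ is positive definite), so $g(\bbD)=\langle\bbSig_{\gamma}-\bbD,\bbSig_{\gamma}-\bbD\rangle_{\bbV}$ is a strictly convex quadratic in the diagonal entries $d_{1},\dots,d_{d}$ of $\bbD$, and \eqref{eq:dd} has a unique minimizer characterized by its first-order conditions. Differentiating $g$ with respect to $d_{j}$ and using the symmetry of $\bbV^{-1}$ and of $\bbSig_{\gamma}-\bbD$ gives $\partial g/\partial d_{j}=-2[(\bbSig_{\gamma}-\bbD_{\gamma})\bbV^{-1}]_{jj}$, hence at the optimum $[\bbSig_{\gamma}\bbV^{-1}]_{jj}=[\bbD_{\gamma}]_{jj}[\bbV^{-1}]_{jj}$, i.e. $[\bbD_{\gamma}]_{jj}=[\bbSig_{\gamma}\bbV^{-1}]_{jj}/[\bbV^{-1}]_{jj}$ (with $[\bbV^{-1}]_{jj}>0$). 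Because $\bbSig_{\gamma}$ and $\bbV^{-1}$ are symmetric, $[\bbSig_{\gamma}\bbV^{-1}]_{jj}=[\bbV^{-1}\bbSig_{\gamma}]_{jj}$, which is the claimed formula; the identical argument with $\bbSig_{\Gamma}$ gives $[\bbD_{\Gamma}]_{jj}$.

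For the second claim, write $\bw=\hGamma-\betax\hgamma$ and $\bbQ=\bbQ(\betax)$; under the assumed nonsingularity of $\bbD_{\Gamma}+\betax^{2}\bbD_{\gamma}$, the matrix $\bbQ$ is diagonal, hence symmetric. From \eqref{eq: population equation} and \eqref{eq: normal model}, together with the two-sample independence of $\hgamma$ and $\hGamma$, we get $E(\bw)=\bzero$, $E(\bw\bw^{\T})=\bbSig_{\Gamma}+\betax^{2}\bbSig_{\gamma}$, and $E(\bw\hgamma^{\T})=-\betax\bbSig_{\gamma}$. Expanding the scalar inside the target expectation via the trace and taking expectations,
\[
\begin{aligned}
E\!\left[\{\hgamma-\bbQ\bw\}^{\T}\bbV^{-1}\bw\right]
&=\tr\{\bbV^{-1}E(\bw\hgamma^{\T})\}-\tr\{\bbQ^{\T}\bbV^{-1}E(\bw\bw^{\T})\}\\
&=-\betax\tr\{\bbV^{-1}\bbSig_{\gamma}\}-\tr\{\bbQ\bbV^{-1}(\bbSig_{\Gamma}+\betax^{2}\bbSig_{\gamma})\}.
\end{aligned}
\]
Now the key step: since $\bbQ$ is diagonal, $\tr\{\bbQ\bbM\}=\sum_{j}[\bbQ]_{jj}[\bbM]_{jj}$ for any $\bbM$, so the second trace equals $\sum_{j}[\bbQ]_{jj}\big([\bbV^{-1}\bbSig_{\Gamma}]_{jj}+\betax^{2}[\bbV^{-1}\bbSig_{\gamma}]_{jj}\big)$, which by the first claim is $\sum_{j}[\bbQ]_{jj}\big([\bbD_{\Gamma}]_{jj}+\betax^{2}[\bbD_{\gamma}]_{jj}\big)[\bbV^{-1}]_{jj}$. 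As $\bbQ=-\betax\bbD_{\gamma}(\bbD_{\Gamma}+\betax^{2}\bbD_{\gamma})^{-1}$ is diagonal, $[\bbQ]_{jj}\big([\bbD_{\Gamma}]_{jj}+\betax^{2}[\bbD_{\gamma}]_{jj}\big)=-\betax[\bbD_{\gamma}]_{jj}$, so the second trace collapses to $-\betax\sum_{j}[\bbD_{\gamma}]_{jj}[\bbV^{-1}]_{jj}=-\betax\sum_{j}[\bbV^{-1}\bbSig_{\gamma}]_{jj}=-\betax\tr\{\bbV^{-1}\bbSig_{\gamma}\}$, which cancels the first term and yields $0$.

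I expect the only real subtlety to be recognizing why the projection \eqref{eq:dd}, rather than the naive choice $\bbD_{\gamma}=\diag(\bbSig_{\gamma})$, is forced: the cancellation above hinges on matching the diagonals of $\bbV^{-1}\bbSig_{\gamma}$ and $\bbV^{-1}\bbSig_{\Gamma}$ (weighted by $[\bbV^{-1}]_{jj}$), which is exactly what the $\langle\cdot,\cdot\rangle_{\bbV}$-projection delivers while the naive diagonal does not. Everything else is routine bookkeeping with traces, transposes, and the two-sample independence, and since $\bbV$ is treated as fixed no probabilistic estimates are needed here.
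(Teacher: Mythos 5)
Your proposal is correct and follows essentially the same route as the paper's proof: you characterize the projection by its first-order (orthogonality) conditions to get $[\bbD_{\gamma}]_{jj}=[\bbV^{-1}\bbSig_{\gamma}]_{jj}/[\bbV^{-1}]_{jj}$, and then the unbiasedness follows from the same two ingredients the paper uses --- the diagonality of $\bbQ(\betax)$ (so only the diagonals of $\bbV^{-1}\bbSig_{\Gamma}$ and $\bbV^{-1}\bbSig_{\gamma}$ enter, and these are exactly matched by $\bbD_{\Gamma}$ and $\bbD_{\gamma}$) together with the cancellation $[\bbQ]_{jj}([\bbD_{\Gamma}]_{jj}+\betax^{2}[\bbD_{\gamma}]_{jj})=-\betax[\bbD_{\gamma}]_{jj}$. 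Your elementwise bookkeeping is just the entrywise version of the paper's trace identities, so there is nothing substantive to add.
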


The proof of Proposition \ref{prop: D and unbiasedness} is in Appendix Section \ref{app: proof of sec orDEEM}.
Here, we provide some explanations on the unbiasedness of \eqref{eq: DEEM pop}. Let $\bbI$ be the identity matrix of the proper order throughout the paper. Some algebra can establish
\begin{equation}\label{eq: inner product representation}
\nodisplayskips
\begin{aligned}
&E[\{\hgamma - \bbQ(\betax)(\hGamma - \betax \hgamma)\}^{\T}\bbV^{-1}(\hGamma - \betax \hgamma)]\\
&= 
- \langle \bbQ(\betax), \bbSig_{\Gamma} - \bbD_{\Gamma}\rangle_{\rm \bbV} - \langle \betax^{2}\bbQ(\betax) + \betax \bbI, \bbSig_{\gamma} - \bbD_{\gamma}\rangle_{\rm \bbV}.
\end{aligned}
\end{equation}
The matrices $\bbQ(\betax)$ and $\betax^{2}\bbQ(\betax) + \betax \bbI$ are both diagonal. According to the property of the projections, the residuals $\bbSig_{\gamma} - \bbD_{\gamma}$ and $\bbSig_{\Gamma} - \bbD_{\Gamma}$ are orthogonal to every diagonal matrix and hence the right-hand-side of Equation \eqref{eq: inner product representation} equals to zero. Thus, the diagonal matrix $\bbQ(\betax)$ partially decorrelates $\hgamma - \bbQ(\betax)(\hGamma - \betax \hgamma)$ and $\hGamma -\betax \hgamma$ such that the estimating equation \eqref{eq: DEEM pop} is unbiased.


Compared to \eqref{eq: plug-in EE}, the EE \eqref{eq: DEEM pop} includes an additional term $-\bbQ(\beta)(\hGamma - \beta \hgamma)$. The additional term has mean zero at the true value $\betax$, and we have $-\bbQ(\betax)(\hGamma - \betax \hgamma) = o_{P}(\|\hgamma\|)$ if the effects of SNPs are large in the sense that $\|\bgamma\|^{2} \gg \tr\{\var\{\bbQ(\betax)(\hGamma - \betax \hgamma)\}\}$. In this case, the impact of the additional term is negligible. Then, the solution of \eqref{eq: DEEM pop} has the same asymptotic properties as $\hbeta_{\rm PlugIn}$. On the other hand, the EE \eqref{eq: plug-in EE} has a non-negligible bias when $\|\bgamma\|$ is small. The term $-\bbQ(\beta)(\hGamma - \beta \hgamma)$ helps to eliminate the bias in this case.

Next, we consider the estimation of $\bbD_{\gamma}$ and $\bbD_{\Gamma}$. 
Let $\hSig_{\gamma}$ and $\hSig_{\Gamma}$ be the summary statistics-based estimators of $\bbSig_{\gamma}$ and $\bbSig_{\Gamma}$, respectively, which are constructed using $\hgamma$, $\hGamma$, the variance estimators of their components, and the LD matrix using a reference panel, such as the 1000 Genomes data. Careful attention is required to ensure that the LD structure of the reference sample aligns with that of the exposure and outcome samples. To achieve this, it is crucial to select a reference sample that shares a similar ancestry as the exposure and outcome samples.  See Appendix Section \ref{app: est Sig} for the concrete forms of $\hSig_{\gamma}$ and $\hSig_{\Gamma}$. Then,
Proposition \ref{prop: D and unbiasedness} leads to the estimators
$\hD_{\gamma} = 
\diag\left\{[\bbV^{-1}\hSig_{\gamma}]_{11}/[\bbV^{-1}]_{11}, \dots, [\bbV^{-1}\hSig_{\gamma}]_{dd}/[\bbV^{-1}]_{dd}\right\}$ and $\hD_{\Gamma} = 
\diag\left\{[\bbV^{-1}\hSig_{\Gamma}]_{11}/[\bbV^{-1}]_{11}, \dots, [\bbV^{-1}\hSig_{\Gamma}]_{dd}/[\bbV^{-1}]_{dd}\right\}$ for $\bbD_{\gamma}$ and $\bbD_{\Gamma}$, respectively. Letting 
\[
\hQ(\beta) = - \beta \hD_{\gamma}(\hD_{\Gamma} + \beta^{2} \hD_{\gamma})^{-1}
\]
in \eqref{eq: orDEEM},
we obtain a feasible summary statistics-based debiased EE. 

We numerically compare the estimators from the plug-in EE \eqref{eq: plug-in EE}, the decorrelated (Decorr) EE \eqref{eq: full DEEM} --- where unknown $\bbSig_{\Gamma}$, $\bbSig_{\gamma}$ in $\bbQ^{\star}(\beta)$ are replaced by $\hSig_{\gamma}$ and $\hSig_{\Gamma}$, respectively --- and the diagonalized-decorrelated (Diag-decorr) EE \eqref{eq: orDEEM}. Simulations were conducted under the two-sample setting described in Section \ref{sec: sim} with $\pi_{c} = 0.05$, p-value threshold $0.1$ and no LD clumping. The parameter $h_{c}$ in Figure \ref{fig: diag} represents the overall SNP effect sizes, with a higher value indicating a larger effect. Figure \ref{fig: diag} shows that the plug-in EE leads to biased estimates. Compared to the estimate from the decorrelated EE, the proposed diagonalized-decorrelated EE yields  an estimate with a smaller bias and a smaller standard error especially when SNPs have small effect sizes.
\begin{figure}
\centering
\includegraphics[scale = 0.3]{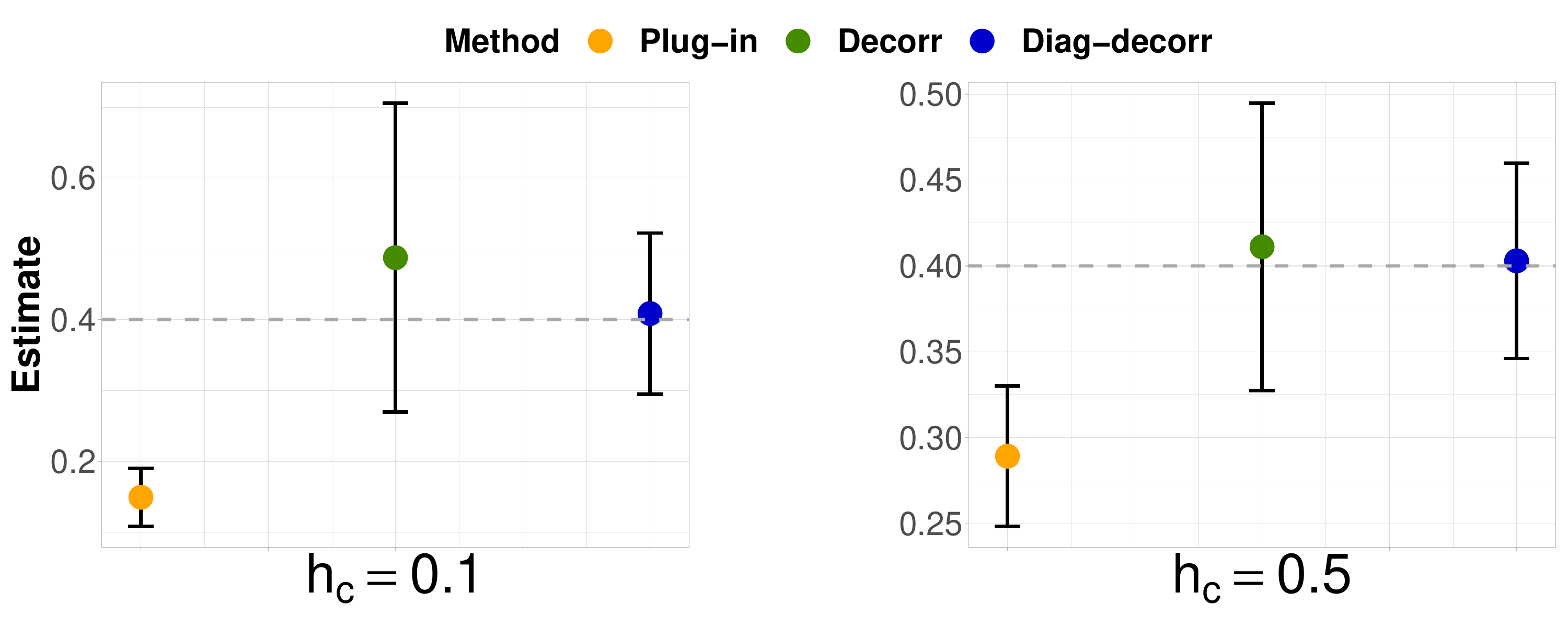}
\caption{Boxplots of the solutions of the plug-in, decorrelated and diagonalized-decorrelated EE. $\betax = 0.4$. $h_{c}$ is a parameter characterizes the effect sizes of SNPs.}\label{fig: diag}
\end{figure} 
\begin{remark}\label{remark: diag rate}
The matrices $\hD_{\gamma}$ and $\hD_{\Gamma}$ can have significantly smaller estimation errors than $\hSig_{\gamma}$ and $\hSig_{\Gamma}$, despite their dependence on $\hSig_{\gamma}$ and $\hSig_{\Gamma}$. We show in Appendix Section \ref{app: est Sig} that the convergence rate of $\|\hD_{\gamma} - \bbD_{\gamma}\|$ is $\max_{j}[\bbSig_{\gamma}]_{jj}\sqrt{\log d / n_{\rm ref}}$ under regularity conditions, where $n_{\rm ref}$ is the size of the reference sample. The rate is considerably faster than the rate of $\|\hSig_{\gamma} - \bbSig_{\gamma}\|$, which is $\|\bbSig_{\gamma}\|\sqrt{d / n_{\rm ref}}$, especially when $d$ is large. This significant difference in convergence rates underscores the effectiveness of the diagonal projection approach. Analogous results hold for $\hD_{\Gamma}$.
\end{remark}

The EE \eqref{eq: orDEEM} effectively incorporates highly correlated SNPs with weak effects. By solving \eqref{eq: orDEEM}, we can obtain an asymptotically normal estimator for $\betax$ in the absence of horizontal pleiotropic effects.  The formal theoretical result is deferred to Section \ref{subsec: AN and combine}. We extend the EE \eqref{eq: orDEEM} to accommodate pleiotropic effects in Section \ref{subsec: DEEM p}.

\subsection{ The Ensemble Estimator Using  a Supplemental Sample}\label{subsec: AN and combine}
In the context of MR analyses with millions of SNPs in the human genome, selectively including SNPs based on their relevance to the exposure $X$ is critical.  Not all SNPs contribute to $X$, and indiscriminate inclusion introduces noise, potentially overwhelming the signal. If there is a gene region with a known biological link to $X$, the debiased EE \eqref{eq: orDEEM} can be effectively applied using SNPs from that specific region.  In cases where such a gene region is not available, or when leveraging genome-wide data is preferable for enhanced efficiency, the SNP selection strategy becomes crucial for the downstream analysis \citep{zhao2020statistical}. One potential approach is to include the
$j$th SNP in the analysis only if $|\widehat{\gamma}_{j}|$ exceeds a specific threshold. However, this criterion can lead to bias, as the expectation of $\widehat{\gamma}_{j}$ may deviate from the true value $\gamma_{j}$ conditional on the selection of the $j$th SNP because the selection procedure depends on $\widehat{\gamma}_{j}$. The phenomenon is known as the ``winner's curse" \citep{zhao2019powerful,ma2023breaking}. To avoid the winner's curse, we select the SNPs based on the estimate $\tgamma$ of $\bgamma$ from an independent supplemental exposure sample, e.g., the summary statistics of a GWAS in the GWAS Catalog that is independent of the GWASs used to calculate $\hgamma$ and $\hGamma$.  See Appendix Section \ref{app: selection} for details of the selection procedure. The same strategy has been adopted in many existing MR methods \citep{zhao2019powerful,ye2021debiased}. We refer to the supplemental exposure sample as the supplemental sample for short.
For simplicity, we keep the notations in the above sections and use them to denote the corresponding quantities of the selected SNPs. For instance, $\hGamma$ now denotes the estimated marginal effects of SNPs in the selected set on the outcome $Y$ obtained from the outcome sample.

Using SNPs selected based on the supplemental sample, an asymptotic unbiased estimator of $\betax$ can be obtained using the EE \eqref{eq: orDEEM}. 
The estimate $\tgamma$ from the supplemental sample contains additional information about $\bgamma$ and can be used to construct an EE 
\begin{equation}\label{eq: disDEEM}
\nodisplayskips
\tgamma^{\T} \bbV^{-1}(\hGamma - \beta \hgamma) = 0.
\end{equation}
Although $\tgamma$ itself suffers from selection bias, the EE \eqref{eq: disDEEM} remains unbiased due to the independence between $\tgamma$ and $(\hgamma, \hGamma)$. 
Thus, the solution of \eqref{eq: disDEEM}, which has the explicit form
$\tgamma^{\T} \bbV^{-1}\hGamma/\tgamma^{\T} \bbV^{-1}\hgamma$, is an asymptotically unbiased estimator for $\betax$. 
Leveraging the EE \eqref{eq: disDEEM}, we can further exploit the supplemental sample to improve efficiency.

By solving \eqref{eq: orDEEM} and \eqref{eq: disDEEM}, two asymptotically unbiased estimators for the causal effect $\betax$ can be obtained. However, neither of them is guaranteed to be fully efficient because they are obtained from EEs that are different from the infeasible optimal EE \eqref{eq: EE opt}. We propose an ensemble estimator by  combining the solutions of EEs \eqref{eq: orDEEM} and \eqref{eq: disDEEM} through a weighted average to further improve efficiency. Specifically, denote the solutions of \eqref{eq: orDEEM} and \eqref{eq: disDEEM} by $\hbeta_{1}$ and $\hbeta_{2}$, respectively. For weights $w_{1}$ and $w_{2}$ that sum to one, the DEEM estimator for $\betax$ is defined as the ensemble estimator
\[
\nodisplayskips
\enbeta_{\rm DEEM} = w_{1} \hbeta_{1} + w_{2}\hbeta_{2}.
\]

Next, we establish the asymptotic normality of the ensemble estimator $\enbeta_{\rm DEEM}$ for a generic $\bw=(w_1,w_2)^T$. Let $n_{e}$, $n_{o}$, and $n_{s}$ be the (effective) sample sizes of the exposure sample (for $\hgamma$), the outcome sample (for $\hGamma$), and the supplemental sample (for $\tgamma$), respectively.  Assume $n_{e}/n_{o}$ is bounded away from zero and infinity. Let $n = \min\{n_{e}, n_{o}\}$. Our theoretical framework allows the dimensions of $\hgamma$ and $\hGamma$ to diverge as $n \to \infty$, permits the weights $w_{1}$ and $w_{2}$ to vary with $n$ and accommodates a data-driven working covariance matrix $\bbV$.
\begin{theorem}\label{thm: AN origin}
Under suitable regularity conditions (see Appendix Section \ref{app: notations and conds}),
Equation \eqref{eq: orDEEM} has a unique solution $\hbeta_{1}$ in a neighborhood of $\betax$ with probability approaching one, and the ensemble estimator $\enbeta_{\rm DEEM} = w_{1}\hbeta_{1} + w_{2}\hbeta_{2}$ satisfies
\[
\nodisplayskips
(\bw^{\T}\bbPsi\bw)^{-\frac{1}{2}}(\enbeta_{\rm DEEM} - \betax)		
\to N(0, 1)\]
in distribution as $n\to \infty$, where $\bw = (w_{1}, w_{2})^{\T}$ is the weight vector and $\bbPsi$ is defined in Appendix Section \ref{app: proof of sec AN and combine}. 
\end{theorem}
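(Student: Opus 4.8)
The plan is to establish asymptotic normality for each of $\hbeta_1$ and $\hbeta_2$ via a joint expansion, then deduce the statement for an arbitrary weighted average $\enbeta_{\rm DEEM} = w_1\hbeta_1 + w_2\hbeta_2$. I would first set up notation: write $\Psi_1(\beta) = \{\hgamma - \hQ(\beta)(\hGamma - \beta\hgamma)\}^{\T}\bbV^{-1}(\hGamma - \beta\hgamma)$ for the debiased estimating function in \eqref{eq: orDEEM} and $\Psi_2(\beta) = \tgamma^{\T}\bbV^{-1}(\hGamma - \beta\hgamma)$ for \eqref{eq: disDEEM}. Since \eqref{eq: disDEEM} is linear in $\beta$, $\hbeta_2$ has the closed form $\tgamma^{\T}\bbV^{-1}\hGamma / \tgamma^{\T}\bbV^{-1}\hgamma$, so its expansion is a direct ratio argument: numerator and denominator each have explicit means and variances under \eqref{eq: normal model}, and the independence of $\tgamma$ from $(\hgamma,\hGamma)$ (via the supplemental-sample design and the SNP-selection construction of $\widetilde{\cA}$) together with Proposition~1-style unbiasedness gives $\sqrt{n}(\hbeta_2 - \betax)$ a leading linear term in $\hGamma - \betax\hgamma$ with a controllable remainder, provided $\tgamma^{\T}\bbV^{-1}\bgamma$ is bounded below appropriately (a regularity condition).

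For $\hbeta_1$, I would argue existence and uniqueness of a root near $\betax$ by showing $\Psi_1(\betax) = o_P(\text{(scaling)})$ using the unbiasedness from Proposition~\ref{prop: D and unbiasedness} together with the fast convergence rate $\|\hD_\gamma - \bbD_\gamma\| = O_P(\max_j[\bbSig_\gamma]_{jj}\sqrt{\log d/n_{\rm ref}})$ from the remark (so that replacing $\bbQ(\betax)$ by $\hQ(\betax)$ contributes negligibly), and by showing the derivative $\Psi_1'(\betax)$ concentrates around a nonzero deterministic quantity $-(\bgamma^{\T}\bbV^{-1}\bgamma + \text{debias correction})$ that is bounded away from zero under the regularity conditions (this is where $\lambda_{\max}(\bbV)/\lambda_{\min}(\bbV)$ bounded and the assumed lower bound on signal strength enter). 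A standard Taylor expansion $0 = \Psi_1(\hbeta_1) = \Psi_1(\betax) + \Psi_1'(\tilde\beta)(\hbeta_1 - \betax)$ then yields $\hbeta_1 - \betax = -\Psi_1(\betax)/\Psi_1'(\betax) + $ higher order, and I would verify the second-derivative term is uniformly bounded on the neighborhood so the remainder is genuinely lower order. The CLT for the leading term $\Psi_1(\betax)$ follows from writing it as a sum over (possibly correlated) coordinates of quadratic-and-linear forms in the Gaussian vectors $\hgamma - \bgamma$ and $\hGamma - \bGamma$; after centering, the dominant contribution is a linear form plus a trace-centered quadratic form, to which a martingale or Lyapunov-type CLT for quadratic forms in Gaussians applies, with the variance being exactly the $(1,1)$ entry of the matrix $\bbPsi$.

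Assembling the two expansions, I would stack them: $\sqrt{n}(\hbeta_1 - \betax, \hbeta_2 - \betax)^{\T}$ is asymptotically a linear functional of the jointly Gaussian pair $(\hgamma - \bgamma, \hGamma - \bGamma)$ (and of $\tgamma - \bgamma$, independent of these), hence jointly asymptotically normal with some $2\times 2$ covariance $\bbPsi$; the cross term in $\bbPsi$ comes from the shared dependence of both estimators on $\hGamma - \betax\hgamma$. Then for any sequence $\bw = (w_1, w_2)^{\T}$ with $w_1 + w_2 = 1$, linearity gives $(\bw^{\T}\bbPsi\bw)^{-1/2}(\enbeta_{\rm DEEM} - \betax) \to N(0,1)$ by the Cramér–Wold device, with the caveat that one must rule out $\bw^{\T}\bbPsi\bw \to 0$ degenerately, which the regularity conditions on eigenvalues handle; allowing $\bw$ and $\bbV$ to depend on $n$ is accommodated by making every bound uniform and invoking a triangular-array CLT. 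The main obstacle I anticipate is controlling the estimating function $\Psi_1$ when $d$ diverges and the IVs are weak: specifically, showing that the quadratic-form fluctuations of $\hgamma^{\T}\bbV^{-1}(\hGamma - \betax\hgamma)$ and of the $\hQ(\betax)$-correction term (including the error from $\hD_\gamma, \hD_\Gamma$) are of the right order relative to the signal $\bgamma^{\T}\bbV^{-1}\bgamma$, and that the Lindeberg/Lyapunov condition for the quadratic-form CLT holds when $\|\bgamma\|^2/\tr\{\bbSig_\gamma\}$ need not diverge — this is precisely the regime where naive plug-in fails, so the debiasing must be shown to restore a clean CLT with the stated variance.
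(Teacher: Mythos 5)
Your proposal is correct and follows essentially the same route as the paper: linearize each estimating equation at $\betax$ (sign change plus monotonicity of the derivative for existence/uniqueness of $\hbeta_{1}$, a direct ratio argument for $\hbeta_{2}$), represent the leading term of the weighted combination as a linear form plus a trace-centered quadratic form in the jointly Gaussian errors, and invoke a CLT for such forms with variance $\bw^{\T}\bbPsi\bw$. The only cosmetic differences are that the paper proves the quadratic-form CLT by a direct characteristic-function expansion (via Knight's formula, under the condition $\tr\{\bbH^{2}\}\to\infty$ with $\|\bbH\|$ bounded, which is exactly the Lindeberg-type condition you flag) and works with the given weight vector directly rather than through Cram\'er--Wold.
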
   
The proof of Theorem \ref{thm: AN origin} can be found in Appendix Section \ref{app: proof of sec AN and combine}. The asymptotic normality of $\hbeta_{1} - \betax$ and $\hbeta_{2} - \betax$ is implied by Theorem \ref{thm: AN origin} by taking $\bw = (1, 0)^{\T}$ and $(0, 1)^{\T}$ respectively.
A simple calculation can show that the optimal weight vector minimizing the asymptotic variance of $\enbeta_{\rm DEEM}$ is 
$(\boldsymbol{1}^{\T}\bbPsi^{-1}\boldsymbol{1})^{-1}\boldsymbol{1}^{\T}\bbPsi^{-1}$ and the corresponding asymptotic variance is $(\boldsymbol{1}^{\T}\bbPsi^{-1}\boldsymbol{1})^{-1}$,
where $\boldsymbol{1} = (1, 1)^{\T}$. In Appendix Section \ref{app: proof of sec AN and combine}, we introduce a consistent estimator $\hPsi$ for $\bbPsi$. Then, the DEEM estimator with the estimated optimal weight $(\widehat{w}_{\rm  1},\widehat{w}_{\rm 2}) = (\boldsymbol{1}^{\T}\hPsi^{-1}\boldsymbol{1})^{-1}\boldsymbol{1}^{\T}\hPsi^{-1}$ is
\[
\nodisplayskips
\enbeta_{\rm DEEM}^{\rm opt} = \widehat{w}_{\rm 1}\hbeta_{1} + \widehat{w}_{\rm 2}\hbeta_{2},
\] 
where $\widehat{\bbPsi}$ is the estimator for $\bbPsi$.
It can be shown that
$(\boldsymbol{1}^{\T}\hPsi^{-1}\boldsymbol{1})^{\frac{1}{2}}(\enbeta_{\rm DEEM}^{\rm opt} - \betax) \to N(0, 1)$
in distribution according to Theorem \ref{thm: AN origin} and Slutsky's theorem, provided that $\hbeta_{1} - \betax$ and $\hbeta_{2} - \betax$ share the same convergence rate. We always use the estimated optimal weight in implementation.

\section{Extensions}\label{sec: extension}
\subsection{Accounting for Direct Effects}\label{subsec: DEEM p}
In previous sections, we adopted the classical IV  assumption that  SNPs don't directly affect the outcome variable $Y$, which may be 
violated due to pleiotropy \citep{bowden2015mendelian}, where SNPs affect both $X$ and $Y$ directly (Refer to Figure~\ref{fig: direct effect} in Appendix for an illustration). Thus, we modify the outcome model \eqref{eq: outcome model} to allow for a direct effect of $\bG$ on $Y$. Suppose 
\begin{equation}\label{eq: pleiotropy outcome model}
\nodisplayskips
Y = \beta_{0} + X\betax + \bG^{\T}\bbeta_{G} + f(\bU)  + \epsilon_{Y},
\end{equation}
where $\bG \Perp \bU$ and $\bbeta_{G}$ quantifies the direct effects of 
$\bG$ on $Y$.
Invoking the law of large numbers, the marginal regression coefficients of $X$ and $Y$ on $G_{j}$ converge in probability to
$ \gamma_{j} = \var(G_{j})^{-1}\cov(G_{j}, X)$
and
$\Gamma_{j} = \betax \var(G_{j})^{-1}\cov(G_{j},X) + \var(G_{j})^{-1}\cov(G_{j}, \bG)\bbeta_{G}$, 
respectively, for $j=1,\dots,d$. Let $\bgamma = (\gamma_{1},\dots,\gamma_{p})^{\T}$ and $\bGamma = (\Gamma_{1},\dots,\Gamma_{p})^{\T}$. Then 
\[
\nodisplayskips
\bGamma = \betax \bgamma + \bbP \bbeta_{G},
\]
where $\bbP = \diag\{\var(G_{1})^{-1}, \dots, \var(G_{p})^{-1}\}\cov(\bG)$. Here, we retain the normal assumption \eqref{eq: normal model}. The traditional IV assumption is violated in the presence of the direct effects. To identify the causal effect $\beta_X$ in the presence of the direct effects, we impose the following assumption on $\bbeta_{G}$. 
\begin{assumption}\label{ass: direct effect}
The components of $\bbeta_{G}$ are independent and identically distributed with mean zero and variance $\taup$.
\end{assumption}
Assumption \ref{ass: direct effect} allows for invalid instruments to have direct effects on $Y$ and only makes moment assumptions on the direct effect $\bbeta_G$.
It is less restrictive than the normal assumption on the components of $\bbeta_{G}$, which have often been assumed in the literature 
\citep{zhao2019powerful, ye2021debiased}. Under the normal assumption, every component of $\bbeta_{G}$ is nonzero with probability one; hence, every SNP directly affects the outcome $Y$, which is a strong assumption and is generally not practically plausible. In contrast, Assumption \ref{ass: direct effect} allows the distribution of the direct effect to have a probability mass at zero, making it more practically plausible than the normal assumption.

Under Assumption  \ref{ass: direct effect}, it can be easily shown that
$E(\hGamma - \betax\hgamma) = E(\bbP\bbeta_{G}) = 0$.  This moment condition ensures that the EE \eqref{eq: disDEEM} remains unbiased even when direct effects are present. The EE \eqref{eq: orDEEM} however requires adjustments because the variance-covariance matrix of $\hGamma$ differs with and without direct effects, which can invalidate the debiasing procedure in Section \ref{subsec: orDEEM}. Specifically, $\cov(\hGamma) = \bbSig_{\Gamma} + \taup \bbSig_{\rm p}$ contains an additional component $\taup \bbSig_{\rm p}$ when direct effects are present, where $\bbSig_{\rm p} = \bbP\bbP^{\T}$. To account for this, we modify $\bbQ(\beta)$ and introduce 
$\bbQ(\beta; \tau) = - \beta \bbD_{\gamma}(\bbD_{\Gamma} + \tau \bbD_{\rm p} + \beta^{2} \bbD_{\gamma})^{-1}$
for any $\beta$ and $\tau$, where $\bbD_{\rm p} = \mathop{\arg\min}_{\bbD \in \cD} \langle \bbSig_{\rm p} - \bbD, \bbSig_{\rm p} - \bbD\rangle_{\bbV}$. According to the proof of Proposition \ref{prop: D and unbiasedness} in Appendix Section \ref{app: proof of sec orDEEM}, it is straightforward to show that the EE
$\{\hgamma - \bbQ(\beta; \taup)(\hGamma - \beta \hgamma)\}^{\T} \bbV^{-1}(\hGamma - \beta \hgamma) = 0$ is unbiased in the presence of direct effects.

The matrix $\bbSig_{\rm p}$ can be estimated based on summary statistics and an estimate $\hD_{\rm p}$ for $\bbD_{\rm p}$ can be obtained similarly to $\hD_{\gamma}$ and $\hD_{\Gamma}$. See Appendix Section \ref{app: est Sig} for more details. The remaining issue is to estimate $\taup$. Note that the definitions of $\bbD_{\gamma}$, $\bbD_{\Gamma}$, and $\bbD_{\rm p}$ imply the moment condition
$E\{(\hGamma - \betax \hgamma)^{\T}\bbV^{-1}(\hGamma - \betax \hgamma)\} = \tr\{\bbV^{-1}(\bbD_{\Gamma} + \betax^{2}\bbD_{\gamma})\} + \taup \tr\{\bbV^{-1}\bbD_{\rm p}\}$,
which motivates us to estimate $\taup$ utilizing
$\htau_{\rm p} = \tr\{\bbV^{-1}\hD_{\rm p}\}^{-1}(\hGamma - \hbeta_{2} \hgamma)^{\T}\bbV^{-1}(\hGamma - \hbeta_{2} \hgamma) - \tr\{\bbV^{-1}\hD_{\rm p}\}^{-1}\tr\{\bbV^{-1}(\hD_{\Gamma} + \hbeta_{2}^{2}\hD_{\gamma})\}$,
where $\hbeta_{2}$ is the solution to the Equation \eqref{eq: disDEEM}.
Let
$\hQ(\beta; \tau) = - \beta \hD_{\gamma}(\hD_{\Gamma} + \tau \hD_{\rm p} + \beta^{2} \hD_{\gamma})^{-1}$.
We obtain the debiased EE
\begin{equation}\label{eq: orDEEM direct effect}
\nodisplayskips
\{\hgamma - \hQ(\beta; \htau_{\rm p})(\hGamma - \beta \hgamma)\}^{\T}\bbV^{-1}(\hGamma - \beta \hgamma) = 0.
\end{equation}

Let $\hbeta_{1, \rm p}$ be the solution of the EE \eqref{eq: orDEEM direct effect}. The causal effect $\betax$ can be estimated through the weighted average$\enbeta_{\rm DEEM, p} = w_{1} \hbeta_{1, \rm p} + w_{2}\hbeta_{2}$ in line with the approach outlined in Section \ref{subsec: AN and combine}. The weights that minimize the resulting asymptotic variance can be obtained similarly as in Section \ref{subsec: AN and combine} based on the following theorem. We term $\enbeta_{\rm DEEM, p}$ as the DEEM estimator with pleiotropy.  The asymptotic normality of $\enbeta_{\rm DEEM, p} - \betax$ can be established in the presence of pleiotropic effects.

\begin{theorem}\label{thm: AN pleiotropy}
Under Assumption \ref{ass: direct effect} and suitable regularity conditions (see Appendix Section \ref{app: notations and conds}),
Equation \eqref{eq: orDEEM direct effect} has a unique solution $\hbeta_{1, \rm p}$ in a neighborhood of $\betax$ with probability approaching one, and the ensemble estimator $\enbeta_{\rm DEEM, p} = w_{1}\hbeta_{1, \rm p} + w_{2}\hbeta_{2}$ satisfies
\[
\nodisplayskips
(\bw^{\T}\bbPsi_{\rm p}\bw)^{-\frac{1}{2}}(\enbeta_{\rm DEEM, p} - \betax)		
\to N(0, 1)\]
in distribution as $n\to \infty$, where $\bw = (w_{1}, w_{2})^{\T}$ is the weight vector and $\bbPsi_{\rm p}$ is defined in Appendix Section \ref{app: proof of secDEEM p}. 
\end{theorem}
Theorem \ref{thm: AN pleiotropy} establishes the asymptotic normality of $\enbeta_{\rm DEEM, p} - \betax$, under Assumption \ref{ass: direct effect} and suitable regularity conditions, affirming the robustness of our approach in the face of pleiotropy. The proof of this theorem, alongside the derivation of optimal weights, is detailed in Appendix Section \ref{app: proof of secDEEM p}, further underscoring the methodological advancements offered by our analysis in addressing the complexities introduced by direct SNP effects on outcomes.

\subsection{One-sample MR}\label{subsec: DEEM os}
Recently, large population-based biobanks, such as the UK biobank, have
made GWAS summary statistics available for various exposures and outcomes, which allow powerful one-sample MR studies. This section considers Model \eqref{eq: pleiotropy outcome model} in the one-sample setting where  $\hgamma$ and $\hGamma$ are obtained from the same sample. In the one-sample setting, the expectation $E(\hGamma - \betax\hgamma) = 0$ is preserved, ensuring that the EE \eqref{eq: disDEEM} maintains its unbiased nature. However, modifications are necessary to extend EE \eqref{eq: orDEEM direct effect} to this setting.

Specifically, additional terms arise in the expressions for $\cov(\hgamma, \hGamma - \betax \hgamma)$ and $\var(\hGamma - \betax \hgamma)$ due to the covariance between $\hgamma$ and $\hGamma$. We show that $\cov(\hgamma,\hGamma) = \rho_{U}\bbSig_{\gamma}$ in Appendix Section \ref{app: proof of sec os}, where $\rho_{U}$ is a scalar parameter. Simple algebra shows that $\cov(\hgamma, \hGamma - \betax \hgamma) = (\rho_{U} - \betax)\bbSig_{\gamma}$ and $\var(\hGamma - \betax \hgamma) = \bbSig_{\Gamma} + \taup \bbSig_{\rm p} + (\betax^{2} - 2\rho_{U}\betax)\bbSig_{\gamma}$. We introduce $\bbQ(\beta; \tau, \rho)$, which extends $\bbQ(\beta; \tau)$, defined as $(\rho - \beta) \bbD_{\gamma}(\bbD_{\Gamma} + \tau \bbD_{\rm p} + (\beta^{2} - 2\rho\beta) \bbD_{\gamma})^{-1}$.
Similarly to Proposition \ref{prop: D and unbiasedness}, we can show that the EE $\{\hgamma - \bbQ(\beta; \taup, \rho_{U})(\hGamma - \beta \hgamma)\}^{\T} \bbV^{-1}(\hGamma - \beta \hgamma) = 0$ is unbiased. Please refer to  Appendix Section \ref{app: proof of sec os} for the proof.

The remaining task is to estimate $\rho_{U}$ and $\taup$ based on summary statistics. 
These parameters fulfill the moment conditions $E\{\hgamma^{\T} \bbV^{-1}(\hGamma - \betax \hgamma)\}
= (\rho_{U} - \betax)\tr\{\bbV^{-1}\bbD_{\gamma}\}$
and
$E\{(\hGamma - \betax \hgamma)^{\T}\bbV^{-1}(\hGamma - \betax \hgamma)\}
= \tr\{\bbV^{-1}(\bbD_{\Gamma} + (\betax^{2} - 2\rho_{U}\betax)\bbD_{\gamma})\} + \taup \tr\{\bbV^{-1}\bbD_{\rm p}\}$, respectively.
Thus, $\rho_{U}$ and $\taup$ can be estimated by
$\hat{\rho}  = \tr\{\bbV^{-1}\hD_{\gamma}\}^{-1}\hgamma^{\T}\bbV^{-1}(\hGamma - \hbeta_{2} \hgamma) + \hbeta_{2}$ and
$\htau_{\rm os} = \tr\{\bbV^{-1}\hD_{\rm p}\}^{-1}(\hGamma - \hbeta_{2} \hgamma)^{\T}\bbV^{-1}(\hGamma - \hbeta_{2} \hgamma) - \tr\{\bbV^{-1}\hD_{\rm p}\}^{-1}\tr\{\bbV^{-1}(\hD_{\Gamma} + (\hbeta_{2}^{2} - 2 \hat{\rho}\hbeta_{2})\hD_{\gamma})\}$, respectively,
where $\hbeta_{2}$ is the solution of Equation \eqref{eq: disDEEM}.
Letting
$\hQ(\beta; \tau, \rho) = (\rho - \beta) \hD_{\gamma}(\hD_{\Gamma} + \tau \hD_{\rm p} + (\beta^{2} - 2\rho\beta) \hD_{\gamma})^{-1}$, we obtain the debiased EE
\begin{equation}\label{eq: orDEEM os}
\nodisplayskips
\{\hgamma - \hQ(\beta; \htau_{\rm os}, \hat{\rho})(\hGamma - \beta \hgamma)\}^{\T}\bbV^{-1}(\hGamma - \beta \hgamma) = 0.
\end{equation} 
Let $\hbeta_{1, \rm os}$ denote the solution to Equation \eqref{eq: orDEEM os}, where the subscript ${\rm os}$ stands for one-sample, and define $\enbeta_{\rm DEEM, os} = w_{1} \hbeta_{1, \rm os} + w_{2} \hbeta_{2}$ the DEEM estimator in the one sample setting. The estimator $\enbeta_{\rm DEEM, os}$ is asymptotically normal under appropriate regularity conditions. 
\begin{theorem}\label{thm: AN os}
Under Assumption \ref{ass: direct effect} and suitable regularity conditions (see Appendix Section \ref{app: notations and conds}),
Equation \eqref{eq: orDEEM os} has a unique solution $\hbeta_{1, \rm os}$ in a neighborhood of $\betax$ with probability approaching one, the ensemble estimator $\enbeta_{\rm DEEM, os} = w_{1}\hbeta_{1, \rm os} + w_{2}\hbeta_{2}$ satisfies
\[
\nodisplayskips
(\bw^{\T}\bbPsi_{\rm os}\bw)^{-\frac{1}{2}}(\enbeta_{\rm DEEM, os} - \betax)		
\to N(0, 1)\]
in distribution as $n\to \infty$, where $\bw = (w_{1}, w_{2})^{\T}$ is the weight vector and $\bbPsi_{\rm os}$ is defined in Appendix Section \ref{app: proof of sec os}.  
\end{theorem}
The proof of Theorem \ref{thm: AN os} is in Appendix Section \ref{app: proof of sec os}. The optimal weight can be similarly obtained as in Section \ref{subsec: AN and combine}. We term $\enbeta_{\rm os}$ as the one-sample DEEM estimator.     

\section{Simulation Studies}\label{sec: sim}
In this section, we evaluate the performance of the DEEM across various genetic architectures via simulation, leveraging genotype data with realistic LD patterns.  We simulated the genotype data using HapMap3 SNPs. See  Appendix Section \ref{app: gen genotype} for the details of generating the genotype data. Our focus was on $d = 24,648$ HapMap3 SNPs on chromosome 22. The exposure $X$ and outcome $Y$ variables were simulated using the following models: 
\[
\nodisplayskips
\begin{aligned}
&X = \bG^{\T}\balpha_{G} + U + \epsilon_{X},\\
&Y =  X\betax + \bG^{\T}\bbeta_{G} + U +  \epsilon_{Y},
\end{aligned}\]
where $\betax = 0.4$ denotes the causal effect of $X$ on $Y$, with $\epsilon_{X}$ and $\epsilon_{Y}$ being independent, each following a normal distribution $N(0, 0.5)$, and $U \sim N(0, 1)$. For $j = 1,\dots, d$, denote by $\alpha_{G,j}$ and $\beta_{G,j}$ the $j$th component of $\balpha_{G}$ and $\bbeta_{G}$, respectively.
To introduce sparsity in the genetic effects, binary indicators $I_{\alpha_{G,j}}$ and $I_{\beta_{G},j}$ were generated independently for each SNP ($j = 1,\dots, d$) following Bernoulli distributions with probabilities $\pi_{c}$ and $\pi_{d}$, corresponding to the proportions of causal SNPs affecting $X$ and SNPs exerting direct effects on $Y$, respectively.  The genetic effect sizes $\alpha_{G,j}$ and $\beta_{G,j}$ were assigned values of zero for SNPs not affecting $X$ ($I_{\alpha_{G,j}} = 0$)  or not directly affecting $Y$ ($I_{\beta_{G},j} = 0$), respectively. For SNPs affecting $X$ ($I_{\alpha_{G,j}} = 1$) and those having direct effects on $Y$ ($I_{\beta_{G},j} = 1$), effect sizes were drawn from $\alpha_{G,j} \sim N(0, h_{c}^{2} / \sum_{j=1}^{d}I_{\alpha_{G,j}})$ and $\beta_{G,j} \sim N(0, h_{d}^{2} / \sum_{i=1}^{d}I_{\beta_{G},j})$, where $h_{c}^{2}$ and $h_{d}^{2}$ denote heritability coefficients controlling the magnitude of influence on $X$ and the direct effect on $Y$, respectively. Parameters $\pi_{d} = 0.1$ and $\pi_{c} = 0.005, 0.05$, or $0.1$, along with heritability levels $h_{d}^{2} = h_{c}^{2} = 0.1$ or $0.5$, were varied to examine the DEEM's robustness under different levels of sparsity and effect sizes. The genetic effects on $X$ ($\balpha_{G}$) remained constant across simulations while the direct genetic effects on $Y$ ($\bbeta_{G}$) were generated independently in each replication. Set the sizes for the exposure, outcome, and supplemental samples at $n_{e} = 50,000$, $n_{o} = 50,000$, and $n_{s} = 20,000$, respectively.

LD clumping was performed to exclude colinear SNPs based on an  LD matrix estimated from an independent sample comprising $3000$ observations. This process utilized a block-diagonal approximation of the LD matrix, adhering to the LD blocks delineated by  \cite{berisa2016approximately}. Within the framework of the DEEM implementation, we selected SNPs exhibiting estimated absolute correlation coefficients below 0.9 through this LD clumping procedure. The working covariance matrix $\bbV$ is constructed as per Remark \ref{remark: choice V} with $c = 1$. Our analysis applied DEEM in both two-sample and one-sample contexts. In the two-sample setting, the estimator $\hbeta_{\rm DEEM, p}$ was utilized, whereas $\hbeta_{\rm DEEM, os}$ was employed for the one-sample scenario. The DEEM estimator was implemented with the estimated optimal weights in both settings. 

For comparative analysis, we included two methods capable of addressing correlated SNPs: WLR \citep{burgess2016combining} and PCA-IVW \citep{burgess2017mendelian}. 
We further compared DEEM with four other popular methods that require the included SNPs to be independent: IVW, dIVW \citep{ye2021debiased}, pIVW \citep{xu2022novel}, and RAPS \citep{zhao2020statistical}.
For IVW, dIVW, pIVW, and RAPS, SNPs were chosen to be approximately independent, with absolute correlation coefficients under 0.01, as determined by the clumping process. We applied the selection procedure described in Appendix Section \ref{app: selection}. All methods used the selected SNPs from the supplemental sample.  Notably, dIVW, pIVW, RAPS, and DEEM—designed to be robust against weak instruments—were applied with a p-value inclusion threshold of $10^{-1}$ to incorporate potentially informative weak IVs. Conversely, for IVW, WLR, and PCA-IVW, we evaluated two p-value thresholds: $10^{-1}$ and $10^{-4}$. An “s” is appended to the name of a method (e.g., IVW(s)) to signify the employment of the more stringent threshold of $10^{-4}$. 
The average number of SNPs included per method across 200 replications is documented in Table \ref{table: nSNP main} in Appendix.

We first report the results in the case where $\hgamma$ and $\hGamma$ come from two independent samples.  
Figure~\ref{fig: sim ts selected} shows the biases, standard errors (SEs), and root mean square errors (RMSEs) of different methods across $200$ replications. The performance of WLR(s) and WLR is similar to that of PCA-IVW(s) and PCA-IVW, respectively. Additionally, dIVW and pIVW exhibit similar results to RAPS across different scenarios. We exclude WLR, dIVW and pIVW methods in Figure~\ref{fig: sim ts selected} for clarity of demonstration and defer their results to Figure~\ref{fig: sim ts} in Appendix.
\begin{figure}[h]
\centering
\includegraphics[scale = 0.3]{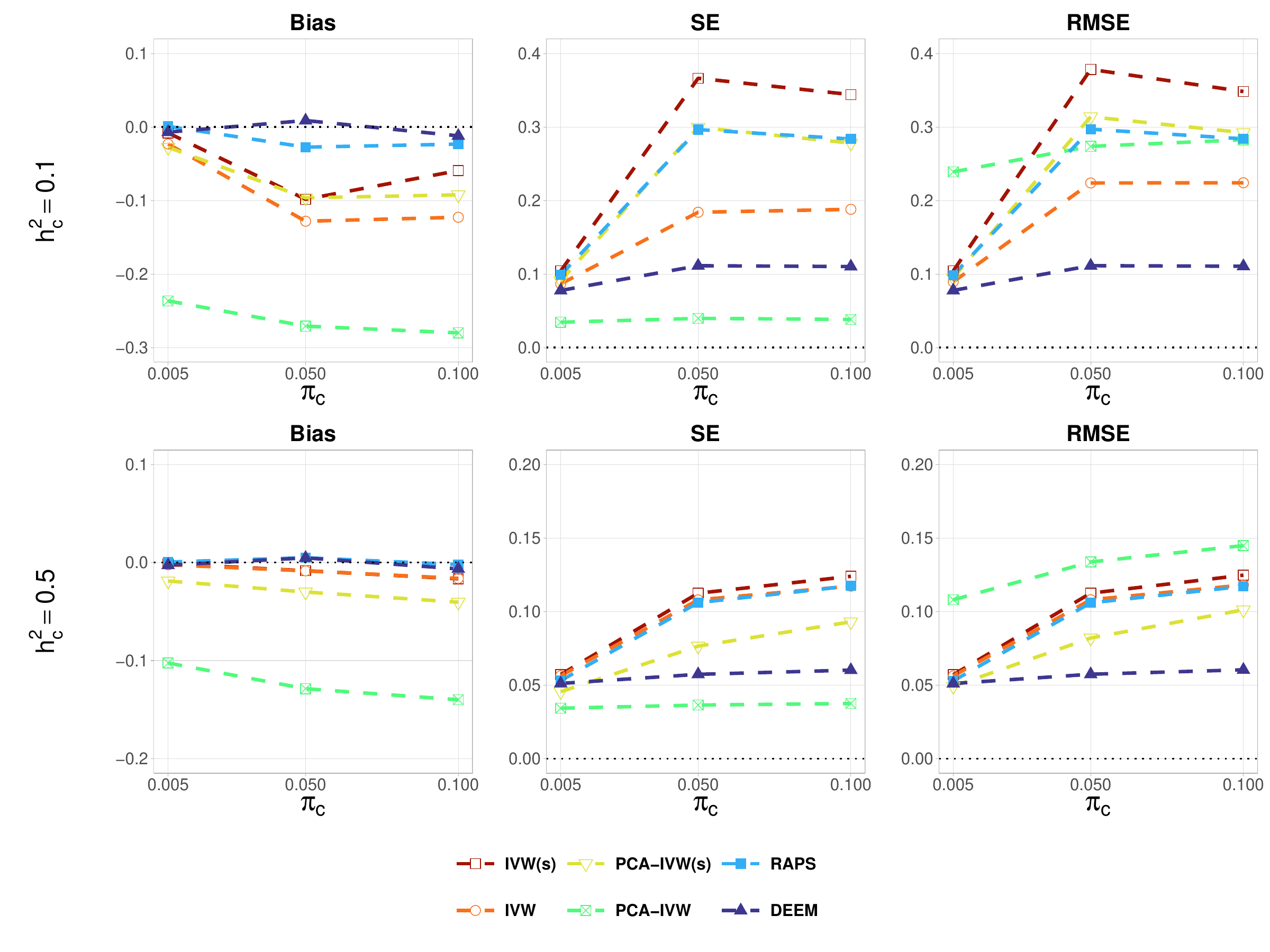}
\caption{Simulation results on the biases, SEs, and RMSEs of different methods (IVW, PCA-IVW, RAPS and DEEM) in the two-sample setting with different combinations of $(\pi_{c}, h_{c}^{2})$. The methods PCA-IVW, RAPS and DEEM use the p-value threshold $0.1$. The ``s" in the parentheses (IVW(s) and PCA-IVW(s)) means the stringent p-value threshold $10^{-4}$ is adopted. $\pi_c$ denotes the proportion of SNPs that affect $X$.}\label{fig: sim ts selected}
\end{figure}

PCA-IVW exhibits significant biases under the p-value threshold $10^{-1}$, suggesting that it is biased due to the weak IV problem. This undermines the validity of its small standard errors. IVW also suffers from weak IV bias because a loose threshold of $10^{-1}$ is adopted and weak IVs are included. By using a stricter SNP selection threshold $10^{-4}$, IVW(s) and PCA-IVW(s) mitigate the weak IV bias at the cost of some loss of efficiency. Nevertheless, bias persists in these methods when SNP effects are weak and dense.	
RAPS performs well in handling weak IV bias, but it tends to have large standard errors when $(\pi_{c}, h_{c}^{2}) = (0.05, 0.1)$ or $(0.1, 0.1)$. DEEM is robust against the weak IV bias and achieves a significant efficiency gain compared to RAPS, and the relative efficiency can be up to seven.
In addition, DEEM consistently achieves the lowest RMSE compared to all other considered methods with the exception of the case where $(\pi_{c}, h_{g}^{2}) = (0.005, 0.5)$, which corresponds to the case with extremely sparse SNP effects on $X$ and very strong effects. The RMSE reductions compared to other methods can be up to $71.7\%$. The advantage of DEEM in RMSE is particularly pronounced when the effects are weak and dense.   WLR(s) and WLR suffer from the weak IV bias especially when SNP effects are weak and dense; dIVW and pIVW are robust to weak IV bias but have large standard errors under certain scenarios. See Appendix Section \ref{app: detail sim} for the performance of these methods and some other robust MR methods.

Next, we evaluate the performance of different methods in the one-sample setting where $\hgamma$ and $\hGamma$ are from the same sample, maintaining parameter settings consistent with the two-sample setting. Results are presented in Figure~\ref{fig: sim os selected}. The results of WLR(s), WLR, dIVW, and pIVW are not included in Figure~\ref{fig: sim os selected} for the sake of presentation clarity.
See Figure~\ref{fig: sim os} in Appendix for the results of WLR(s), WLR, dIVW, and pIVW compared with DEEM.

\begin{figure}[h]
\centering
\includegraphics[scale = 0.3]{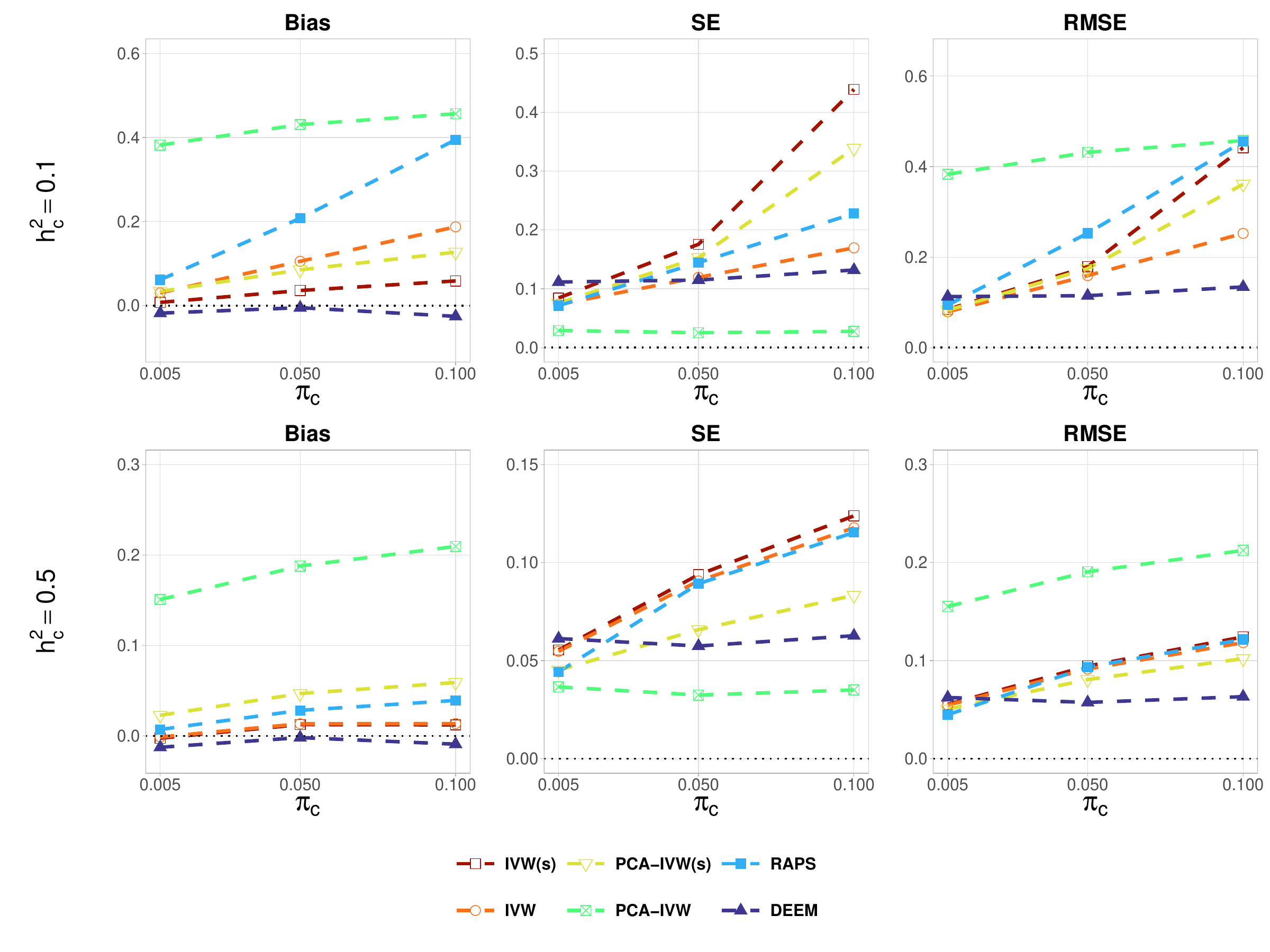}
\caption{Simulation results of biases, SEs, and RMSEs of different methods (IVW, PCA-IVW, RAPS and DEEM) in the one-sample setting with different combinations of $(\pi_{c}, h_{c}^{2})$.
The methods  PCA-IVW, RAPS and DEEM use the p-value threshold $0.1$. The ``s" in the parentheses (IVW(s) and PCA-IVW(s)) means the stringent p-value threshold $10^{-4}$ is adopted. $\pi_c$ denotes the proportion of SNPs that affect with $X$.}\label{fig: sim os selected}
\end{figure}

In the one-sample setting, the performances of IVW(s), IVW, PCA-IVW(s), and PCA-IVW closely mirror those observed in the two-sample setting. However, RAPS exhibits significant biases when $(\pi_{c}, h_{c}^{2}) = (0.05, 0.1)$ and $(0.1, 0.1)$, attributed to a lack of correction for the correlation between $\hgamma$ and $\hGamma$.    
Albeit biased, IVW(s) exhibits relatively small biases among DEEM's competitors,  likely due to its rigorous SNP selection process combining a p-value threshold with LD clumping, ensuring the inclusion of only the most strongly associated SNPs. Similar derivations to those in Appendix Section \ref{app: weak IV} can demonstrate that the correlation between $\hgamma$ and $\hGamma$ has a minor impact on the resulting estimator when the included SNPs all have strong effects. 

Conversely, DEEM stands out for its consistently low bias across varied scenarios within the one-sample setting, often surpassing IVW(s) in efficiency with a relative efficiency of up to $11$ in certain scenarios. This superior performance can be attributed to DEEM's ability to effectively include correlated SNPs with weak effects while properly accounting for the correlation between $\hgamma$ and $\hGamma$. When $\pi_{c} = 0.005$ (extremely sparse and strong SNP effects on $X$), DEEM exhibits a similar or marginally higher RMSE compared to methods that incorporate only strong or independent IVs. Notably, when $\pi_{c}$ = $0.05$ or $0.1$ and SNPs are weak, DEEM achieves a lower RMSE than these other methods. 
WLR(s) and WLR suffer from the weak IV bias especially when SNP effects are weak and dense; dIVW and pIVW also tend to be biased due to the inadequate adjustment of the correlation between $\hgamma$ and $\hGamma$. See Figure~\ref{fig: sim os} in Appendix for the performance of these methods and their comparisons with DEEM. 

Figure~\ref{fig: RE-CP} shows the relative efficiency of DEEM compared to other methods that have minimal bias,  alongside the coverage rate of the DEEM-based $95\%$ confidence interval. We compare DEEM with RAPS in the two-sample setting and IVW(s) in the one-sample setting because all other competitors have large biases under certain combinations of $\pi_{c}$ and $h_{c}$. 
\begin{figure}[h]
\centering
\subcaptionbox{}{\includegraphics[scale = 0.225]{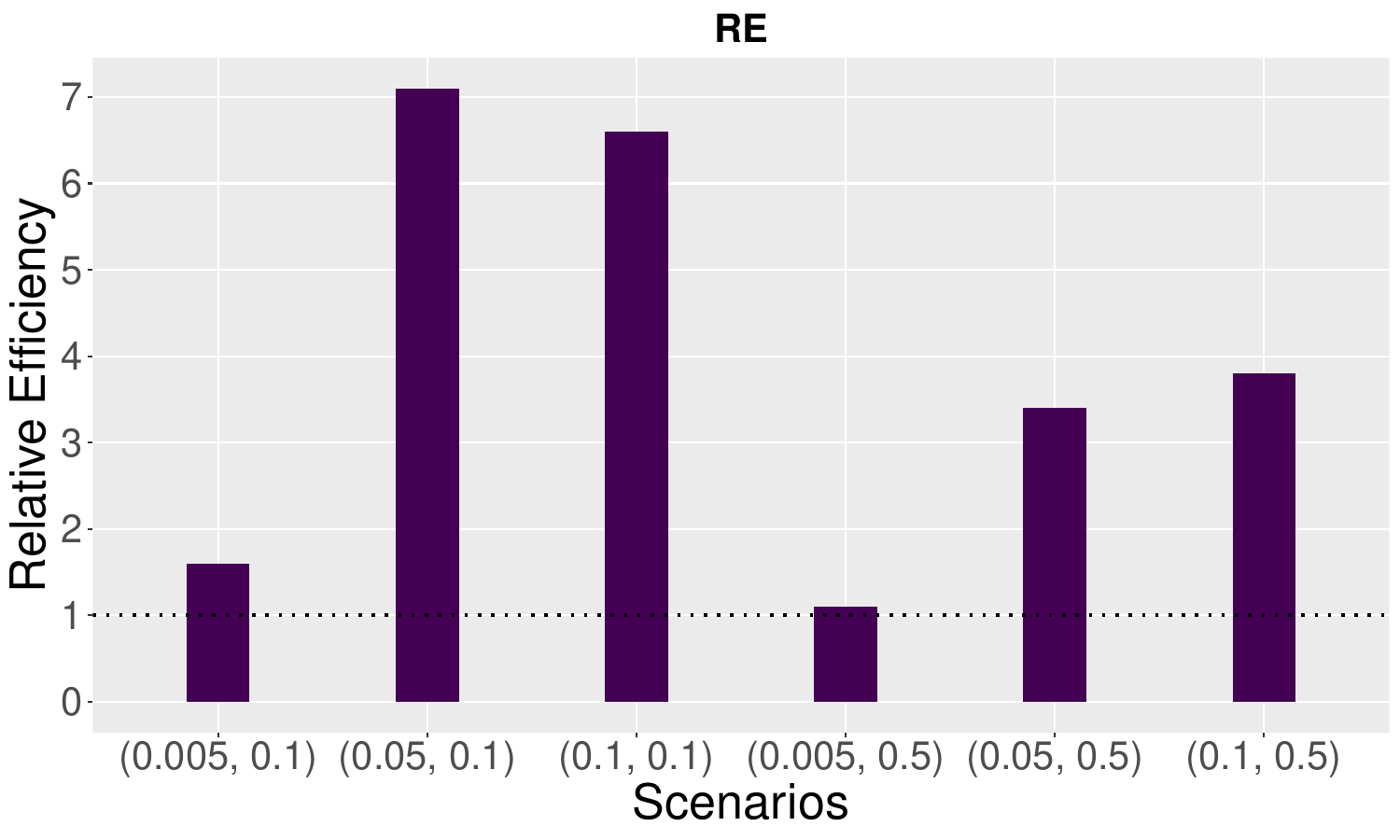}}
\subcaptionbox{}{\includegraphics[scale = 0.225]{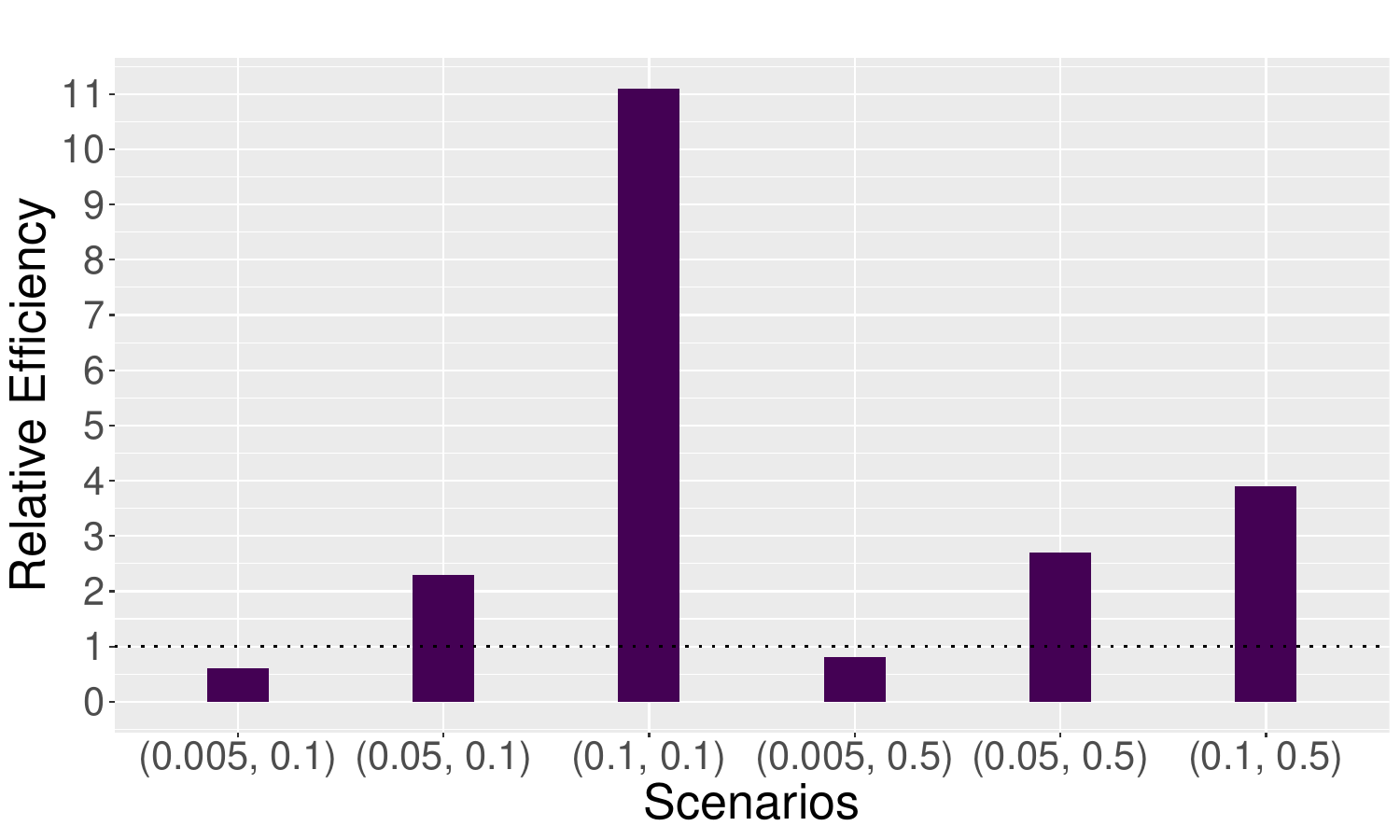}}
\subcaptionbox{}{\includegraphics[scale = 0.225]{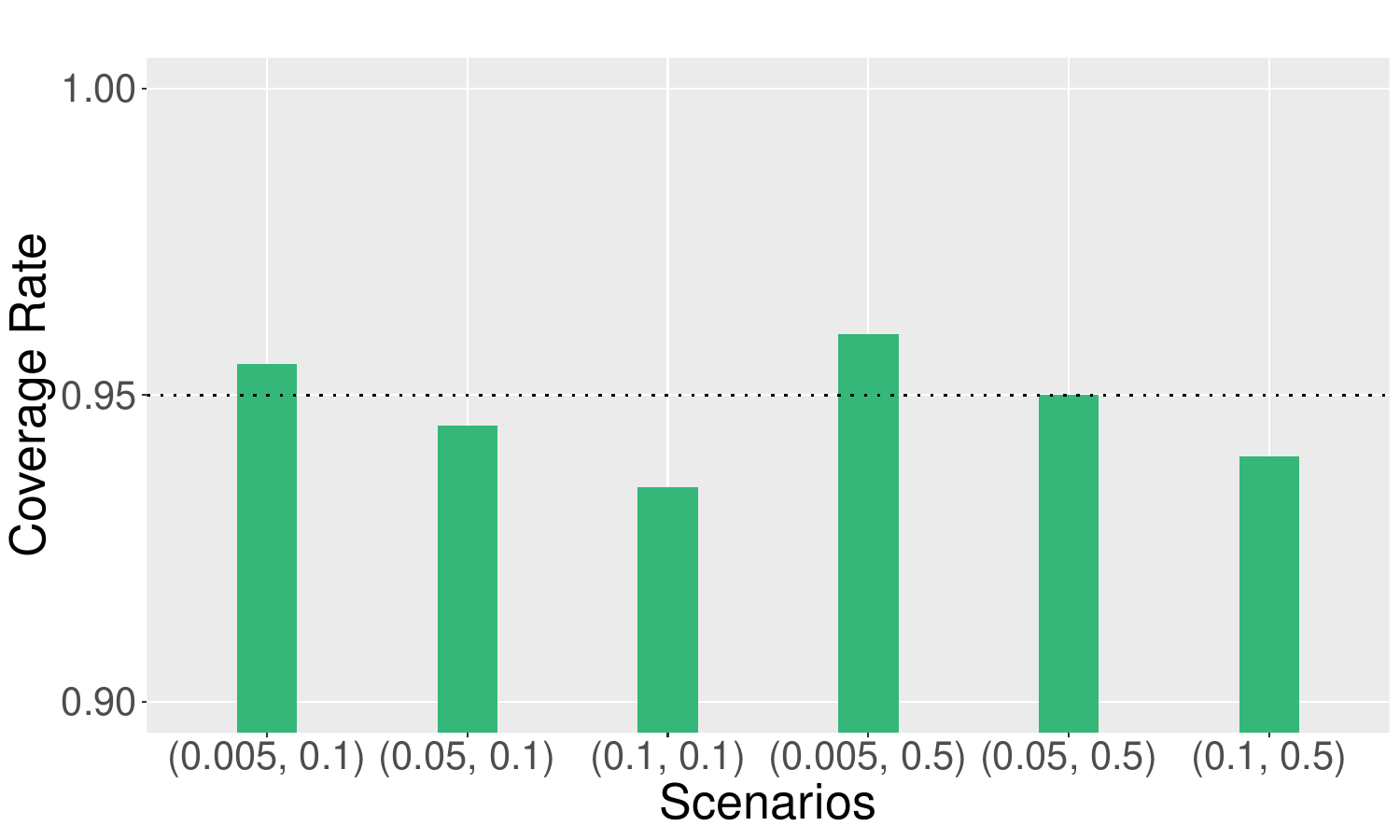}}
\subcaptionbox{}{\includegraphics[scale = 0.225]{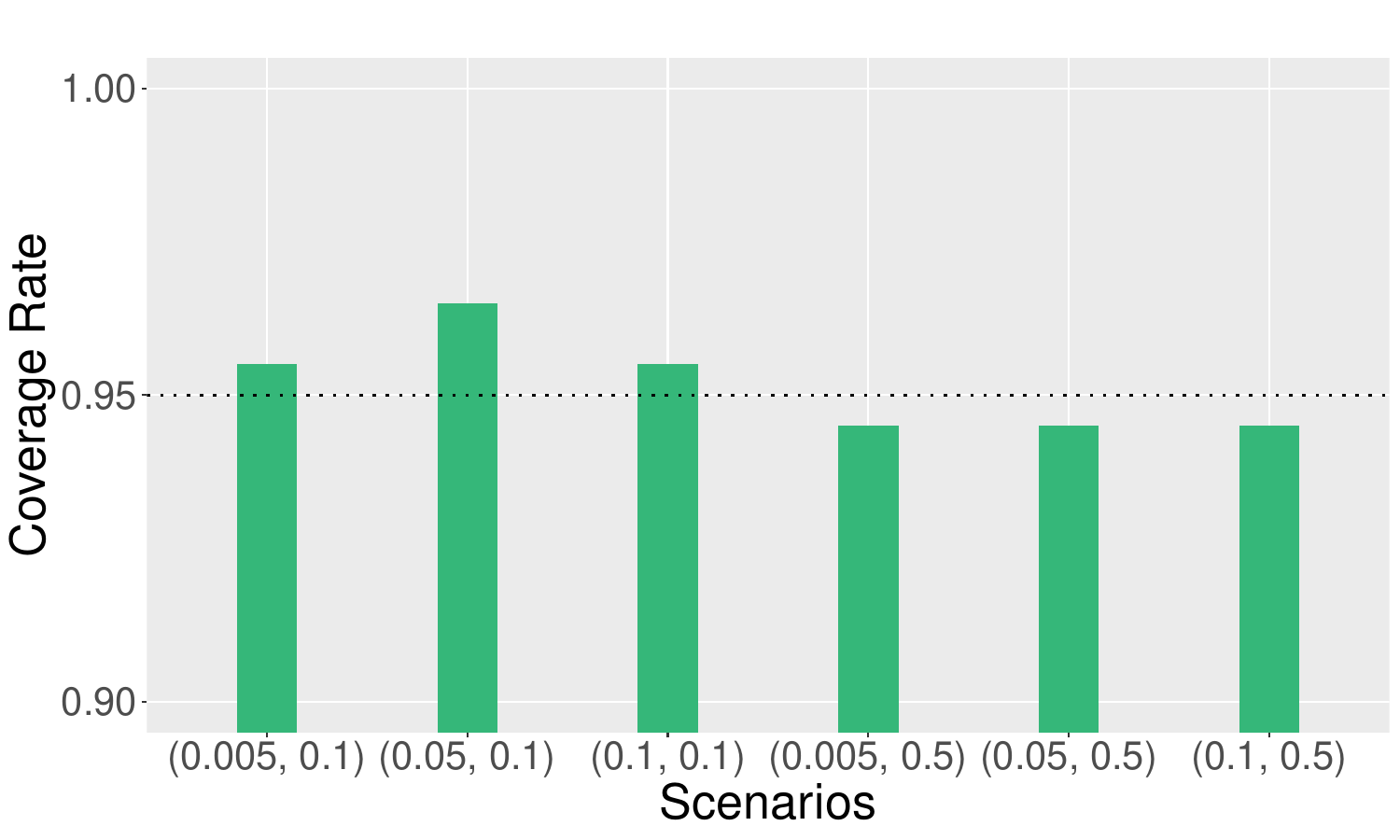}}
\caption{Simulation results of the relative efficiencies of DEEM compared with RAPS and IVW(s) and the 95\% coverage rates of DEEM under different combinations of $(\pi_{c}, h_{c}^{2})$. (a) Relative efficiencies of DEEM with respect to RAPS in the two-sample setting; (b) Relative efficiencies of DEEM with respect to IVW(s) in the one-sample setting; (c) 95\% coverage rates of DEEM in the two-sample setting; (d) 95\% coverage rates of DEEM in the one-sample setting.}\label{fig: RE-CP}
\end{figure} 
Figure~\ref{fig: RE-CP}(a) shows that DEEM consistently outperforms RAPS in terms of efficiency in the two-sample setting. Notably, the relative efficiency of DEEM can exceed six when $5\%$ or $10\%$ SNPs have effects on $X$ and their effects are weak. Figure~\ref{fig: RE-CP}(b) shows that, compared to IVW(s),  DEEM often has a much higher efficiency in the one-sample setting, with the exception of the case where $\pi_{c} = 0.005$. When SNP effects on $X$ are denser and weak, DEEM’s relative efficiency compared to IVW is notably much higher, reaching $11$ times when $(\pi_{c}, h_{c}^{2}) = (0.1,0.1)$.
Figure~\ref{fig: RE-CP}(c) and (d) demonstrate that the confidence intervals derived from DEEM have coverage rates close to the nominal level.  

In Appendix Section \ref{app: en}, we conduct a comparative analysis of the ensemble estimators $\hbeta_{\rm DEEM, p}$ and $\hbeta_{\rm DEEM, os}$ against their base estimators in the two-sample and one-sample setting, respectively.  The results reveal that the ensemble estimator consistently performs at least as well as both of its base estimators in terms of bias and SE. The ensemble estimator outperforms both of its base counterparts when $h_{c}^{2} = 0.1$ in the two sample setting. 

Appendix Section \ref{app: sim laplace} contains further simulation results with $\balpha_{G}$ and $\bbeta_{G}$ generated from the Laplace distribution. The findings in Section \ref{app: sim laplace} are consistent with those presented here.
In Appendix Section \ref{app: sim CHP}, we evaluate the robustness of various methods in the presence of pleiotropic SNPs that influence unmeasured confounders, thus violating the assumptions of both Models \eqref{eq: outcome model} and \eqref{eq: pleiotropy outcome model}. Remarkably, DEEM demonstrates strong robustness in these scenarios, outperforming many existing robust MR methods specifically designed to address such violations. This resilience can largely be attributed to DEEM's capacity to incorporate a large number of valid IVs, thereby mitigating the influence of pleiotropic SNPs.

\section{Real Data Analysis}\label{sec: real data}
\subsection{The Effect of LDL-C on CAD}\label{subsec: LDL-CAD}
In this section, we conduct a two-sample MR analysis to study the causal effect of low-density lipoprotein cholesterol (LDL-C) on cardiovascular disease (CAD) risk.   The supplemental exposure ($X)$ dataset is the GWAS for LDL-C conducted by the Global Lipids Genetics Consortium; the exposure $(X$) dataset is the GWAS for LDL-C from the UK biobank; the outcome ($Y$) dataset is the GWAS for CAD risk conducted by the CARDIoGRAMplusC4D Consortium. Details about the employed GWASs are in Appendix Section \ref{app: package and data}.
We applied all the methods included in the simulation study and used $1.08$ million Hapmap3 SNPs which appeared in all three datasets as the candidate IVs. The SNP selection process, including clumping thresholds, aligned with the procedures outlined in Section \ref{sec: sim}. The working covariance matrix $\bbV$ is constructed as per Remark \ref{remark: choice V} with $c = 1$.
Findings are summarized in Figure~\ref{fig: LDL-CAD}.

\begin{figure}
\centering
\subcaptionbox*{}{\includegraphics[scale = 0.5]{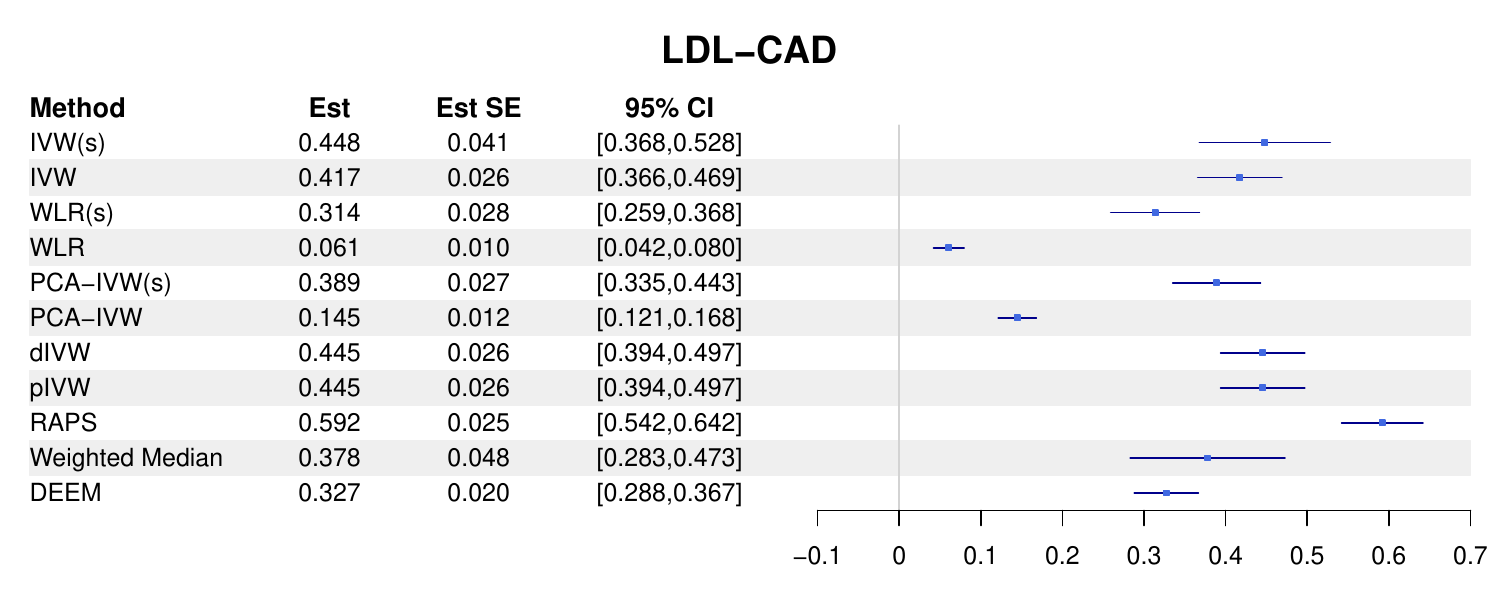}}
\caption{MR analysis of the effect of LDL-C on the CAD risk. Est: estimated causal effect; Est SE: estimated standard error; CI: confidence interval. The ``s" in the parentheses indicates the method is implemented with the stringent p-value threshold $10^{-4}$.}\label{fig: LDL-CAD}
\end{figure}

All methods indicate a positive causal effect of LDL-C on CAD risk. Notably, WLR and PCA-IVW yield significantly smaller estimates than others, potentially due to their vulnerability to weak IV bias. DEEM, on the other hand, showcases the smallest estimated standard error among all methods except for the two potentially biased methods mentioned above, highlighting its advantage in efficiency through the inclusion of correlated weak IVs. Nonetheless, DEEM's estimate is marginally lower compared to IVW(s), IVW, dIVW, and RAPS. This could stem from a varying degree of sensitivity of different methods to the pleiotropic effects of SNPs on unmeasured confounders, known as "correlated pleiotropy" \citep{cheng2022mendelian}. Specifically, MR analysis assumes that the instruments are independent of unmeasured confounders; violation of this assumption can introduce bias of varying magnitudes across different MR methods.  


To examine this issue, we consider another popular MR method, the weighted median method \citep{bowden2016consistent}, specifically designed to account for possible correlated pleiotropy under specific assumptions. 
Simulations in Appendix Section \ref{app: sim CHP} of suggest that the weighted median method and DEEM are relatively robust in the presence of correlated pleiotropy. These two methods produce similar point estimates for the causal effect of LDL-C on CAD risk. Potential uncontrolled confounders in the univariate MR analysis of LDL-C include high-density lipoprotein cholesterol and triglycerides.  Multivariable MR analysis, as detailed in \cite{rees2017extending}, adjusted for these confounders, estimated the causal effect of LDL-C as $0.375$ with 95$\%$CI $[0.292, 0.457]$. The estimate aligns closely with those obtained by the weighted median method and DEEM in Figure~\ref{fig: LDL-CAD}, underscoring their robustness. Among these two methods, DEEM exhibits a significantly smaller estimated standard error, highlighting its efficiency.   DEEM's robustness in this context can be attributed to its ability to effectively incorporate a large number of valid IVs, thereby mitigating the influence of pleiotropic SNPs that affect uncontrolled confounders.


\subsection{The Effect of BMI on SBP}\label{subsec: BMI-SBP}
High blood pressure may increase the risk of several common diseases,   such as heart disease and stroke. BMI is positively associated with blood pressure in observational studies  \citep{mokdad2003prevalence}.
Here, we perform an MR analysis to investigate the causal effect of BMI on SBP.   The supplemental exposure ($X$) dataset is the GWAS for BMI conducted by the GIANT Consortium; the exposure $(X$) dataset and the outcome $(Y$) dataset are the GWASs for the corresponding traits from the UK biobank. Details about these GWASs are deferred to Appendix Section \ref{app: package and data}. The analysis is a one-sample MR because both the exposure and outcome data come from the UK biobank.
The analysis is conducted the same way as in Section \ref{subsec: LDL-CAD}, except that here, we adopt the one-sample DEEM.
Figure~\ref{fig: BMI-SBP} shows the results. 

\begin{figure}[h]
\centering
\subcaptionbox*{}{\includegraphics[scale = 0.5]{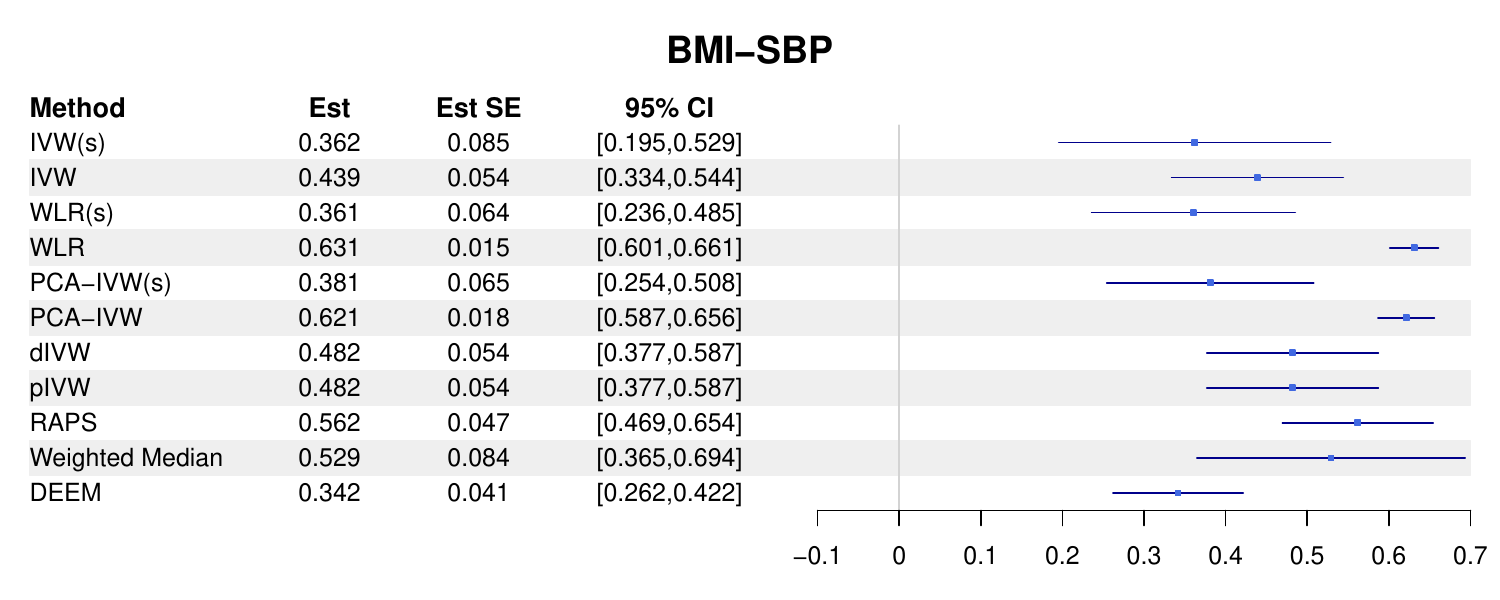}}
\caption{MR analysis of the effect of BMI on SBP. Est: estimated causal effect; Est SE: estimated standard error; CI: confidence interval. The ``s" in the parentheses indicates the method is implemented with the stringent p-value threshold $10^{-4}$.}\label{fig: BMI-SBP}
\end{figure}
All methods suggest a positive causal effect of BMI on SBP, albeit with varying effect size estimates. IVW(s) and DEEM yield similar point estimates. These two methods exhibit less bias compared to others in the simulation under the one-sample setting in Section \ref{sec: sim}. This indicates a higher reliability of these methods in the real data example compared to other methods. Compared to IVW(s), DEEM incorporates correlated SNPs with weak effects and achieves a higher efficiency.

\section{Discussions}\label{sec: discuss}
This paper proposes DEEM, a new summary statistics-based efficient and robust MR method designed to accommodate a large number of correlated, weak-effect, and invalid SNPs. Traditional MR analysis typically uses only SNPs with strong effects \citep{burgess2013mendelian,bowden2015mendelian}.  Recent advancements allow for the inclusion of SNPs with weak effects but require the included SNPs to be independent \citep{zhao2020statistical,ye2021debiased, xu2022novel}. DEEM overcomes these limitations using a general debiased EE framework that employs a carefully designed decorrelation technique, which can effectively mitigate weak IV bias in the presence of many highly correlated SNPs. DEEM can achieve a significant efficiency gain compared to existing MR methods by harnessing information from a large number of correlated, weak and invalid IVs. In addition, the DEEM ensemble estimator leverages a supplemental sample to avoid the winner's curse 
and enhance efficiency by utilizing an additional EE.

DEEM was first developed for two-sample MR analysis, adhering to standard IV assumptions. However, its flexibile EE framework allows for  extension to one-sample settings as well as  accommodation of pleiotropic effects.
These favorable characteristics make DEEM a valuable tool with potential applications across various fields, including epidemiology, genetics, and medical research, where MR is frequently employed for causal inference. We establish the asymptotic normality of the proposed estimator under various settings. Numerical simulations underscore DEEM's efficiency gain and robustness compared with the existing methods.
The general summary statistics-based debiased EE framework that underpins the development of DEEM provides a clear lens to examine the weak IV problem and offers a unified basis for motivating and analyzing various existing methods, including TSLS, IVW, WLR, PCA-IVW, and RAPS. It also potentially inspires new methodologies in areas such as multivariate MR and high-dimensional regression.

The DEEM ensemble estimator addresses the winner's curse by selecting SNPs using an independent supplemental sample. In the absence of a supplemental sample, an alternative approach is to employ randomized IV selection combined with Rao–Blackwellization \citep{ma2023breaking}. However, the effectiveness of this method hinges on accurate covariance estimation of $\hgamma$, which becomes particularly challenging when dealing with a large number of correlated SNPs. Additionally, the Rao–Blackwellization step proposed in \cite{ma2023breaking} requires an analytical characterization of the selection event that is highly complex when multiple SNPs are selected from a single LD block. It is of future research interest to  develop alternative approaches to address the winner's curse without a supplemental sample.

DEEM assumes a linear relationship between exposures and outcomes.
Future research could enhance DEEM by accommodating nonlinear exposure effects. While DEEM accounts for the direct effects of SNPs on the outcome, its estimates may be biased by correlated pleiotropy, where some SNPs are correlated with unmeasured confounders. This issue, illustrated in our simulation results in Appendix Section \ref{app: sim CHP}, can introduce bias into MR analyses. One approach to address correlated pleiotropy is to include the confounders affected by SNPs in a multivariate MR analysis \citep{rees2017extending}. However, this approach is contingent upon the observability of such confounders, which may not always be practical. Alternatively, if the proportion of SNPs affected by correlated pleiotropy is relatively small, a possible way to enhance the robustness of DEEM against pleiotropic outliers is to adopt a robust loss such as the Huber loss or the Tukey loss in constructing the estimator. We propose this as a direction for future research. 	

DEEM focuses on estimating causal effect sizes assuming a known causal graph. In practice, causal relationships among variables may be unknown. Recently, \citet{WangQZ23} address the open issue of deriving causal effect bounds with unknown causal graphs and latent variables \citep{Maathuis2009estimating} and propose an algorithm that avoids enumerating causal graphs. However, statistical inference for such scenarios remains understudied. Exploring DEEM's potential in statistical inference with an unknown causal graph represents an intriguing future direction.


\newpage

\appendix
{\noindent\Large \bf Appendix}
\vspace{20pt}

\renewcommand{\thesection}{A\arabic{section}}
\renewcommand{\thecondition}{A\arabic{condition}}
\renewcommand{\thetheorem}{A\arabic{theorem}}
\renewcommand{\thetable}{A\arabic{table}}
\renewcommand{\thefigure}{A\arabic{figure}}
\renewcommand{\theexample}{A\arabic{example}}
\renewcommand{\theproposition}{A\arabic{proposition}}
\renewcommand{\thelemma}{A\arabic{lemma}}

\section{The Weak IV Bias}\label{app: weak IV}
Compared to the infeasible EE $\bgamma^{\T}\bbSig^{-1}(\hGamma - \beta \hgamma) = 0$, Equation \eqref{eq: plug-in EE} replaces the unknown $\bgamma$ by its estimator $\hgamma$.  We find that the estimation error of $\hgamma$ can deteriorate the performance of the resulting estimator and lead the resulting estimator to be biased in the presence of weak IVs. We delve into this issue in this section.

For any matrix $\bbA$, let $\tr\{\bbA\}$ be its trace, while $\lambda_{\rm min}(\bbA)$ and $\lambda_{\rm max}(\bbA)$ represent its minimum and maximum singular values, respectively. Let $\|\cdot\|$ be the Euclid/spectral norm of a vector/matrix. 
For simplicity, we first assume $\bbV$ is known. Theoretical results involving a data-dependent working matrix $\bbV$ are postponed to Section \ref{subsec: AN and combine}.  
Throughout the analysis, we assume $\lambda_{\rm max}(\bbV) / \lambda_{\rm min}(\bbV)$ is bounded.
Upon examining the EE \eqref{eq: plug-in EE}, we observe that the estimation error in $\hgamma$ disrupts this unbiasedness, particularly when $\betax$ is nonzero. To elaborate, consider:
\begin{equation}\label{eq: weak IV bias S}
	\nodisplayskips
	\begin{aligned}
		E\{\hgamma^{\T}\bbV^{-1}(\hGamma - \betax \hgamma)\} 
		& = \tr\{\bbV^{-1}\cov(\hgamma, \hGamma - \betax \hgamma)\} 
		= -\betax\tr\{\bbV^{-1}\bbSig_{\gamma}\} \neq 0, 
	\end{aligned}
\end{equation}
where the last equality holds because $\hgamma$ and $\hGamma$ are independent in the two-sample setting. Under regularity conditions, Equation \eqref{eq: weak IV bias S} implies the solution $\hbeta_{\rm PlugIn} = \hgamma^{\T}\bbV^{-1}\hGamma / \hgamma^{\T}\bbV^{-1}\hgamma$ of Equation \eqref{eq: plug-in EE} has an asymptotic bias
\begin{equation}\label{eq: b-plug}
	\nodisplayskips
	- \frac{\betax\tr\{\bbV^{-1}\bbSig_{\gamma}\}}{\bgamma^{\T}\bbV^{-1}\bgamma + \tr\{\bbV^{-1}\bbSig_{\gamma}\}} = - \frac{\betax}{\Theta(\|\bgamma\|^{2} / \tr\{\bbSig_{\gamma}\}) + 1},
\end{equation}
where $\Theta(\|\bgamma\|^{2} / \tr\{\bbSig_{\gamma}\})$ means the term belongs to $\left[c^{-1}\|\bgamma\|^{2} / \tr\{\bbSig_{\gamma}\}, c\|\bgamma\|^{2} / \tr\{\bbSig_{\gamma}\}\right]$ for some universal constant $c > 1$. See Section \ref{app: proof of sec weak IV} for the proof of \eqref{eq: b-plug}.
The bias term in \eqref{eq: b-plug} is negligible when SNP effects are significantly larger than the variance of $\hgamma$ because it converges to zero provided $\|\bgamma\|^{2} / \tr\{\bbSig_{\gamma}\} \to \infty$.
In such cases, both the TSLS estimator \citep{hayashi2011econometrics} and the IVW estimator \citep{burgess2013mendelian}, as special instances of $\hbeta_{\rm PlugIn}$, are consistent and asymptotically normal \citep{hayashi2011econometrics, ye2021debiased}. Furthermore, if $\bbV^{-1}$ is a consistent estimator for $\bbSig^{-1}$, $\hbeta_{\rm PlugIn}$ exhibits the same asymptotic variance as the solution of the oracle optimal EE and hence is efficient. 

Nevertheless, the effects of SNPs are often small in practice, especially when the trait is highly polygenic. 
When the magnitude of $\bgamma$ is similar to the variance $\bbSig_{\gamma}$, the bias term \eqref{eq: b-plug} does not converge to zero, and hence $\hbeta_{\rm PlugIn}$ is inconsistent, which leads to the weak IV problem \citep{bound1995problems}.    
Practitioners usually select strong IVs based on criteria such as F-statistics to ensure the condition $\|\bgamma\|^{2} / \tr\{\bbSig_{\gamma}\} \to \infty$. However, this approach, as noted in Section \ref{sec: intro}, can lead to a significant efficiency loss. These observations motivate us to modify the estimation \eqref{eq: plug-in EE} to effectively address the weak IV bias.

\section{Selection of SNPs based on the Supplemental Sample}\label{app: selection}
Suppose $\tgamma = (\tilde{\gamma}_{1}, \dots, \tilde{\gamma}_{d})$ is the estimate of $\bgamma$ derived from the supplemental sample, with $\tilde{\sigma}_{\gamma j}^{2}$ denoting the estimated variance of $\tilde{\gamma}_{j}$ for $j=1, \dots, d$. Then we can calculate the p-value $\tilde{p}_{j} = 2\{1 - \Phi(|\tilde{\gamma}_{j}/\tilde{\sigma}_{\gamma j}|)\}$ of the association test for the $j$th SNP, where $\Phi$ is the distribution function of the standard normal distribution. SNPs are then filtered based on a significance threshold $c_{\rm sel}$, with those having p-values less than $c_{\rm sel}$ being retained. Specifically, let $\widetilde{\cA} = \{j: \tilde{p}_{j} \leq c_{\rm sel},\ j = 1,\dots, d\}$. Only SNPs within $\widetilde{\cA}$ are included in the subsequent MR analysis. The choice $c_{\rm sel} = 0.1$ performs well in our simulation studies and real data examples. The selection procedure aims to exclude the bulk of SNPs that do not contribute meaningful information to the MR analysis to improve efficiency. Notice that Equation \eqref{eq: population equation}, which is the cornerstone of our approach, remains valid even when some components of $\bgamma$ are equal to zero. Thus, the consistency of the selection procedure is not necessary for the consistency of the resulting estimator of $\betax$. On the other hand, including SNPs with zero effect does not gain additional information but introduces more estimation error, which may affect the statistical efficiency for estimating $\betax$. The selection procedure can mitigate this problem. 

\section{Notations and Regularity Conditions}\label{app: notations and conds}
Recall that  $\widetilde{\cA}$ is the set of SNPs selected by the supplemental sample. Hereafter, we use $q = |\widetilde{\cA}|$ to denote the number of selected SNPs. $\tgamma$, $\hgamma$ and $\hGamma$ among other quantities implicitly depend on $\widetilde{\cA}$. In this section, we allow $\bbV$ to be data dependent and assume it converges to some matrix $\bbV_{*}$ which is nonrandom conditional on $\widetilde{\cA}$. With some abuse of notations, we define $\bbD_{\gamma}$, $\bbD_{\Gamma}$, and $\bbD_{\rm p}$ based on $\bbV_{*}$ rather than $\bbV$ in the analysis of asymptotic properties, that is,
$\bbD_{\gamma} = \mathop{\arg\min}_{\bbD \in \cD} \langle\bbSig_{\gamma} - \bbD, \bbSig_{\gamma} - \bbD\rangle_{\bbV_{*}}$, $\bbD_{\Gamma} = \mathop{\arg\min}_{\bbD \in \cD} \langle\bbSig_{\Gamma} - \bbD, \bbSig_{\Gamma} - \bbD\rangle_{\bbV_{*}}$
and $\bbD_{\rm p} = \mathop{\arg\min}_{\bbD \in \cD} \langle \bbSig_{\rm p} - \bbD, \bbSig_{\rm p} - \bbD\rangle_{\bbV_{*}}$.
Recall that $n_{e}$ and $n_{o}$ are the sample sizes of the exposure, and the outcome sample that produce  $\hgamma$, and $\hGamma$, respectively. Assume $n_{e}/n_{o}$ is bounded away from zero and infinity. Let $n = \min\{n_{e}, n_{o}\}$. In the following, we use $c$ and $C$ to symbolize absolute constants which are  possibly different in different places. For two sequences of positive numbers $\{a_{n}\}_{n=1}^{\infty}$ and $\{b_{n}\}_{n=1}^{\infty}$, we say $a_{n} \asymp b_{n}$ or $a_{n} = \Theta(b_{n})$ if $c^{-1}a_{n} \leq b_{n} \leq c a_{n}$ for some constant $c > 1$. For any squared matrix $\bbA$, let $\bbA^{+}$ be the Moore-Penrose inverse of $\bbA$ and $[\bbA]_{ij}$ the $(i,j)$-th element of $\bbA$. As in Section \ref{subsec: set up}, we use $\bbSig$ to denote $\var(\hGamma - \betax\hgamma)$, which equals to $\bbSig_{\Gamma} + \taup\bbSig_{\rm p} + (\betax - 2\rho_{U}\betax)\bbSig_{\gamma}$. We have $\taup = 0$ in the setting without pleiotropy and $\rho_{U} = 0$ in the two sample setting. Throughout the analysis, we allow all the quantities to change with $n$ except for the constants in the conditions and assume $\betax$ and $\rho_{U}$ are bounded as $n \to 0$. Now we are ready to state the conditions required to establish the theoretical properties.

\begin{condition}\label{cond: matrices origin}
For some positive constants $c$ and $C$, it holds that
(i) $\tr\{\bbSig_{\rm \gamma}\} \geq cq\|\bbSig_{\rm \gamma}\|$, $\tr\{\bbSig_{\rm \Gamma}\} \geq cq\|\bbSig_{\rm \Gamma}\|$ and $\|\bbSig_{\gamma}\|, \|\bbSig_{\Gamma}\| \asymp n^{-1}$; (ii) $\lambda_{\rm max}(\bbV_{*}^{-1}) / \lambda_{\rm min}(\bbV_{*}^{-1}) \leq C$ and $\|\bbV_{*}\| \asymp n^{-1}$; (iii) $\lambda_{\rm max}(\bbD_{\gamma}) / \lambda_{\rm min}(\bbD_{\gamma}) \leq C$, $\lambda_{\rm max}(\bbD_{\Gamma}) / \lambda_{\rm min}(\bbD_{\Gamma}) \leq C$, $\|\bbD_{\gamma}\| \asymp \|\bbSig_{\gamma}\|$ and $\|\bbD_{\Gamma}\| \asymp \|\bbSig_{\Gamma}\|$;
(iv) $\max\big\{\|\hD_{\gamma} - \bbD_{\gamma}\| / \|\bbD_{\gamma}\|$, $\|\hD_{\Gamma} - \bbD_{\Gamma}\| / \|\bbD_{\Gamma}\|, \|\bbV^{-1} - \bbV_{*}^{-1}\| / \|\bbV^{-1}\|\big\} = O_{P}(n^{-\kappa})$ for some $\kappa > 0$, where $\bbD_{\gamma}$ and $\bbD_{\Gamma}$.
\end{condition}

\begin{condition}\label{cond: selection}
There are positive sequences $\{q_{1n}\}_{n=1}^{\infty}$ and $\{q_{2n}\}_{n=1}^{\infty}$ such that $P(q_{1n} \leq q \leq q_{2n}) \to 1$; $q_{1n} \to \infty$, $n^{-\kappa}\sqrt{q_{2n}}\to 0$, and $q_{2n} \leq C n\|\bgamma\|^{2}$ for some constant $C > 0$.
\end{condition}

\begin{condition}\label{cond: angle}
For some constant $c>0$, we have (i) $P\left(|\bgamma^{\T}\bbV_{*}^{-1}\tgamma| > c \sqrt{\gamma^{\T}\bbV_{*}^{-1}\gamma}\sqrt{\tgamma^{\T}\bbV_{*}^{-1}\tgamma}\right) \to 1$; (ii) $P\left(|\tgamma^{\T}\bbV_{*}^{-1}\bbSig\bbV_{*}^{-1}\tgamma| > c \|\bbSig\|\|\bbV_{*}^{-1}\tgamma\|^{2}\right) \to 1$; (iii) $\bgamma^{\T}\bbV_{*}^{-1}\bbSig\bbV_{*}^{-1}\bgamma > c\|\bbSig\|\|\bbV_{*}^{-1}\bgamma\|^{2}$.
\end{condition}
\begin{condition}\label{cond: balance origin}
$\tr\{\bbH^{2}\} \to \infty$ where $\bbH = (\bbA_{\rm H} + \bbA_{\rm H}^{\T}) / 2$ and
\[
\bbA_{\rm H} = \begin{pmatrix}
	-\betax \bbSig_{\gamma}^{\frac{1}{2}}\{\bbI + \betax \bbQ(\betax)\}^{\T}\bbV_{*}^{-1}\bbSig_{\gamma}^{\frac{1}{2}} 
	& \bbSig_{\gamma}^{\frac{1}{2}}\{\bbI + \betax \bbQ(\betax)\}^{\T}\bbV_{*}^{-1}\bbSig_{\gamma}^{\frac{1}{2}}\\
	\betax \bbSig_{\Gamma}^{\frac{1}{2}} \bbQ(\betax)^{\T}\bbV_{*}^{-1}\bbSig_{\gamma}^{\frac{1}{2}}
	& - \bbSig_{\gamma}^{\frac{1}{2}}\{\bbI + \betax \bbQ(\betax)\}^{\T}\bbV_{*}^{-1}\bbSig_{\Gamma}^{\frac{1}{2}}
\end{pmatrix}.
\]
\end{condition}
Theorem \ref{thm: AN origin} is established under Conditions \ref{cond: matrices origin}--\ref{cond: balance origin}. 
Condition \ref{cond: matrices origin} imposes some regularity conditions on $\bbSig_{\gamma}$, $\bbSig_{\Gamma}$, $\bbD_{\gamma}$, $\bbD_{\Gamma}$, $\bbV_{*}$ and their estimators.
Note that each component of $\hGamma$ and $\hgamma$ is $\sqrt{n}$-consistent, which justify the rate of the covariance matrices in Condition \ref{cond: matrices origin} (i). Condition \ref{cond: selection} restrict the number of selected SNPs. Conditions \ref{cond: matrices origin} and \ref{cond: selection} is needed to control the estimation error in the asymptotic expansions of $\hbeta_{1}$ and $\hbeta_{2}$. Condition \ref{cond: angle} (i) requires $\tgamma$ not to be nearly orthogonal to $\bgamma$, which is a reasonable requirement as $\tgamma$ is an estimate for $\bgamma$. Condition \ref{cond: angle} (ii) and (iii) require $\bbV_{*}^{-1}\tgamma$ and $\bbV_{*}^{-1}\bgamma$ not to fall very close to the null space of $\bbSig$. The requirement is mild and imposes no restriction on $\tgamma$ and $\bgamma$ if $\bbSig$ is well-conditioned. Condition \ref{cond: angle} is adopted to establish the asymptotic expansion of $\hbeta_{1}$ and $\hbeta_{2}$. Condition \ref{cond: balance origin} is a mild regularity condition that requires the trace of the semipositive definite matrix $\bbH^{2}$ with diverging dimension to diverge to infinity, which is needed to establish the asymptotic normality of $\enbeta_{\rm DEEM, p}$.

\begin{condition}\label{cond: matrices pleiotropy}
For some positive constants $c$ and $C$,
(i) $\tr\{\bbSig_{\rm p}\} \geq cq\|\bbSig_{\rm p}\|$, $\tr\{\bbSig\} \geq cq\|\bbSig\|$;
(ii) $\sqrt{\tr\{\bbSig\}} / \tr\{\bbSig_{\rm p}\} \to 0$;
(iii) $\|\bbD_{\rm p}\| \asymp \|\bbSig_{\rm p}\|$ and $\lambda_{\rm max}(\bbD_{\rm p}) / \lambda_{\rm min}(\bbD_{\rm p}) \leq C$; (iv) $\|\hD_{\rm p} - \bbD_{\rm p}\| / \|\bbD_{\rm p}\| = o_{P}(n^{-\kappa})$.
\end{condition}

\begin{condition}\label{cond: bound pleiotropy}
For some constants $C > c > 1$, (i) components of $\bbeta_{G}$ have finite fourth moment $c\taup^{2} \leq \tau_{4, \rm p} \leq C\taup^{2}$;
(ii) $\|\bgamma\| \leq C$, $\taup \leq C$, and $\taup\|\bbSig_{\rm p}\| / \|\bgamma\|^{2} \to 0$.
\end{condition}

Next we introduce a useful quantity in the following analysis.
For any $q \times q$ matrix $\bbA$ and $j = 1,\dots, q$, let $b_{j}(\bbA) = |a_{jj}| + \sum_{i\neq j}a_{ij}^2 / \{\sum_{i=1}|a_{ii }| + (\sum_{i}\sum_{k\neq i}a_{ik})^{1/2}\}$, where $a_{ik}$ is the $ik$th element of $\bbA$ for $i,k = 1,\dots,d$. Define $\scB(\bbA) = \max_{j}b_{j}(\bbA) / \sum_{j}b_{j}(\bbA)$. Here, $b_{j}(\bbA)$ can be regarded as a numerical characteristic of the scale of the $j$th column of $\bbA$. The quantity $\scB(\bbA)$ is a numerical characteristic of the evenness among different columns of $\bbA$. $\scB(\bbA)$ is close to $1/q$ if different columns are similar in size. 

Let 
\[
\begin{aligned}
&\bl_{1\rm p} = 
- \begin{pmatrix}
	-\betax\bbSig_{\gamma}^{\frac{1}{2}} & \bbSig_{\Gamma}^{\frac{1}{2}} & \sqrt{\taup}\bbP
\end{pmatrix}^{\T}\bbV_{*}^{-1}\bgamma,\\
&\bl_{2\rm p} = 
- \begin{pmatrix}
	-\betax\bbSig_{\gamma}^{\frac{1}{2}} & \bbSig_{\Gamma}^{\frac{1}{2}} & \sqrt{\taup}\bbP
\end{pmatrix}^{\T}\bbV_{*}^{-1}\tgamma,\\
&\bbL_{\rm p} = 
\begin{pmatrix}
	\bl_{1\rm p}^{\T}\bl_{1\rm p} & \bl_{1\rm p}^{\T}\bl_{2\rm p}\\
	\bl_{2\rm p}^{\T}\bl_{1\rm p} & \bl_{2\rm p}^{\T}\bl_{2\rm p}
\end{pmatrix},
\end{aligned}
\]
and
\[
\begin{aligned}
&\bbH_{\rm p} = \bbH_{\rm p, 1} + \frac{\mu_{\tau, \rm p}}{\tr\{\bbV_{*}^{-1}\bbD_{\rm p}\}}\bbH_{\rm p, 2},
\end{aligned}
\]
where $\bbH_{\rm p, 1} = (\bbA_{\rm H, p} + \bbA_{\rm H, p}^{\T}) / 2$,
\[
\bbA_{\rm H, p} = 
\begin{pmatrix}
\bbSig_{\gamma}^{\frac{1}{2}} + \betax \bbSig_{\gamma}^{\frac{1}{2}} \bbQ(\betax; \taup)^{\T}\\
-\bbSig_{\Gamma}^{\frac{1}{2}}\bbQ(\betax; \taup)^{\T}\\
-\sqrt{\taup}\bbP^{\T}\bbQ(\betax; \taup)^{\T}
\end{pmatrix}
\bbV_{*}^{-1}
\begin{pmatrix}
-\betax\bbSig_{\gamma}^{\frac{1}{2}}\\
\bbSig_{\Gamma}^{\frac{1}{2}}\\
\sqrt{\taup}\bbP^{\T}
\end{pmatrix}^{\T},
\]
$\mu_{\tau, \rm p} = - \betax\tr\{\bbV_{*}^{-1}\bbD_{\rm p}\bbD_{\rm F, p}^{-1}\bbD_{\gamma}\}$, $\bbD_{\rm F, p} = \bbD_{\Gamma} + \betax^{2}\bbD_{\gamma} + \taup \bbD_{\rm p}$
and
\[
\bbH_{\rm p, 2} = 
\begin{pmatrix}
- \betax \bbSig_{\gamma}^{\frac{1}{2}}\\
\bbSig_{\Gamma}^{\frac{1}{2}}\\
\sqrt{\taup}\bbP^{\T}
\end{pmatrix}\bbV_{*}^{-1}
\begin{pmatrix}
- \betax \bbSig_{\gamma}^{\frac{1}{2}}\\
\bbSig_{\Gamma}^{\frac{1}{2}}\\
\sqrt{\taup}\bbP^{\T}
\end{pmatrix}^{\T}.
\]
\begin{condition}\label{cond: linearity pleiotropy}
(i) $\max\{\|\bl_{1\rm p}\|_{\infty}, \|\bl_{2\rm p}\|_{\infty}\} / \boldsymbol{1}^{\T}\bbL_{\rm p}^{+}\boldsymbol{1} \to 0$; (ii) $\|\bbH_{\rm p}\| / \sqrt{\tr\{\bbH_{\rm p}^{2}\}} \to 0$, $\sqrt{\tr\{\bbH_{\rm p}^{2}\}} \asymp \sum_{j}b_{j}(\bbH_{\rm p})$ and $\scB(\bbH_{\rm p}) \to 0$.
\end{condition}
Theorem \ref{thm: AN pleiotropy} can be proved under Conditions \ref{cond: matrices origin}, \ref{cond: selection}, \ref{cond: angle}, \ref{cond: matrices pleiotropy}, \ref{cond: bound pleiotropy}, and \ref{cond: linearity pleiotropy}. Conditions \ref{cond: matrices origin}, \ref{cond: selection}, \ref{cond: selection}, \ref{cond: matrices pleiotropy}, and \ref{cond: bound pleiotropy} are adopted to establish the asymptotic expansions of $\hbeta_{1, \rm p}$ and $\hbeta_{2}$. Condition \ref{cond: linearity pleiotropy} ensure that no term in the asymptotic expansion of $\enbeta_{\rm DEEM, p}$ dominates all other terms, which is essential in establishing the asymptotic normality. 

Next, we introduce some conditions for Theorem \ref{thm: AN os}.
Let $\bbD_{\rm F, os} = \bbD_{\Gamma} + (\betax^{2} - 2 \rho_{U}\betax)\bbD_{\gamma} + \taup \bbD_{\rm p}$.
\begin{condition}\label{cond: matrices os}
For some constant $C$, $\lambda_{\rm max}(\bbD_{\rm F, os}) / \lambda_{\rm min}(\bbD_{\rm F, os}) \leq C$ and $\|\bbD_{\rm F, os}\| \asymp \|\bbSig\|$.
\end{condition}
Let 
\[
\begin{aligned}
&\bl_{1\rm os} = 
\begin{pmatrix}
	(\rho_{U} - \betax)\bbSig_{\gamma}^{\frac{1}{2}} & (\bbSig_{\Gamma} - \rho_{U}^{2}\bbSig_{\gamma})^{\frac{1}{2}} & \sqrt{\taup}\bbP
\end{pmatrix}^{\T}\bbV_{*}^{-1}
\\
&\phantom{\bl_{1\rm os} = }
\left[\left(1 + \frac{\mu_{\tau, \rm os} \mu_{\rho, \tau} + \mu_{\rho}}{\tr\{\bbV_{*}^{-1}\bbD_{\gamma}\}}\right)\bgamma + \left\{1 + \frac{\left(\mu_{\tau, \rm os} \mu_{\rho, \tau} + \mu_{\rho}\right)\mu_{\beta}}{\tgamma^{\T}\bbV_{*}^{-1}\bgamma}\right\}\tgamma\right],\\
&\bl_{2\rm os} = 
\begin{pmatrix}
	(\rho_{U} - \betax)\bbSig_{\gamma}^{\frac{1}{2}} & (\bbSig_{\Gamma} - \rho_{U}^{2}\bbSig_{\gamma})^{\frac{1}{2}} & \sqrt{\taup}\bbP
\end{pmatrix}^{\T}\bbV_{*}^{-1}\tgamma,\\
&\bbL_{\rm os} = 
\begin{pmatrix}
	\bl_{1\rm os}^{\T}\bl_{1\rm os} & \bl_{1\rm os}^{\T}\bl_{2\rm os}\\
	\bl_{2\rm os}^{\T}\bl_{1\rm os} & \bl_{2\rm os}^{\T}\bl_{2\rm os}
\end{pmatrix},
\end{aligned}
\]
and
\[
\begin{aligned}
&\bbH_{\rm os} = \bbH_{\rm os, 1} + \frac{\mu_{\tau, \rm os}}{\tr\{\bbV_{*}^{-1}\bbD_{\rm p}\}}\bbH_{\rm os, 2} + \frac{\mu_{\tau, \rm os} \mu_{\rho, \tau} + \mu_{\rho}}{\tr\{\bbV_{*}^{-1}\bbD_{\gamma}\}}\bbH_{\rm os, 3},
\end{aligned}
\]
where $\bbH_{\rm os, 1} = (\bbA_{\rm H,os,1} + \bbA_{\rm H,os,1}^{\T}) / 2$,
\[
\begin{aligned}
&\bbA_{\rm H,os,1} = 
\begin{pmatrix}
	\bbSig_{\gamma}^{\frac{1}{2}}\left(\bbI - (\rho_{U} - \betax)\bbQ(\betax; \taup, \rho_{U})\right)\\
	-(\bbSig_{\Gamma} - \rho_{U}^{2}\bbSig_{\gamma})^{\frac{1}{2}}\bbQ(\betax; \taup, \rho_{U})^{\T}\\
	-\sqrt{\taup}\bbP^{\T}\bbQ(\betax; \taup)^{\T}
\end{pmatrix}
\bbV_{*}^{-1}
\begin{pmatrix}
	(\rho_{U} - \betax)\bbSig_{\gamma}^{\frac{1}{2}}\\
	(\bbSig_{\Gamma} - \rho_{U}^{2}\bbSig_{\gamma})^{\frac{1}{2}}\\
	\sqrt{\taup}\bbP^{\T}
\end{pmatrix}^{\T},\\
\end{aligned}
\]
\[
\bbH_{\rm os, 2} = 
\begin{pmatrix}
(\rho_{U} - \betax)\bbSig_{\gamma}^{\frac{1}{2}}\\
(\bbSig_{\Gamma} - \rho_{U}^{2}\bbSig_{\gamma})^{\frac{1}{2}}\\
\sqrt{\taup}\bbP^{\T}
\end{pmatrix}
\bbV_{*}^{-1}
\begin{pmatrix}
(\rho_{U} - \betax)\bbSig_{\gamma}^{\frac{1}{2}}\\
(\bbSig_{\Gamma} - \rho_{U}^{2}\bbSig_{\gamma})^{\frac{1}{2}}\\
\sqrt{\taup}\bbP^{\T}
\end{pmatrix}^{\T},
\]
$\bbH_{\rm os, 3} = (\bbA_{\rm H,os,3} + \bbA_{\rm H,os,3}^{\T}) / 2$,
\[
\bbA_{\rm H,os,3} = 
\begin{pmatrix}
\bbSig_{\gamma}^{\frac{1}{2}}\\
\mathrm{0}\\
\mathrm{0}
\end{pmatrix}
\bbV_{*}^{-1}
\begin{pmatrix}
(\rho_{U} - \betax)\bbSig_{\gamma}^{\frac{1}{2}}\\
(\bbSig_{\Gamma} - \rho_{U}^{2}\bbSig_{\gamma})^{\frac{1}{2}}\\
\sqrt{\taup}\bbP^{\T}
\end{pmatrix}^{\T},
\]
and    
\[
\begin{aligned}
&\mu_{\beta} = - \tr\{\bbV_{*}^{-1}\bbD_{\gamma}\}^{-1}\bgamma^{\T}\bbV_{*}^{-1}\bgamma,\\
&\mu_{\tau, \rm os} = (\rho_{U} - \betax) \tr\{\bbV_{*}^{-1}\bbD_{\rm p}\bbD_{\rm F, os}^{-1}\bbD_{\gamma}\},\\
&\mu_{\rho, \tau} = 2\betax\tr\{\bbV_{*}^{-1}\bbD_{\rm p}\}^{-1}\tr\{\bbV_{*}^{-1}\bbD_{\gamma}\},\\ 
&\mu_{\rho} = - \tr\{\bbV_{*}^{-1}\bbD_{\gamma}\} - 2(\rho_{U} - \betax)\betax\tr\{\bbV_{*}^{-1}\bbD_{\gamma}\bbD_{\rm F, os}^{-1}\bbD_{\gamma}\}. 
\end{aligned}
\]
\begin{condition}\label{cond: linearity os}
(i) $\max\{\|\bl_{1\rm os}\|_{\infty}, \|\bl_{2\rm os}\|_{\infty}\} / \boldsymbol{1}^{\T}\bbL_{\rm os}^{+}\boldsymbol{1} \to 0$; (ii) $\|\bbH_{\rm os}\| / \sqrt{\tr\{\bbH_{\rm os}^{2}\}} \to 0$, $\sqrt{\tr\{\bbH_{\rm os}^{2}\}} \asymp \sum_{j}b_{j}(\bbH_{\rm os})$ and $\scB(\bbH_{\rm os}) \to 0$.
\end{condition}
Conditions \ref{cond: matrices origin}, \ref{cond: selection}, \ref{cond: angle}, \ref{cond: matrices pleiotropy}, \ref{cond: bound pleiotropy}, \ref{cond: matrices os}, and \ref{cond: linearity os} are needed for the proof of Theorem \ref{thm: AN os}. The asymptotic expansions of $\hbeta_{1, \rm os}$ and $\hbeta_{2}$ can be established under Conditions \ref{cond: matrices origin}, \ref{cond: selection}, \ref{cond: angle}, \ref{cond: matrices pleiotropy}, \ref{cond: bound pleiotropy}, and \ref{cond: matrices os}. Condition \ref{cond: linearity os} ensure that no term in the asymptotic expansion of $\enbeta_{\rm DEEM, p}$ dominates all other terms is needed in establishing the asymptotic normality. 
\section{Proofs}
\subsection{Proofs of the Results in Sections \ref{subsec: set up}}\label{app: proof of sec setup}
\paragraph{Derivation of existing estimators}
Suppose $(Y_{1}, X_{1}, \bG_{1}), \dots, (Y_{n}, X_{n}, \bG_{n})$ are independent and identically distributed copies of $(Y, X, \bG)$. 
Let $\bbY = (Y_{1} - \bar{Y},\dots, Y_{n} - \bar{Y})^{\T}$, $\bbX = (X_{1} - \bar{X},\dots, X_{n} - \bar{X})^{\T}$ and $\bbG = (\bG_{1} - \bar{\bG},\dots, \bG_{n} - \bar{\bG})^{\T}$ be the centered outcome vector, exposure vector, and the genotype matrix, respectively, where $\bar{Y} = \sum_{i=1}^{n}Y_{i}/n$, $\bar{X} = \sum_{i=1}^{n}X_{i}/n$, and $\bar{\bG} = \sum_{i=1}^{n}\bG_{i}/n$. 
For $j = 1,\dots,p$, let $\widehat{\gamma}_{j} = \bbG_{\cdot j}^{\T}\bbX / (\bbG_{\cdot j}^{\T}\bbG_{\cdot j})$ and $\widehat{\Gamma}_{j} = \bbG_{\cdot j}^{\T}\bbY / (\bbG_{\cdot j}^{\T}\bbG_{\cdot j})$ be the marginal regression coefficients of $\bbX$ and $\bbY$ on $\bbG_{\cdot j}$, respectively, where $\bbG_{\cdot j}$ is the $j$th column of $\bbG$.
Under the fixed design (or equivalently conditional on $\bbG$), it can be verified that the covariance matrices of $\hgamma$, $\hGamma$ and $\hGamma - \betax\hgamma$ are all proportional to $\bbPhi\bbG^{\T}\bbG \bbPhi$ where $\bbPhi = \diag\{\bbG_{\cdot 1}^{\T}\bbG_{\cdot 1}, \dots, \bbG_{\cdot d}^{\T}\bbG_{\cdot d}\}^{-1}$. Suppose $\bbG^{\T}\bbG$ is invertible. Then one can use $\bbV = \bbPhi\bbG^{\T}\bbG \bbPhi$ and the resulting solution of \eqref{eq: plug-in EE} is 
\[\frac{\bbX^{\T}\bbG(\bbG^{\T}\bbG)^{-1}\bbG^{\T}\bbY}{\bbX^{\T}\bbG(\bbG^{\T}\bbG)^{-1}\bbG^{\T}\bbX},\]
which is the widely used TSLS estimator \citep{anderson2005origins}. When $\hgamma$ and $\hGamma$ come from two independent samples, the optimal two-sample IV estimator proposed by \cite{zhao2019two} can be derived similarly. 

The components of $\hgamma$ and $\hGamma$ are approximately independent \citep{zhao2020statistical} if the SNPs in $\bG$ are mutually independent. Moreover, if $\betax$ is small, we have $\bbSig \approx \bbSig_{\Gamma} \approx \diag \{\sigma_{\Gamma 1}^{2},\dots, \sigma_{\Gamma d}^{2}\}$ where $\sigma_{Yj}^{2}$ is the variance of the $\widehat{\Gamma}_{j}$ for $j = 1,\dots,p$. In this case, the IVW estimator can be obtained if we take $\bbV = \diag\{\widehat{\sigma}_{\Gamma 1}^{2},\dots, \widehat{\sigma}_{\Gamma d}^{2}\}$ in Equation \eqref{eq: plug-in EE}, where $\widehat{\sigma}_{\Gamma j}^{2}$ is an estimator for $\sigma_{\Gamma j}^{2}$ for $j = 1,\dots, d$. Similarly, the WLR estimator and the PCA-IVW estimator can be viewed as solutions of Equation \eqref{eq: plug-in EE} when some specific working covariance matrices are adopted.

\subsection{Proofs of the Results in Section \ref{app: weak IV}}\label{app: proof of sec weak IV}
\paragraph{Proof of the asymptotic bias of $\hbeta_{\rm PlugIn}$}
\begin{proof}
To assess the bias of $\hbeta_{\rm PlugIn}$, suppose
$\hgamma^{\T}\bbV^{-1}\hgamma$ is close to its expectation in the sense that $\hgamma^{\T}\bbV^{-1}\hgamma / E(\hgamma^{\T}\bbV^{-1}\hgamma) \to 1$ in probability, which holds under general regularity conditions. Then, we have
\[
\begin{aligned}
	\hbeta_{\rm PlugIn} - \betax	
	& = \left[\frac{\hgamma^{\T}\bbV^{-1}(\hGamma - \betax \hgamma) - E\{\hgamma^{\T}\bbV^{-1}(\hGamma - \betax \hgamma)\}}{E(\hgamma^{\T}\bbV^{-1}\hgamma)} + \frac{E\{\hgamma^{\T}\bbV^{-1}(\hGamma - \betax \hgamma)\}}{E(\hgamma^{\T}\bbV^{-1}\hgamma)}\right]\times\\
	&\quad (1 + o_{P}(1))\\
	& \equalscolon (r_{1} + r_{2})(1 + o_{P}(1)) \\
	& = r_{1} + r_{2} + o_{P}(r_{1} + r_{2}).
\end{aligned}
\]
The term $r_{1}$ is mean zero, and the term $r_{2}$ is the bias term.
Notice that $E[\hgamma^{\T}\bbV^{-1}\hgamma] = \bgamma^{T}\bbV^{-1}\bgamma + \tr\{\bbV^{-1}\bbSig_{\gamma}\}$. We have
\[
r_{2} = - \frac{\betax\tr\{\bbV^{-1}\bbSig_{\gamma}\}}{\bgamma^{T}\bbV^{-1}\bgamma + \tr\{\bbV^{-1}\bbSig_{\gamma}\}} = - \frac{\betax}{\Theta(\|\bgamma\|^{2} / \tr\{\bbSig_{\gamma}\}) + 1}.
\]
\end{proof}

\subsection{Proofs of the Results in Section \ref{subsec: orDEEM}}\label{app: proof of sec orDEEM}
\paragraph{A counter example when $\bbD_{\gamma}$ and $\bbD_{\Gamma}$ consist of the diagonal elements of $\bbSig_{\gamma}$ and $\bbSig_{\Gamma}$}~\\
Consider a case where $\betax = 1$,
\[
\bbSig_{\gamma} = \begin{pmatrix}
0.5 & 0.1\\
0.1& 0.5
\end{pmatrix},
\bbSig_{\Gamma} = \begin{pmatrix}
1.5 & 0.9\\
0.9& 1.5
\end{pmatrix},
\enspace \text{and} \enspace
\bbV^{-1} = (\bbSig_{\Gamma} + \betax^{2}\bbSig_{\gamma})^{-1}
= \frac{1}{3}
\begin{pmatrix}
2 & - 1\\
- 1& 2
\end{pmatrix}. 
\]
We have $\bbQ(\betax) = - 1/4 \bbI$ if $\bbD_{\gamma}$ and $\bbD_{\Gamma}$ consist of the diagonal elements of $\bbSig_{\gamma}$ and $\bbSig_{\Gamma}$, respectively.  Then the expectation of the EE \eqref{eq: DEEM pop} at $\betax$ is $-0.1$, which indicates that \eqref{eq: DEEM pop} is biased in this case.

\paragraph{Proof of Equation \eqref{eq: inner product representation}}
\begin{proof}
By straightforward calculation, we have
\begin{equation}\label{eq: bias representation}
	\begin{aligned}
		&E[\{\hgamma - \bbQ(\betax)(\hGamma - \betax \hgamma)\}^{\T}\bbV^{-1}(\hGamma - \betax \hgamma)]\\
		&= - \tr\left\{\bbQ(\betax)^{\T}\bbV^{-1}(\bbSig_{\Gamma} + \betax^{2}\bbSig_{\gamma})\right\} - \betax \tr\left\{\bbV^{-1}\bbSig_{\gamma}\right\} \\
		& = -\tr\{\bbQ(\betax)^{\T}\bbV^{-1}(\bbSig_{\Gamma} - \bbD_{\Gamma})\} - \betax^2 \tr\{\bbQ(\betax)^{\T}\bbV^{-1}(\bbSig_{\gamma} - \bbD_{\gamma})\}\\
		&\quad \ - \tr\{\bbQ(\betax)^{\T}\bbV^{-1}(\bbD_{\Gamma} + \betax^{2}\bbD_{\gamma})\} - \betax \tr\left\{\bbV^{-1}\bbSig_{\gamma}\right\}.
	\end{aligned}
\end{equation}
Note that 
\begin{equation*}
	\begin{aligned}
		&\tr\{\bbQ(\betax)^{\T}\bbV^{-1}(\bbD_{\Gamma} + \betax^{2}\bbD_{\gamma})\}\\
		&= - \betax \tr\{(\bbD_{\Gamma} + \betax^{2}\bbD_{\gamma})^{-1} \bbD_{\gamma}\bbV^{-1}(\bbD_{\Gamma} + \betax^{2}\bbD_{\gamma})\}\\
		&= - \betax\tr\{\bbV^{-1}\bbD_{\gamma}\}.
	\end{aligned}
\end{equation*}
Combining this with \eqref{eq: bias representation}, we have
\begin{equation*}
	\begin{aligned}
		&E[\{\hgamma - \bbQ(\betax)(\hGamma - \betax \hgamma)\}^{\T}\bbV^{-1}(\hGamma - \betax \hgamma)]\\
		& = -\tr\{\bbQ(\betax)^{\T}\bbV^{-1}(\bbSig_{\Gamma} - \bbD_{\Gamma})\} -  \tr\{(\betax^{2}\bbQ(\betax) + \betax \bbI)^{\T}\bbV^{-1}(\bbSig_{\gamma} - \bbD_{\gamma})\}\\
		& = -\langle \bbQ(\betax), \bbSig_{\Gamma} - \bbD_{\Gamma}\rangle_{\rm \bbV} - \langle \betax^{2}\bbQ(\betax) + \betax \bbI, \bbSig_{\gamma} - \bbD_{\gamma}\rangle_{\rm \bbV}.
	\end{aligned}
\end{equation*}
\end{proof}

\paragraph{Proof of Proposition \ref{prop: D and unbiasedness}}
\begin{proof}
We first establish the expression of $\bbD_{\gamma}$ and $\bbD_{\Gamma}$.
We only prove the result for $\bbD_{\gamma}$. The result for $\bbD_{\Gamma}$ follows similarly.
According to the property of projection, we have
\[
\tr\{\bbD \bbV^{-1}(\bbSig_{\gamma} - \bbD_{\gamma})\} = 0
\]
for any diagonal matrix $\bbD$. This implies every diagonal element of $\bbV^{-1}(\bbSig_{\gamma} - \bbD_{\gamma})$ equals to zero. Thus $\bbV^{-1}\bbSig_{\gamma}$ and $\bbV^{-1}\bbD_{\gamma}$ have identical diagonal elements. Then we have $[\bbV^{-1}\bbSig_{\gamma}]_{jj} = [\bbV^{-1}]_{jj}[\bbD_{\gamma}]_{jj}$ for $j = 1,\dots, d$. Hence $[\bbD_{\gamma}]_{jj} = [\bbV^{-1}\bbSig_{\gamma}]_{jj}/[\bbV^{-1}]_{jj}$.

Next, we prove the unbiasedness result.
Note that $\cD$ is a linear space.
According to the property of a projection onto a linear space, we have 
$\tr\{\bbD \bbV^{-1}(\bbSig_{\gamma} - \bbD_{\gamma})\} = 0$ and $\tr\{\bbD \bbV^{-1}(\bbSig_{\gamma} - \bbD_{\Gamma})\} = 0$ for any $\bbD\in\cD$. This implies
\begin{equation*}
	\begin{aligned}
		& E[\{\hgamma - \bbQ(\betax)(\hGamma - \betax \hgamma)\}^{\T} \bbV^{-1}(\hGamma - \betax \hgamma)] \\
		& = 
		-\betax\tr\{\bbV^{-1}\bbSig_{\gamma}\} - \tr\{\bbQ(\betax)^{\T}\bbV^{-1}(\bbSig_{\Gamma} + \betax^{2} \bbSig_{\gamma})\} \\
		& = -\betax\tr\{\bbV^{-1}(\bbSig_{\gamma} - \bbD_{\gamma})\} - \betax\tr\{\bbV^{-1}\bbD_{\gamma}\} \\
		&\quad - \tr\{\bbQ(\betax)^{\T}\bbV^{-1}(\bbSig_{\Gamma} + \betax^{2} \bbSig_{\gamma} - \bbD_{\Gamma} - \betax^{2}\bbD_{\gamma})\}\\
		&\quad - \tr\{\bbQ(\betax)^{\T}\bbV^{-1}(\bbD_{\Gamma} + \betax^{2}\bbD_{\gamma})\}\\
		& =  - \betax\tr\{\bbV^{-1}\bbD_{\gamma}\} - \tr\{\bbQ(\betax)^{\T}\bbV^{-1}(\bbD_{\Gamma} + \betax^{2}\bbD_{\gamma})\},
	\end{aligned}
\end{equation*}
where the last equality holds because $\bbQ(\betax)$ is diagonal. Thus
\begin{equation*}
	\begin{aligned}
		& E\{\{\hgamma - \bbQ(\betax)(\hGamma - \betax \hgamma)\}^{\T} \bbV^{-1}(\hGamma - \betax \hgamma)\} \\
		& =  - \betax\tr\{\bbV^{-1}\bbD_{\gamma}\} - \tr\{\bbQ(\betax)^{\T}\bbV^{-1}(\bbD_{\Gamma} + \betax^{2}\bbD_{\gamma})\} \\
		& = - \betax\tr\{\bbV^{-1}\bbD_{\gamma}\} + \betax\tr\{(\bbD_{\Gamma} + \betax^{2}\bbD_{\gamma})^{-1}\bbD_{\gamma}\bbV^{-1}(\bbD_{\Gamma} + \betax^{2}\bbD_{\gamma})\} \\
		& = - \betax\tr\{\bbV^{-1}\bbD_{\gamma}\} + \betax\tr\{\bbV^{-1}\bbD_{\gamma}\}\\
		& = 0.
	\end{aligned}
\end{equation*}
\end{proof}
The unbiasedness result in Proposition \ref{prop: D and unbiasedness} can be extended to allow $\cD$ to be other sets of structured matrices such as block diagonal matrices. Next, we provide the extension of Proposition \ref{prop: D and unbiasedness} for the sake of completeness. In the following, we assume $\cD$ is a linear space. Then Proposition \ref{prop: D and unbiasedness} can be extended under the following condition.
\begin{condition}\label{cond: structured set}
(i) For any $\bbA \in \cD$, if $A$ is nonsingular, then $\bbA^{-1}\in \cD$; 
(ii) for any  $\bbA_{1}, \bbA_{2} \in \cD$, we have $\bbA_{1}\bbA_{2} \in \cD$.
\end{condition} 
The following proposition can be proved similarly to Proposition \ref{prop: D and unbiasedness}.
\begin{proposition}
If $\cD$ satisfies Condition \ref{cond: structured set} and $\bbD_{\Gamma} + \betax^{2}\bbD_{\gamma}$ is nonsingular, then
\[
E\left[\{\hgamma - \bbQ(\betax)(\hGamma - \betax \hgamma)\}^{\T} \bbV^{-1}(\hGamma - \betax \hgamma)\right] = 0.
\] 
\end{proposition}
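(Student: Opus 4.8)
The plan is to follow the proof of Proposition~\ref{prop: D and unbiasedness} almost verbatim, replacing the two places where that argument used that diagonal matrices commute and are symmetric by appeals to Condition~\ref{cond: structured set}. First I would record what $\cD$ being a linear space buys us: the minimizers $\bbD_{\gamma},\bbD_{\Gamma}$ in \eqref{eq:dd} (now taken over $\cD$) are the $\langle\cdot,\cdot\rangle_{\bbV}$-orthogonal projections of $\bbSig_{\gamma},\bbSig_{\Gamma}$ onto $\cD$, so $\langle\bbD,\bbSig_{\gamma}-\bbD_{\gamma}\rangle_{\bbV}=\langle\bbD,\bbSig_{\Gamma}-\bbD_{\Gamma}\rangle_{\bbV}=0$ for every $\bbD\in\cD$. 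Next I would use Condition~\ref{cond: structured set} to check that the matrices assembled into $\bbQ(\betax)$ stay inside $\cD$: since $\bbD_{\Gamma}+\betax^{2}\bbD_{\gamma}\in\cD$ by linearity and is nonsingular by hypothesis, part~(i) gives $(\bbD_{\Gamma}+\betax^{2}\bbD_{\gamma})^{-1}\in\cD$; part~(ii) then gives $\bbD_{\gamma}(\bbD_{\Gamma}+\betax^{2}\bbD_{\gamma})^{-1}\in\cD$, hence $\bbQ(\betax)\in\cD$, and also $\bbI=(\bbD_{\Gamma}+\betax^{2}\bbD_{\gamma})(\bbD_{\Gamma}+\betax^{2}\bbD_{\gamma})^{-1}\in\cD$.

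With $\bbI,\bbQ(\betax)\in\cD$ the computation is the one in the proof of Proposition~\ref{prop: D and unbiasedness}. Using that $\hgamma$ and $\hGamma$ are independent, $\cov(\hgamma,\hGamma-\betax\hgamma)=-\betax\bbSig_{\gamma}$, $\cov(\hGamma-\betax\hgamma)=\bbSig_{\Gamma}+\betax^{2}\bbSig_{\gamma}$, and $E(\hGamma-\betax\hgamma)=0$, I would obtain
\[
E[\{\hgamma-\bbQ(\betax)(\hGamma-\betax\hgamma)\}^{\T}\bbV^{-1}(\hGamma-\betax\hgamma)]=-\betax\tr\{\bbV^{-1}\bbSig_{\gamma}\}-\tr\{\bbQ(\betax)^{\T}\bbV^{-1}(\bbSig_{\Gamma}+\betax^{2}\bbSig_{\gamma})\}.
\]
Writing $\bbSig_{\gamma}=\bbD_{\gamma}+(\bbSig_{\gamma}-\bbD_{\gamma})$ and $\bbSig_{\Gamma}=\bbD_{\Gamma}+(\bbSig_{\Gamma}-\bbD_{\Gamma})$ and using that the residuals $\bbSig_{\gamma}-\bbD_{\gamma}$ and $\bbSig_{\Gamma}-\bbD_{\Gamma}$ are $\langle\cdot,\cdot\rangle_{\bbV}$-orthogonal to the members $\bbI$ and $\bbQ(\betax)$ of $\cD$, every residual term drops out and the right-hand side collapses to $-\betax\tr\{\bbV^{-1}\bbD_{\gamma}\}-\tr\{\bbQ(\betax)^{\T}\bbV^{-1}(\bbD_{\Gamma}+\betax^{2}\bbD_{\gamma})\}$. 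It then remains only to establish the identity $\tr\{\bbQ(\betax)^{\T}\bbV^{-1}(\bbD_{\Gamma}+\betax^{2}\bbD_{\gamma})\}=-\betax\tr\{\bbV^{-1}\bbD_{\gamma}\}$, after which the two terms cancel.

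I expect that last identity to be the only real obstacle. In Proposition~\ref{prop: D and unbiasedness} it was immediate because the diagonal matrices commute and $\bbQ(\betax)^{\T}=\bbQ(\betax)$; in the general structured case one first needs to rewrite $\bbQ(\betax)^{\T}=-\betax(\bbD_{\Gamma}+\betax^{2}\bbD_{\gamma})^{-1}\bbD_{\gamma}$, which requires the projections $\bbD_{\gamma},\bbD_{\Gamma}$ of the symmetric matrices $\bbSig_{\gamma},\bbSig_{\Gamma}$ to be themselves symmetric --- as is the case when $\bbV$ respects the block structure of $\cD$, i.e.\ for the block-diagonal construction of Appendix~\ref{app: construction of V}. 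Given that, $\tr\{\bbQ(\betax)^{\T}\bbV^{-1}(\bbD_{\Gamma}+\betax^{2}\bbD_{\gamma})\}=-\betax\tr\{(\bbD_{\Gamma}+\betax^{2}\bbD_{\gamma})^{-1}\bbD_{\gamma}\bbV^{-1}(\bbD_{\Gamma}+\betax^{2}\bbD_{\gamma})\}=-\betax\tr\{\bbV^{-1}\bbD_{\gamma}\}$ follows purely from cyclicity of the trace, exactly as in the diagonal case. Verifying this symmetry, and confirming that Condition~\ref{cond: structured set}(ii) really is available for the family $\cD$ one intends to use, are the points to treat with care; everything else is identical bookkeeping to the proof of Proposition~\ref{prop: D and unbiasedness}.
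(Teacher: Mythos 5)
Your argument is correct and is exactly the route the paper intends: the paper offers no separate proof of this extension beyond the remark that it "can be proved similarly to Proposition \ref{prop: D and unbiasedness}", and your reconstruction — orthogonality of the residuals to $\cD$, membership of $\bbQ(\betax)$ and $\bbI$ in $\cD$ via Condition \ref{cond: structured set}, and the final trace cancellation — is that proof. Your closing caveat is well taken and worth keeping: the identity $\tr\{\bbQ(\betax)^{\T}\bbV^{-1}(\bbD_{\Gamma}+\betax^{2}\bbD_{\gamma})\}=-\betax\tr\{\bbV^{-1}\bbD_{\gamma}\}$ does require the projections (or at least $\bbD_{\Gamma}+\betax^{2}\bbD_{\gamma}$) to be symmetric, which Condition \ref{cond: structured set} alone does not guarantee and which the paper passes over silently; it holds automatically in the diagonal case and in the block-diagonal case when $\bbV^{-1}$ shares the block structure, which covers the examples the paper has in mind.
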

Condition \ref{cond: structured set} is satisfied if $\cD$ is the set of diagonal matrices. Other examples fulfill condition \ref{cond: structured set} include the set of block diagonal matrices with a given block and the set comprises polynomials of a given matrix.
To be specific, we focus on the case where $\cD$ is the set of diagonal matrices in the main text as it performs fairly well in our simulation study.

\subsection{Proofs of the Results in Section \ref{subsec: AN and combine}}\label{app: proof of sec AN and combine}
In the proofs of Theorems \ref{thm: AN origin}, \ref{thm: AN pleiotropy} and \ref{thm: AN os}, we let $\bxi$ be a $2q$-dimensional random vector with independent standard normal components, and use the notation $\stackrel{d}{=}$ to represent ``equality in distribution".
\paragraph{Proof of Theorem \ref{thm: AN origin}}
\begin{proof}
We first prove the existence and uniqueness of the solution $\hbeta_{1}$.
Let $\hs_{1}(\beta) = \{\hgamma - \hQ(\beta)(\hGamma - \beta \hgamma)\}^{\T} \bbV^{-1}(\hGamma - \beta \hgamma)$ and $s_{1}(\beta) = \{\hgamma - \bbQ(\beta)(\hGamma - \beta \hgamma)\}^{\T} \bbV_{*}^{-1}(\hGamma - \beta \hgamma)$. Then \eqref{eq: orDEEM} becomes $\hs_{1}(\beta) = 0$. According to Condition \ref{cond: matrices origin} and Condition \ref{cond: selection}, we have
\begin{equation}\label{eq: est s1}
	\hs_{1}(\beta) - s_{1}(\beta) = O_{P}( n^{1-\kappa}\|\bgamma\|^{2}).
\end{equation}
for $\beta$ in a neighborhood of $\betax$.
Note that $E\{s_{1}(\betax)\} = 0$. For any sufficiently small $\epsilon > 0$, we have 
\begin{equation}\label{eq: order expect s1}
	E\{s_{1}(\betax - \epsilon)\} \geq c\bgamma^{\T}\bbV_{*}^{-1}\bgamma\ \text{and}\ E\{s_{1}(\betax + \epsilon)\} \leq - c\bgamma^{\T}\bbV_{*}^{-1}\bgamma
\end{equation}
for some constant $c > 0$. Let 
\[
\bbSig^{\dag} = 
\begin{pmatrix}
	\bbSig_{\gamma} & -\betax\bbSig_{\gamma}\\
	-\betax\bbSig_{\gamma} & \bbSig_{\Gamma} + \betax^{2}\bbSig_{\gamma}
\end{pmatrix}
\]
be the joint variance-covariance matrix of $\hgamma$ and $\hGamma - \betax\hgamma$.
Let 
\[
\bbM = \frac{1}{2}
\begin{pmatrix}
	\mathbf{0} & \bbV_{*}^{-1}\\
	\bbV_{*}^{-1} & - \bbQ(\betax)^{\T}\bbV_{*}^{-1} - \bbV_{*}^{-1}\bbQ(\betax)
\end{pmatrix}.
\] 
Then
\[
s_{1}(\betax) = 
\begin{pmatrix}
	\hgamma \\
	\hGamma - \betax\hgamma
\end{pmatrix}^{\T}
\bbM
\begin{pmatrix}
	\hgamma \\
	\hGamma - \betax\hgamma
\end{pmatrix}.
\]
By some algebra, it can be verified that 
\begin{equation}\label{eq: order variance s1}
	\begin{aligned}
		\var\{s_{1}(\betax)\}
		& = \bgamma^{\T}\bbV_{*}^{-1}\bbSig\bbV_{*}^{-1}\bgamma + 2\tr\{\bbM\bbSig^{\dag}\bbM\bbSig^{\dag}\}\\
		& = \Theta(\bgamma^{\T}\bbV_{*}^{-1}\bgamma)
	\end{aligned}
\end{equation}
according to Condition \ref{cond: matrices origin} (i), (ii), (iii), Condition \ref{cond: selection}, Condition \ref{cond: angle} (iii) and Theorem 3.2d.4 in \cite{mathai1992quadratic}.
Then, according to \eqref{eq: est s1}, \eqref{eq: order expect s1}, and \eqref{eq: order variance s1}, we have
\[
P\left(\hs_{1}(\beta - \epsilon) > 0\right) \to 1 \ \text{and} \ P\left(\hs_{1}(\beta + \epsilon) < 0\right) \to 1.
\]
This implies $\hs_{1}(\beta) = 0$ has a solution $\hbeta_{1}$ in $[\betax - \epsilon, \betax + \epsilon]$ with probability tending to one. 
Next we prove the uniqueness of $\hbeta_{1}$.
Similar to \eqref{eq: est s1}, we have
\begin{align}\label{eq: est first derivative}
	\frac{\partial}{\partial \beta}\hs_{1}(\beta)
	& = - \hgamma^{\T}\bbV^{-1}\hgamma + \hgamma^{\T}\bbV^{-1}\hQ(\beta)(\hGamma - \beta \hgamma) + (\hGamma - \beta \hgamma)^{\T}\bbV^{-1}\hQ(\beta)\hgamma\nn\\
	&\quad + (\hGamma - \beta \hgamma)^{\T}\bbV^{-1}\hD_{\gamma}(\hD_{\Gamma} + \beta^{2}\hD_{\gamma})^{-1}(\hGamma - \beta\hgamma)\nn\\
	&\quad - 2(\hGamma - \beta \hgamma)^{\T}\bbV^{-1}\hQ(\beta)^{2}(\hGamma - \beta \hgamma)\nn\\
	& =  - \hgamma^{\T}\bbV_{*}^{-1}\hgamma + \hgamma^{\T}\bbV_{*}^{-1}\bbQ(\beta)(\hGamma - \beta \hgamma) + (\hGamma - \beta \hgamma)^{\T}\bbV_{*}^{-1}\bbQ(\beta)\hgamma\nn\\
	&\quad + (\hGamma - \beta \hgamma)^{\T}\bbV_{*}^{-1}\bbD_{\gamma}(\bbD_{\Gamma} + \beta^{2}\bbD_{\gamma})^{-1}(\hGamma - \beta\hgamma)\nn\\
	&\quad - 2(\hGamma - \beta \hgamma)^{\T}\bbV_{*}^{-1}\bbQ(\beta)^{2}(\hGamma - \beta \hgamma) + O_{P}\left\{(q + n\|\bgamma\|^{2})n^{-\kappa}\right\}\nn\\
	& = \frac{\partial}{\partial \beta}s_{1}(\beta) + O_{P}( n^{1-\kappa}\|\bgamma\|^{2}).
\end{align}
for $\beta$ in a neighborhood of $\betax$.
Note that
\begin{equation}\label{eq: expect first derivative}
	\begin{aligned}
		E\left\{\frac{\partial}{\partial \beta}s_{1}(\betax)\right\}
		& = - \bgamma^{\T}\bbV_{*}^{-1}\bgamma - \tr\{\bbV_{*}^{-1}\bbD_{\gamma}\} - 2\betax\tr\{\bbV_{*}^{-1}\bbQ(\betax)\bbD_{\gamma}\} +  \tr\{\bbV_{*}^{-1}\bbD_{\gamma}\}\\
		&\quad + 2\betax\tr\{\bbV_{*}^{-1}\bbQ(\betax)\bbD_{\gamma}\} \\
		& = - \bgamma^{\T}\bbV_{*}^{-1}\bgamma.
	\end{aligned}
\end{equation}
This implies $E\left\{\partial s_{1}(\beta)/\partial \beta\right\} \leq -c \bgamma^{\T}\bbV_{*}^{-1}\bgamma$ in a neighborhood of $\betax$ for some constant $c > 0$.
Equation \eqref{eq: expect first derivative} implies $E\left\{\partial s_{1}(\betax)/\partial \beta\right\} = -\bgamma^{\T}\bbV_{*}^{-1}\bgamma$.
By Condition \ref{cond: matrices origin} (ii), we have $\bgamma^{\T}\bbV_{*}^{-1}\bgamma = \Theta(n\|\bgamma\|^{2})$.
It is straightforward to show that $\var\{\partial s_{1}(\betax) / \partial \beta\} = O(n\|\bgamma\|^{2}) = O(\bgamma^{\T}\bbV_{*}^{-1}\bgamma)$ according to Condition \ref{cond: matrices origin} (i), (ii), (iii), Condition \ref{cond: selection} and Theorem 3.2d.4 in \cite{mathai1992quadratic}. Hence 
\begin{equation}\label{eq: order first derivative}
	\frac{\partial}{\partial \beta}s_{1}(\betax) = - \bgamma^{\T}\bbV_{*}^{-1}\bgamma + o_{P}( \bgamma^{\T}\bbV_{*}^{-1}\bgamma).
\end{equation} Moreover, we have
\begin{align}\label{eq: est second derivative}
	\frac{\partial^{2}}{\partial \beta^2}\hs_{1}(\beta)
	& = -\hgamma^{\T}\bbV_{*}^{-1}\bbD_{\gamma}(\bbD_{\Gamma} + \beta^{2}\bbD_{\gamma})^{-1}(\hGamma - \beta \hgamma) + 2 \hgamma^{\T}\bbV_{*}^{-1}\bbQ(\beta)^{2}(\hGamma - \beta\hgamma)\nn\\
	&\quad - \hgamma^{\T}\bbV_{*}^{-1}\bbQ(\beta)\hgamma - (\hGamma - \beta \hgamma)^{\T}\bbV_{*}^{-1}\bbD_{\gamma}(\bbD_{\Gamma} + \beta^{2}\bbD_{\gamma})^{-1}\hgamma\nn\\
	&\quad + 2(\hGamma - \beta\hgamma)^{\T}\bbV_{*}^{-1}\bbQ(\beta)^{2}\hgamma -\hgamma^{\T}\bbV_{*}^{-1}\bbD_{\gamma}(\bbD_{\Gamma} + \beta^{2}\bbD_{\gamma})^{-1}(\hGamma - \beta\hgamma)\nn\\
	&\quad - (\hGamma - \beta\hgamma)^{\T}\bbV_{*}^{-1}\bbD_{\gamma}(\bbD_{\Gamma} + \beta^{2}\bbD_{\gamma})^{-1}\hgamma\nn\\
	&\quad + 2 (\hGamma - \beta\hgamma)^{\T}\bbV_{*}^{-1}\bbQ(\beta)\bbD_{\gamma}(\bbD_{\Gamma} + \beta^{2}\bbD_{\gamma})^{-1}(\hGamma - \beta\hgamma)\nn\\
	&\quad - 2\hgamma^{\T}\bbV_{*}^{-1}\bbQ(\beta)^{2}(\hGamma - \beta\hgamma) - 2(\hGamma - \beta\hgamma)^{\T}\bbV_{*}^{-1}\bbQ(\beta)^{2}\hgamma\nn\\
	&\quad - 2 (\hGamma - \beta\hgamma)^{\T}\bbV_{*}^{-1}\bbQ(\beta)\{\bbD_{\gamma}(\bbD_{\Gamma} + \beta^{2}\bbD_{\gamma})^{-1} + 2\bbQ(\beta)^{2}\}(\hGamma - \beta\hgamma)\nn\\
	&\quad - (\hGamma - \beta\hgamma)^{\T}\bbV_{*}^{-1}\{\bbD_{\gamma}(\bbD_{\Gamma} + \beta^{2}\bbD_{\gamma})^{-1} + 2\bbQ(\beta)^{2}\}\bbQ(\beta)(\hGamma - \beta\hgamma)\nn\\
	&\quad + O_{P}( n^{1-\kappa}\|\bgamma\|^{2})\nn\\
	& = O_{P}(q + n\|\bgamma\|^{2})\nn\\
	& = O_{P}(\bgamma^{\T}\bbV_{*}^{-1}\bgamma)
\end{align}
for $\beta$ in a neighborhood of $\betax$ according to Condition \ref{cond: matrices origin} and Condition \ref{cond: selection}. Combining \eqref{eq: est first derivative}, \eqref{eq: order first derivative} with \eqref{eq: est second derivative}, we conclude that
\[
P\left(\sup_{\beta \in [\betax - c, \betax + c]}	\frac{\partial}{\partial \beta}\hs_{1}(\beta) < -c\bgamma^{\T}\bbV_{*}^{-1}\bgamma\right) \to 1
\]
for some constant $c > 0$. And hence $\hs_{1}(\beta)$ is strictly decreasing in a neighborhood of $\betax$ with probability tending to one. Thus, with probability tending to one, $\hs_{1}(\beta)$ has a unique solution $\hbeta_{1}$ in $[\betax - c, \betax + c]$ which also belongs to $[\betax - \epsilon, \betax + \epsilon]$. Furthermore, $\hbeta_{1}$ is consistent for $\betax$ due to the arbitrariness of $\epsilon$.

Next, we move on to prove the asymptotic normality. Because $\hs_{1}(\hbeta_{1}) = 0$, we have
\[
\begin{aligned}
	-\hs_{1}(\betax)
	& = \hs_{1}(\hbeta_{1}) - \hs_{1}(\betax)\\
	& = \frac{\partial}{\partial \beta}\hs_{1}(\betax)(\hbeta_{1} - \betax) + \frac{1}{2}\frac{\partial^{2}}{\partial \beta^{2}}\hs_{1}(\bar{\beta})(\hbeta_{1} - \betax)^{2}
\end{aligned}
\]
according to the mean value theorem. According to Condition \ref{cond: matrices origin} and Condition \ref{cond: selection},
we have
\[
\begin{aligned}
	&\hs_{1}(\betax) - s_{1}(\betax) \\
	& = \{\hgamma - \hQ(\betax)(\hGamma - \betax \hgamma)\}^{\T} \bbV^{-1}(\hGamma - \betax \hgamma) - \{\hgamma - \bbQ(\betax)(\hGamma - \betax \hgamma)\}^{\T} \bbV^{-1}(\hGamma - \betax \hgamma)\\
	& \quad + \{\hgamma - \bbQ(\betax)(\hGamma - \betax \hgamma)\}^{\T} \bbV^{-1}(\hGamma - \betax \hgamma) - 
	\{\hgamma - \bbQ(\betax)(\hGamma - \betax \hgamma)\}^{\T} \bbV_{*}^{-1}(\hGamma - \betax \hgamma)\\
	& \leq \|\hGamma - \betax\hgamma\|^{2} \|\bbV^{-1}\|\|\hQ(\betax) - \bbQ(\betax)\| + 
	\|\hGamma - \betax\hgamma\|\|\hgamma - \bbQ(\betax)(\hGamma - \betax \hgamma)\|\|\bbV^{-1} - \bbV_{*}^{-1}\|\\
	& = O_{P}(n^{-\kappa}q) + O_{P}(n^{-\kappa}\sqrt{q}\sqrt{n\|\bgamma\|^{2}}) \\
	& = o_{P}(s_{1}(\betax)).
\end{aligned}
\] Combining this with \eqref{eq: est first derivative}, \eqref{eq: order first derivative} and \eqref{eq: est second derivative}, we have
\begin{equation}\label{eq: expan hbeta1}
	\hbeta_{1} - \betax = \mu_{1}^{-1}s_{1}(\betax) + o_{P}(\hbeta_{1} - \betax),
\end{equation}
where $\mu_{1} = \bgamma^{\T}\bbV_{*}^{-1}\bgamma$. 
On the other hand, by Condition \ref{cond: matrices origin} (ii), (iv), Condition \ref{cond: selection}, and Condition \ref{cond: angle}, it can be shown that
\begin{equation}\label{eq: expan hbeta2}
	\hbeta_{2} - \betax = \mu_{2}^{-1}s_{2}(\betax) + o_{P}(\hbeta_{2} - \betax),
\end{equation}
where $s_{2}(\betax) = \tgamma^{\T}\bbV_{*}^{-1}(\hGamma - \betax\hgamma)$ and $\mu_{2} = \bgamma^{\T}\bbV_{*}^{-1}\tgamma$.
In the following, we write $s_{1}(\betax)$ and $s_{2}(\betax)$ as $s_{1}$ and $s_{2}$ for short, respectively. For any given weight vector $\bw = (w_{1}, w_{2})^{\T}$, we have 
\begin{equation}\label{eq: expan tbeta}
	\enbeta_{\rm DEEM} - \betax = w_{1}\hbeta_{1} + w_{2}\hbeta_{2} - \betax = w_{1}\mu_{1}^{-1}s_{1} + w_{2}\mu_{2}^{-1}s_{2} + o_{P}(\enbeta_{\rm DEEM} - \betax)
\end{equation}
according to \eqref{eq: expan hbeta1} and \eqref{eq: expan hbeta2}. According to \eqref{eq: normal model}, it can be verified that
\begin{equation}\label{eq: dist equiv origin}
	w_{1}\mu_{1}^{-1}s_{1} + w_{2}\mu_{2}^{-1}s_{2} 
	\stackrel{d}{=} w_{1}\mu_{1}^{-1}
	\bxi^{\T}\bbH\bxi + \bg^{\T}\bxi,
\end{equation}
where $\bxi$ is defined at the beginning of Section \ref{app: proof of sec AN and combine} and
\[
\bg = 
\begin{pmatrix}
	-\betax\bbSig_{\gamma}^{\frac{1}{2}}&
	\bbSig_{\Gamma}^{\frac{1}{2}}
\end{pmatrix}^{\T}\bbV_{*}^{-1}(w_{1}\mu_{1}^{-1}\bgamma + w_{2}\mu_{2}^{-1}\tgamma).
\]
We have $E(w_{1}\mu_{1}^{-1}
\bxi^{\T}\bbH\bxi + \bg^{\T}\bxi) = w_{1}\mu_{1}^{-1}\tr\{\bbH\} = 0$ according to Proposition \ref{prop: D and unbiasedness}.
By Theorem 3.2d.4 in \cite{mathai1992quadratic}, we have
\[
\var( w_{1}\mu_{1}^{-1}
\bxi^{\T}\bbH\bxi + \bg^{\T}\bxi) = \bg^{\T}\bg + 2w_{1}^{2}\mu_{1}^{-2}\tr\{\bbH^{2}\} \equalscolon \psi_{w}
\]
Hence according to equation (2.2) in \cite{knight1985joint}, the characteristic function of
\[
\psi_{w}^{-\frac{1}{2}}(w_{1}\mu_{1}^{-1}\bxi^{\T}\bbH\bxi + \bg^{\T}\bxi)
\]
is 
\[
\begin{aligned}
	f_{w}(t) 
	&= \left|\bbI - \frac{2itw_{1}}{\sqrt{\psi_{w}}\mu_{1}} \bbH\right|^{-\frac{1}{2}}
	\exp\left\{-\frac{t^{2}}{2\psi_{w}}\bg^{\T}\left(\bbI - \frac{2itw_{1}}{\sqrt{\psi_{w}}\mu_{1}} \bbH\right)^{-1}\bg\right\}\\
	&= \exp\left\{-\frac{t^{2}}{2\psi_{w}}\bg^{\T}\left(\bbI - \frac{2itw_{1}}{\sqrt{\psi_{w}}\mu_{1}} \bbH\right)^{-1}\bg - \frac{1}{2}\sum_{j}\log\left(1 - \frac{2itw_{1}}{\sqrt{\psi_{w}}\mu_{1}}\lambda_{0j}\right) \right\},
\end{aligned}
\]
where $i$ is the imaginary unit and $\lambda_{0j}$ is the $j$th singular value of $\bbH$ for $j = 1,\dots 2q$. We have $\lambda_{01} \leq C$ for some constant $C > 0$ by Condition \ref{cond: matrices origin}. Thus $\max_{j}\{tw_{1}\lambda_{0j}/(\sqrt{\psi_{w}}\mu_{1})\} \to 0$ according to Condition \ref{cond: balance origin}. Then we have
\[
\left\|\bbI - \left(\bbI - \frac{2itw_{1}}{\sqrt{\psi_{w}}\mu_{1}} \bbH\right)^{-1}\right\| \to 0,
\]
and
\[
\begin{aligned}
	\log\left(1 - \frac{2itw_{1}}{\sqrt{\psi_{w}}\mu_{1}}\lambda_{0j}\right) = - \frac{2itw_{1}}{\sqrt{\psi_{w}}\mu_{1}}\lambda_{0j} + \frac{2t^{2}w_{1}^{2}}{\psi_{w}\mu_{1}^{2}}\lambda_{0j}^{2} + o\left(\frac{t^{2}w_{1}^{2}}{\psi_{w}\mu_{1}^{2}}\lambda_{0j}^{2}\right).
\end{aligned}
\]
This implies 
\[
\begin{aligned}
	f_{w}(t) 
	&= (1 + o_{P}(1))\exp\left\{-\frac{t^{2}}{2\psi_{w}}\bg^{\T}\bg + \sum_{j}\frac{itw_{1}}{\sqrt{\psi_{w}}\mu_{1}}\lambda_{0j} -
	\sum_{j}\frac{t^{2}w_{1}^{2}}{\psi_{w}\mu_{1}^{2}}\lambda_{0j}^{2}\right\}.
\end{aligned}
\]
Note that $\sum_{j}\lambda_{0j} = \tr\{\bbH\} = 0$ and $\sum_{j}\lambda_{0j}^{2} = \tr\{\bbH^{2}\}$. We have
\begin{equation}\label{eq: characteristic origin}
	f_{w}(t) = (1 + o_{P}(1))\exp\left(-\frac{t^{2}}{2}\right).
\end{equation}
This implies $\psi_{w}^{-\frac{1}{2}}(w_{1}\mu_{1}^{-1}\bxi^{\T}\bbH\bxi + \bg^{\T}\bxi) \to N(0, 1)$ in distribution by Corollary 3 in \cite{bulinski2017conditional}. Combining this with \eqref{eq: expan tbeta} and \eqref{eq: dist equiv origin}, we conclude $\psi_{w}^{-1/2}(\enbeta_{\rm DEEM} -\betax) \to N(0, 1)$ in distribution. The proof is completed by noting that $\psi_{w} = \bw^{\T}\bbPsi\bw$, where 
\[
\bbPsi =
\begin{pmatrix}
	\mu_{1}^{-2}[\bgamma^{\T}\bbV_{*}^{-1}\bbSig\bbV_{*}^{-1}\bgamma + 2\tr\{\bbM\bbSig^{\dag}\bbM\bbSig^{\dag}\}] & \mu_{1}^{-1}\mu_{2}^{-1}\bgamma^{\T}\bbV_{*}^{-1}\bbSig\bbV_{*}^{-1}\tgamma\\
	\mu_{1}^{-1}\mu_{2}^{-1}\bgamma^{\T}\bbV_{*}^{-1}\bbSig\bbV_{*}^{-1}\tgamma &
	\mu_{2}^{-2}\tgamma^{\T}\bbV_{*}^{-1}\bbSig\bbV_{*}^{-1}\tgamma
\end{pmatrix}.
\]
\end{proof}
\paragraph{Consistent estimator for $\bbPsi$}
In order to consistently estimate $\bbPsi$, we need some estimators $\hSig_{\gamma}$ and $\hSig_{\Gamma}$ for $\bbSig_{\gamma}$ and $\bbSig_{\Gamma}$. The construction of $\hSig_{\gamma}$ and $\hSig_{\Gamma}$ based on summary statistics is discussed in Section \ref{app: est Sig}. We assume these estimators are consistent in the sense that $\|\hSig_{\gamma} - \bbSig_{\gamma}\| / \|\bbSig_{\gamma}\| = o_{P}(1)$ and $\|\hSig_{\Gamma} - \bbSig_{\Gamma}\| / \|\bbSig_{\Gamma}\| = o_{P}(1)$. In contrast to $\bbD_{\gamma}$ and $\bbD_{\Gamma}$, we only require $\hSig_{\gamma}$ and $\hSig_{\Gamma}$ to be consistent without further assumptions on the convergence rate. Let
\[
\hSig = 
\hSig_{\Gamma} + \hbeta_{2}^{2}\hSig_{\gamma},\
\hSig^{\dag} = 
\begin{pmatrix}
\hSig_{\gamma} & -\hbeta_{2}\hSig_{\gamma}\\
-\hbeta_{2}\hSig_{\gamma} & \hSig
\end{pmatrix},
\]
\[
\widehat{\bbM} = \frac{1}{2}
\begin{pmatrix}
\mathbf{0} & \bbV^{-1}\\
\bbV^{-1} & - \hQ(\hbeta_{2})^{\T}\bbV^{-1} - \bbV^{-1}\hQ(\hbeta_{2}),
\end{pmatrix}.
\] 
Note that $\hbeta_{2} - \betax = o_{P}(1)$ according to \eqref{eq: expan hbeta2}, Condition \ref{cond: matrices origin} (i), Condition \ref{cond: selection} and Condition \ref{cond: angle} (i). Thus $\|\hSig - \bbSig\| / \|\bbSig\| = o_{P}(1)$, $\|\widehat{\bbSig}^{\dag} - \bbSig^{\dag}\| / \|\bbSig^{\dag}\| = o_{P}(1)$, and $\|\widehat{\bbM} - \bbM\| / \|\bbM\| = o_{P}(1)$.
$\mu_{1}$ and $\mu_{2}$ can be consistently estimated by $\hat{\mu}_{1} = \hgamma^{\T}\bbV^{-1}\hgamma - \tr\{\bbV^{-1}\hSig_{\gamma}\}$ and $\hat{\mu}_{2} = \hgamma^{\T}\bbV^{-1}\tgamma$, respectively.
Let $\psi_{ij}$ be the $ij$th element of $\bbPsi$ for $i = 1, 2$ and $j = 1, 2$. Define
\[
\begin{aligned}
\hat{\psi}_{11} & =\hat{\mu}_{1}^{-2}[\hgamma^{\T}\bbV^{-1}\hSig\bbV^{-1}\hgamma - \tr\{\bbV^{-1}\hSig\bbV^{-1}\hSig_{\gamma}\} + 2\tr\{\widehat{\bbM}\hSig^{\dag}\widehat{\bbM}\hSig^{\dag}\}],\\
\hat{\psi}_{12} & = \hat{\psi}_{21} = \hat{\mu}_{1}^{-1}\hat{\mu}_{2}^{-1}\hgamma^{\T}\bbV^{-1}\hSig\bbV^{-1}\tgamma,\\
\hat{\psi}_{22} & = \hat{\mu}_{2}^{-2}\tgamma^{\T}\bbV^{-1}\hSig\bbV^{-1}\tgamma.
\end{aligned}
\]
Then we have $(\hat{\psi}_{ij} - \psi_{ij}) / \psi_{ij} = o_{P}(1)$ for $i = 1, 2$ and $j = 1, 2$ according to Conditions \ref{cond: matrices origin}, \ref{cond: selection}, and $\ref{cond: angle}$. Thus $\|\hPsi - \bbPsi\|/\|\bbPsi\| \leq \|\hPsi - \bbPsi\|_{\rm F}/\|\bbPsi\| = o_{P}(1)$, where $\|\hPsi - \bbPsi\|_{\rm F} = \sqrt{\sum_{i}\sum_{j}(\hat{\psi}_{ij} - \psi_{ij})^{2}}$ is the Frobenius norm of $\hPsi - \bbPsi$. This establishes the consistency of $\hPsi$. $\|\hPsi - \bbPsi\|/\|\bbPsi\| = o_{P}(1)$ implies $\|(\boldsymbol{1}^{\T}\bbPsi^{-1}\boldsymbol{1})^{-1}\boldsymbol{1}^{\T}\bbPsi^{-1} - (\boldsymbol{1}^{\T}\hPsi^{-1}\boldsymbol{1})^{-1}\boldsymbol{1}^{\T}\hPsi^{-1}\| = o_{P}(1)$ provided the condition number $\lambda_{\rm max}(\bbPsi) / \lambda_{\rm min}(\bbPsi)$ of $\bbPsi$ is bounded, which establish the consistency of the estimated optimal weight. 
\subsection{Proofs of the Results in Section \ref{subsec: DEEM p}}\label{app: proof of secDEEM p}
Next, we consider the asymptotic result when SNPs have direct effects on $Y$. See the following diagram for an illustration of the scenario.
\begin{figure}[h]
\centering
\begin{tikzpicture}[scale = 0.6]
	\node [circle, draw=black, fill=white, inner sep=3pt, minimum size=0.5cm] (x) at (0,0) {\large $X$};
	\node [circle,draw=black,fill=white,inner sep=3pt,minimum size=0.5cm] (z) at (-4, 0) { $G$};
	\node [circle,draw=black,fill=white,inner sep=3pt,minimum size=0.5cm] (y) at (4,0) {\large $Y$};
	\node [obs, minimum size=0.7cm] (u) at (2,2.5) {\large $U$};
	\path [draw,->] (z) edge (x);
	\path [draw,->,dashed] (z) edge node[ anchor=center, pos=0.5,font=\bfseries]{$\times$} (u);
	\path [draw,->] (z) edge [out=-30, in=-150] (y);
	\path [draw,->] (x) edge node[ anchor=center, above, pos=0.5,font=\bfseries] {$\betax$} (y);
	\path [draw,->] (u) edge (x);
	\path [draw,->] (u) edge (y);
\end{tikzpicture}
\caption{Causal diagram for MR analysis with direct effects.}\label{fig: direct effect}
\end{figure}
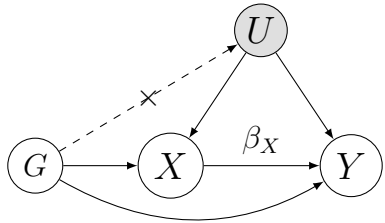

\paragraph{Proof of Theorem \ref{thm: AN pleiotropy}}
\begin{proof}
The existence and uniqueness of the solution $\hbeta_{1, \rm p}$ can be established similarly to that of $\hbeta_{1}$ in the proof of Theorem \ref{thm: AN origin}. We prove the asymptotic normality next. Similar to \eqref{eq: expan hbeta2}, we have
\begin{equation}\label{eq: expan hbeta2 p}
	\hbeta_{2} - \betax = \mu_{2}^{-1}s_{2} + o_{P}(\hbeta_{2} - \betax)
\end{equation}
according to Condition \ref{cond: matrices origin}, Condition \ref{cond: selection}, and Condition \ref{cond: angle}. By calculating the variance of $s_{2}$, Equation \eqref{eq: expan hbeta2 p}, Condition \ref{cond: matrices origin} (i), (ii), Condition \ref{cond: selection}, and Condition \ref{cond: bound pleiotropy} (iii), it can be shown that $\hbeta_{2} - \betax = O_{P}\left(\sqrt{n\|\bbSig\|/\bgamma^{\T}\bbV_{*}^{-1}\bgamma}\right) = o_{P}(1)$. Then
\begin{align}\label{eq: approx htau}
	\htau_{\rm p} - \taup
	&= \tr\{\bbV^{-1}\hD_{\rm p}\}^{-1}\left[(\hGamma - \hbeta_{2} \hgamma)^{\T}\bbV^{-1}(\hGamma - \hbeta_{2} \hgamma) - \tr\{\bbV^{-1}(\hD_{\Gamma} + \hbeta_{2}^{2}\hD_{\gamma} + \taup\hD_{\rm p})\}\right]\nn\\
	&= \tr\{\bbV_{*}^{-1}\bbD_{\rm p}\}^{-1}\left[(\hGamma - \hbeta_{2} \hgamma)^{\T}\bbV_{*}^{-1}(\hGamma - \hbeta_{2} \hgamma) - \tr\{\bbV_{*}^{-1}(\bbD_{\Gamma} + \hbeta_{2}^{2}\bbD_{\gamma} + \taup\bbD_{\rm p})\}\right]\nn\\
	&\quad + O_{P}\left( \frac{n^{1-\kappa}\tr\{\bbSig\}}{\tr\{\bbV_{*}^{-1}\bbD_{\rm p}\}}\right)\nn\\
	&= \tr\{\bbV_{*}^{-1}\bbD_{\rm p}\}^{-1}\left[(\hGamma - \betax \hgamma)^{\T}\bbV_{*}^{-1}(\hGamma - \betax \hgamma) - \tr\{\bbV_{*}^{-1}(\bbD_{\Gamma} + \betax^{2}\bbD_{\gamma} + \taup\bbD_{\rm p})\}\right]\nn\\
	&\quad + \tr\{\bbV_{*}^{-1}\bbD_{\rm p}\}^{-1}\left[-2\hgamma^{\T}\bbV_{*}^{-1}(\hGamma - \betax \hgamma) - 2\betax\tr\{\bbV_{*}^{-1}\bbD_{\gamma}\}\right](\hbeta_{2} - \betax)\nn\\
	&\quad + \tr\{\bbV_{*}^{-1}\bbD_{\rm p}\}^{-1}\left[2\hgamma^{\T}\bbV_{*}^{-1}\hgamma - 2\tr\{\bbV_{*}^{-1}\bbD_{\gamma}\}\right](\hbeta_{2} - \betax)^{2}\nn\\
	&\quad +  O_{P}\left( \frac{n^{-\kappa}\tr\{\bbSig\}}{\tr\{\bbV_{*}^{-1}\bbD_{\rm p}\}}\right)\nn\\
	&= \tr\{\bbV_{*}^{-1}\bbD_{\rm p}\}^{-1}\left[(\hGamma - \betax \hgamma)^{\T}\bbV_{*}^{-1}(\hGamma - \betax \hgamma) - \tr\{\bbV_{*}^{-1}\bbD_{\rm F, p}\}\right]\nn\\
	&\quad + O_{P}\left(\frac{n\|\bbSig\|}{\tr\{\bbV_{*}^{-1}\bbD_{\rm p}\}}\right) +   O_{P}\left( \frac{n^{1-\kappa}q\|\bbSig\|}{\tr\{\bbV_{*}^{-1}\bbD_{\rm p}\}}\right),
\end{align}
by Condition \ref{cond: matrices origin}, Condition \ref{cond: matrices pleiotropy} (i), (iv), and Lemma \ref{lem: tr inequality}, $\bbD_{\rm F, p}$ is defined before Condition \ref{cond: linearity pleiotropy}. According to \eqref{eq: normal model} and some algebra, we have
\[
\begin{aligned}
	(\hGamma - \betax \hgamma)^{\T}\bbV_{*}^{-1}(\hGamma - \betax \hgamma) 
	& \stackrel{d}{=}\begin{pmatrix}
		\bxi\\
		\taup^{-\frac{1}{2}}\bbeta_{G}
	\end{pmatrix}^{\T}
	\bbH_{\rm p, 2}
	\begin{pmatrix}
		\bxi\\
		\taup^{-\frac{1}{2}}\bbeta_{G}
	\end{pmatrix}\\
	& = \bxi^{\T}\bbH_{\rm p, 2}^{(11)} \bxi + 2\taup^{-\frac{1}{2}} \bxi^{\T}\bbH_{\rm p, 2}^{(12)}\bbeta_{G} 
	+ \taup^{-1} \bbeta_{G}^{\T}\bbH_{\rm p, 2}^{(12)}\bbeta_{G},
\end{aligned}
\]
where 
\[
\begin{aligned}
	\bbH_{\rm p, 2}^{(11)} 
	& = 
	\begin{pmatrix}
		- \betax \bbSig_{\gamma}^{\frac{1}{2}}\\
		\bbSig_{\Gamma}^{\frac{1}{2}}
	\end{pmatrix}\bbV_{*}^{-1}
	\begin{pmatrix}
		- \betax \bbSig_{\gamma}^{\frac{1}{2}}\\
		\bbSig_{\Gamma}^{\frac{1}{2}}
	\end{pmatrix}^{\T},\\
	\bbH_{\rm p, 2}^{(12)} 
	& = 
	\sqrt{\taup}\begin{pmatrix}
		- \betax \bbSig_{\gamma}^{\frac{1}{2}}\\
		\bbSig_{\Gamma}^{\frac{1}{2}}
	\end{pmatrix}\bbV_{*}^{-1}\bbP,\\
	\bbH_{\rm p, 2}^{(22)} 
	& = \taup\bbP^{\T}\bbV_{*}^{-1}\bbP.
\end{aligned}
\]
Thus 
\[
\begin{aligned}
	&\var\{(\hGamma - \betax \hgamma)^{\T}\bbV_{*}^{-1}(\hGamma - \betax \hgamma) \}\\
	&= 2\tr\{\bbH_{\rm p, 2}^{(11)2}\} + 4 \tr\{\bbH_{\rm p, 2}^{(12)}\bbH_{\rm p, 2}^{(12)\T}\} +  2\tr\{\bbH_{\rm p, 2}^{(22)}\} + (\taup^{-2}\tau_{4, \rm p} - 3)h_{\rm p}^{2},
\end{aligned}
\]
where $h_{\rm p}$ is the sum of the squares of $\bbH_{\rm p, 2}^{(22)}$'s diagonal components according to the variance formula provided in \cite{goetze2002asymptotic}. According to Condition \ref{cond: matrices pleiotropy} (i), Condition \ref{cond: bound pleiotropy} (i), we have $\var\{(\hGamma - \betax \hgamma)^{\T}\bbV_{*}^{-1}(\hGamma - \betax \hgamma) \} = \Theta(n^{2}q\|\bbSig\|^{2})$. Combing this with \eqref{eq: approx htau}, we have
\begin{equation}\label{eq: expan htaup}
	\htau_{\rm p} - \taup = s_{\tau, \rm p} + o_{P}(\htau_{\rm p} - \taup) = o_{P}(1)
\end{equation}
by Condition \ref{cond: selection} and Condition \ref{cond: matrices pleiotropy} (ii), where 
\[
s_{\tau, \rm p} = \tr\{\bbV_{*}^{-1}\bbD_{\rm p}\}^{-1}\left[(\hGamma - \betax \hgamma)^{\T}\bbV_{*}^{-1}(\hGamma - \betax \hgamma) - \tr\{\bbV_{*}^{-1}\bbD_{\rm F, p}\}\right].
\]
Let $s_{1, \rm p}(\beta, \tau) = \{\hgamma - \bbQ(\beta; \tau)(\hGamma - \beta\hgamma)\}^{\T}\bbV_{*}^{-1}(\hGamma - \beta\hgamma)$. Write $s_{1, \rm p}(\betax; \taup)$ as $s_{1, \rm p}$ for short.
Arguments similar to the proof of \eqref{eq: expan htaup} can establish the expansion
\begin{equation}\label{eq: expan hbeta1 p}
	\hbeta_{1, \rm p} - \betax = \mu_{1}^{-1}s_{1, \rm p} + \mu_{1}^{-1}\mu_{\tau, \rm p}s_{\tau, \rm p} + o_{P}(\hbeta_{1, \rm p} - \betax)
\end{equation}
according to Condition \ref{cond: matrices origin} and Condition \ref{cond: matrices pleiotropy} (iii), (iv),
where 
\[
\mu_{\tau, \rm p} = E\left\{\frac{\partial}{\partial \tau} s_{1, \rm p}(\betax; \taup)\right\} = - \betax \tr\{\bbV_{*}^{-1}\bbD_{\rm p}\bbD_{\rm F, p}^{-1}\bbD_{\gamma}\},
\]
$\mu_{1}$ and $\mu_{2}$ are defined after equations \eqref{eq: expan hbeta1} and \eqref{eq: expan hbeta2}, respectively. Equations \eqref{eq: expan htaup} and \eqref{eq: expan hbeta1 p} implies
\begin{equation}\label{eq: expan tbeta p}
	\enbeta_{\rm DEEM, p} - \betax = w_{1}\hbeta_{1, \rm p} + w_{2}\hbeta_{2} - \betax = w_{1}\mu_{1}^{-1}(s_{1, \rm p} + \mu_{\tau, \rm p}s_{\tau, \rm p}) + w_{2}\mu_{2}^{-1}s_{2} + o_{P}(\enbeta_{\rm DEEM, p} - \betax)
\end{equation}
for any given $\bw = (w_{1}, w_{2})^{\T}$.
According to \eqref{eq: normal model}, it can be verified that
\begin{equation}\label{eq: dist equiv p}
	\begin{aligned}
		&w_{1}\mu_{1}^{-1}(s_{1, \rm p} + \mu_{\tau, \rm p}s_{\tau, \rm p}) + w_{2}\mu_{2}^{-1}s_{2}\\
		&\stackrel{d}{=} w_{1}\mu_{1}^{-1}
		\begin{pmatrix}
			\bxi\\
			\taup^{-\frac{1}{2}}\bbeta_{G}
		\end{pmatrix}^{\T}
		\bbH_{\rm p}
		\begin{pmatrix}
			\bxi\\
			\taup^{-\frac{1}{2}}\bbeta_{G}
		\end{pmatrix} + \bg_{\rm p}^{\T}
		\begin{pmatrix}
			\bxi\\
			\taup^{-\frac{1}{2}}\bbeta_{G}
		\end{pmatrix},
	\end{aligned}
\end{equation}
where $\bg_{\rm p} = w_{1}\bl_{1\rm p} + w_{2}\bl_{2\rm p}$ and $\bbH_{\rm p}$ is defined before Condition \ref{cond: linearity pleiotropy}. Define
\[
\begin{aligned}
	\bbSig_{\rm p}^{\dag} & = 
	\begin{pmatrix}
		\bbSig_{\gamma} & -\betax\bbSig_{\gamma}\\
		-\betax\bbSig_{\gamma} & \bbSig_{\Gamma} + \betax^{2}\bbSig_{\gamma} +\taup\bbSig_{\rm p}
	\end{pmatrix},\\
	\bbM_{\rm p} & = \frac{1}{2}
	\begin{pmatrix}
		\mathbf{0} & \bbV_{*}^{-1}\\
		\bbV_{*}^{-1} & - \bbQ(\betax; \taup)^{\T}\bbV_{*}^{-1} - \bbV_{*}^{-1}\bbQ(\betax; \taup)
	\end{pmatrix},\\
	\bbM_{\tau} & = 
	\begin{pmatrix}
		\mathbf{0} & \mathbf{0}\\
		\mathbf{0} & \bbV_{*}
	\end{pmatrix}.
\end{aligned}
\]
Let
\[
\begin{aligned}
	\bbOmega_{\rm p} =
	\begin{pmatrix}
		\bgamma^{\T}\bbV_{*}^{-1}\bbSig\bbV_{*}^{-1}\bgamma + 2\tr\{\bbM_{\rm p}\bbSig_{\rm p}^{\dag}\bbM_{\rm p}\bbSig_{\rm p}^{\dag}\} &  \bgamma^{\T}\bbV_{*}^{-1}\bbSig\bbV_{*}^{-1}\tgamma & 2\tr\{\bbV_{*}^{-1}\bbD_{\rm p}\}^{-1}\tr\{\bbM_{\rm p}\bbSig_{\rm p}^{\dag}\bbM_{\tau}\bbSig_{\rm p}^{\dag}\}\\
		\bgamma^{\T}\bbV_{*}^{-1}\bbSig\bbV_{*}^{-1}\tgamma & 
		\tgamma^{\T}\bbV_{*}^{-1}\bbSig\bbV_{*}^{-1}\tgamma & 0 \\
		2\tr\{\bbV_{*}^{-1}\bbD_{\rm p}\}^{-1}\tr\{\bbM_{\rm p}\bbSig_{\rm p}^{\dag}\bbM_{\tau}\bbSig_{\rm p}^{\dag}\} & 0 &2\tr\{\bbV_{*}^{-1}\bbD_{\rm p}\}^{-2}\tr\{\bbM_{\tau}\bbSig_{\rm p}^{\dag}\bbM_{\tau}\bbSig_{\rm p}^{\dag}\}
	\end{pmatrix}.
\end{aligned}
\]
be the variance-covariance matrix of $(s_{1, \rm p}, s_{2}, s_{\tau, \rm p})^{\T}$ calculated under the assumption that the components of $\bbeta_{G}$ follows a normal distribution. Define
\begin{equation}\label{eq: Psi p}
	\bbPsi_{\rm p} = 
	\begin{pmatrix}
		\mu_{1}^{-1} & 0\\
		0 & 1\\
		\mu_{1}^{-1}\mu_{\tau, \rm p} & 0
	\end{pmatrix}^{\T}
	\bbOmega_{\rm p}
	\begin{pmatrix}
		\mu_{1}^{-1} & 0\\
		0 & 1\\
		\mu_{1}^{-1}\mu_{\tau, \rm p} & 0
	\end{pmatrix}.
\end{equation}
Let $\bxi^{\dag}$ be a $3q$-dimensional random vector with independent standard normal components and $f_{w, \rm p}(t)$ and $f_{w, \rm p}^{\prime}(t)$ be the characteristic functions of
\[
(\bw^{\T}\bbPsi_{\rm p}\bw)^{-\frac{1}{2}}\left\{w_{1}\mu_{1}^{-1}
\begin{pmatrix}
	\bxi\\
	\taup^{-\frac{1}{2}}\bbeta_{G}
\end{pmatrix}^{\T}
\bbH_{\rm p}
\begin{pmatrix}
	\bxi\\
	\taup^{-\frac{1}{2}}\bbeta_{G}
\end{pmatrix} + \bg_{\rm p}^{\T}
\begin{pmatrix}
	\bxi\\
	\taup^{-\frac{1}{2}}\bbeta_{G}
\end{pmatrix}
\right\}
\]
and
\[
(\bw^{\T}\bbPsi_{\rm p}\bw)^{-\frac{1}{2}}\left(w_{1}\mu_{1}^{-1}
\bxi^{\dag\T}
\bbH_{\rm p}
\bxi^{\dag} 
+ \bg_{\rm p}^{\T}
\bxi^{\dag}
\right),
\]
respectively.
Leveraging Condition \ref{cond: linearity pleiotropy}, Theorem 3.2d.4 in \cite{mathai1992quadratic}, and the proof of Theorem 2 in \cite{rotar1973some}, it can be shown that 
\[
f_{w, \rm p}(t) - f_{w, \rm p}^{\prime}(t) \to 0
\]
in probability. By Condition \ref{cond: linearity pleiotropy}, similar arguments as in the proof of \eqref{eq: characteristic origin} can show that
$f_{w, \rm p}^{\prime}(t) \to \exp\left(-t^{2}/2\right)$
in probability and hence
\[
f_{w, \rm p}(t) \to \exp\left(-\frac{t^{2}}{2}\right)
\]
in probability. Combining this with \eqref{eq: expan tbeta p} and \eqref{eq: dist equiv p}, we obtain the asymptotic normality result in Theorem \ref{thm: AN pleiotropy} by Corollary 3 in \cite{bulinski2017conditional}. A plug-in estimator for $\bbPsi_{\rm p}$ can be constructed similarly to $\hPsi$.
\end{proof}
\begin{lemma}\label{lem: tr inequality}
For matrices $\bbA_{1}$, $\bbA_{2}$, and $\bbA_{3}$ such that $\bbA_{1} - \bbA_{2}$ and $\bbA_{3}$ are non-negative definite, we have $\tr\{\bbA_{1}\bbA_{3}\} \geq \tr\{\bbA_{2}\bbA_{3}\}$.
\end{lemma}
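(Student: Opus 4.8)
\textbf{Proof proposal for Lemma \ref{lem: tr inequality}.}
The plan is to reduce the claimed inequality to the single statement that the trace of a non-negative definite matrix is non-negative, together with the fact that congruence preserves non-negative definiteness. First, by linearity of the trace, it suffices to show that $\tr\{(\bbA_{1} - \bbA_{2})\bbA_{3}\} \geq 0$. Write $\bbM = \bbA_{1} - \bbA_{2}$, which is non-negative definite by hypothesis, and let $\bbA_{3}^{1/2}$ denote the (unique symmetric non-negative definite) square root of $\bbA_{3}$, which exists because $\bbA_{3}$ is non-negative definite.

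Next I would use the cyclic property of the trace to write
\[
\tr\{\bbM\bbA_{3}\} = \tr\{\bbM \bbA_{3}^{1/2}\bbA_{3}^{1/2}\} = \tr\{\bbA_{3}^{1/2}\bbM\bbA_{3}^{1/2}\}.
\]
The matrix $\bbA_{3}^{1/2}\bbM\bbA_{3}^{1/2}$ is of the form $\bbB^{\T}\bbM\bbB$ with $\bbB = \bbA_{3}^{1/2}$ (note $\bbA_{3}^{1/2}$ is symmetric), so for any vector $\bx$ we have $\bx^{\T}\bbA_{3}^{1/2}\bbM\bbA_{3}^{1/2}\bx = (\bbA_{3}^{1/2}\bx)^{\T}\bbM(\bbA_{3}^{1/2}\bx) \geq 0$, i.e.\ $\bbA_{3}^{1/2}\bbM\bbA_{3}^{1/2}$ is non-negative definite. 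Its trace equals the sum of its eigenvalues, all of which are non-negative, so $\tr\{\bbA_{3}^{1/2}\bbM\bbA_{3}^{1/2}\} \geq 0$. Combining the displays yields $\tr\{\bbA_{1}\bbA_{3}\} - \tr\{\bbA_{2}\bbA_{3}\} = \tr\{\bbM\bbA_{3}\} \geq 0$, which is the assertion.

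There is essentially no serious obstacle here; the only point requiring a word of care is that "non-negative definite" is understood to entail symmetry of $\bbM$ and $\bbA_{3}$, which is what makes the square root and the eigenvalue argument legitimate. If one wished to avoid the square root entirely, an alternative is to diagonalize $\bbA_{3} = \bU^{\T}\diag\{\lambda_{1},\dots,\lambda_{d}\}\bU$ with $\lambda_{i} \geq 0$ and compute $\tr\{\bbM\bbA_{3}\} = \sum_{i}\lambda_{i}\,(\bU\bbM\bU^{\T})_{ii}$, where each diagonal entry $(\bU\bbM\bU^{\T})_{ii} = \be_{i}^{\T}\bU\bbM\bU^{\T}\be_{i} \geq 0$ since $\bbM$ is non-negative definite; this gives the same conclusion with only elementary manipulations.
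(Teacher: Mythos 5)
Your proof is correct and takes essentially the same route as the paper's: both conjugate $\bbA_{1} - \bbA_{2}$ by $\bbA_{3}^{1/2}$, invoke the cyclic property of the trace, and conclude from the non-negativity of the trace of a non-negative definite matrix. You merely spell out the intermediate verifications (and sketch a diagonalization alternative) that the paper leaves implicit.
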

\begin{proof}
Note that $\bbA_{3}^{\frac{1}{2}}(\bbA_{1} - \bbA_{2})\bbA_{3}^{\frac{1}{2}}$ is non-negative definite under the assumption of the lemma. Hence
\[
0 \leq \tr\{\bbA_{3}^{\frac{1}{2}}(\bbA_{1} - \bbA_{2})\bbA_{3}^{\frac{1}{2}}\} = \tr\{(\bbA_{1} - \bbA_{2})\bbA_{3}\} = \tr\{\bbA_{1}\bbA_{3}\} - \tr\{\bbA_{2}\bbA_{3}\},
\]
which completes the proof.
\end{proof}
\subsection{Proofs of the Results in Section \ref{subsec: DEEM os}}\label{app: proof of sec os}
\paragraph{Derivation of the relationship $\cov(\hgamma, \hGamma) = \rho_{U}\bbSig_{\gamma}$}
Here we adopt the notation in Section \ref{app: proof of sec setup} and analyze under the fixed design or equivalently conditional on $\{\bG_{i}\}_{i=1}^{n}$. Then we have $\hgamma = \bbPhi \bbG^{\T}\bbX$ and $\hGamma = \bbPhi \bbG^{\T}\bbY$. Let $\mathbf{e}_{X} = (\epsilon_{X1},\dots, \epsilon_{Xn})^{\T}$, $\mathbf{e}_{Y} = (\epsilon_{Y1},\dots, \epsilon_{Yn})^{\T}$, $\bU_{g} = (g(\bU_{1}) - E[g(\bU)],\dots, g(\bU_{n}) - E[g(\bU)])^{\T}$, and $\bU_{f} = (f(\bU_{1}) - E[f(\bU)],\dots, f(\bU_{n}) - E[f(\bU)])^{\T}$, where $\epsilon_{Xi}$,  $\epsilon_{Yi}$ and $\bU_{i}$ are the error terms in $X_{i}$, $Y_{i}$ and the vector of unmeasured confoundes for the $i$th observation. Under model \eqref{eq: pleiotropy outcome model}, we have $\hgamma - E(\hgamma) = \bbPhi \bbG^{\T}(\bU_{g} + \mathbf{e}_{X})$ and $\hGamma - E(\hGamma) = \bbPhi \bbG^{\T}(\bU_{f} + \bU_{g}\betax + \mathbf{e}_{Y} + \mathbf{e}_{X}\betax)$. Then we have $\bbSig_{\gamma} = \var(\hgamma) = \{\var(g(\bU)) + \var(\epsilon_{X})\}\bbPhi\bbG^{\T}\bbG\bbPhi$ and $\cov(\hgamma,\hGamma) = \{\cov(g(\bU), f(\bU)) + \betax \var(g(\bU)) + \betax\var(\epsilon_{X})\}\bbPhi\bbG^{\T}\bbG\bbPhi$. This proves the relationship $\cov(\hgamma, \hGamma) = \rho_{U}\bbSig_{\gamma}$.

\paragraph{Proof of the unbiasedness of EE $\{\hgamma - \bbQ(\beta; \taup, \rho_{U})(\hGamma - \beta \hgamma)\}^{\T} \bbV^{-1}(\hGamma - \beta \hgamma) = 0$}
\begin{proof}
Recall that $\cD$ is the set of all diagonal matrices. By definition, we have
$\tr\{\bbD \bbV^{-1}(\bbSig_{\gamma} - \bbD_{\gamma})\} = 0$, $\tr\{\bbD \bbV^{-1}(\bbSig_{\gamma} - \bbD_{\Gamma})\} = 0$ and $\tr\{\bbD \bbV^{-1}(\bbSig_{\rm p} - \bbD_{\rm p})\} = 0$ for any $\bbD\in\cD$. Thus,
\begin{equation*}
	\begin{aligned}
		& E[\{\hgamma - \bbQ(\betax; \taup, \rho_{U})(\hGamma - \betax \hgamma)\}^{\T} \bbV^{-1}(\hGamma - \betax \hgamma)] \\
		& = 
		(\rho_{U} -\betax)\tr\{\bbV^{-1}\bbSig_{\gamma}\} - \tr\{\bbQ(\betax; \taup, \rho_{U})^{\T}\bbV^{-1}(\bbSig_{\Gamma} + \taup\bbSig_{\rm p} + (\betax^{2} - 2\rho_{U}\betax) \bbSig_{\gamma})\} \\
		& = (\rho_{U} -\betax)\tr\{\bbV^{-1}\bbD_{\gamma}\} - \tr\{\bbQ(\betax; \taup, \rho_{U})^{\T}\bbV^{-1}(\bbD_{\Gamma} + \taup\bbD_{\rm p} + (\betax^{2} - 2\rho_{U}\betax) \bbD_{\gamma})\},
	\end{aligned}
\end{equation*}
where the last equality holds because $\bbQ(\betax; \taup, \rho_{U})$ is diagonal. Recall that $\bbQ(\betax; \taup, \rho_{U}) = (\rho_{U} - \betax) \bbD_{\gamma}(\bbD_{\Gamma} + \taup \bbD_{\rm p} + (\betax^{2} - 2\rho_{U}\betax) \bbD_{\gamma})^{-1}$. Thus
\begin{equation*}
	\begin{aligned}
		& E\{\{\hgamma - \bbQ(\betax)(\hGamma - \betax \hgamma)\}^{\T} \bbV^{-1}(\hGamma - \betax \hgamma)\} \\
		& =  (\rho_{U} -\betax)\tr\{\bbV^{-1}\bbD_{\gamma}\} - \tr\{\bbQ(\betax; \taup, \rho_{U})^{\T}\bbV^{-1}(\bbD_{\Gamma} + \taup\bbD_{\rm p} + (\betax^{2} - 2\rho_{U}\betax) \bbD_{\gamma})\} \\
		& = (\rho_{U} -\betax)\tr\{\bbV^{-1}\bbD_{\gamma}\} - 
		(\rho_{U} -\betax)\tr\{\bbV^{-1}\bbD_{\gamma}\}\\
		& = 0.
	\end{aligned}
\end{equation*}
\end{proof}

\paragraph{Proof of Theorem \ref{thm: AN os}}
\begin{proof}
The existence and uniqueness of the solution $\hbeta_{1, \rm os}$ can be established similarly to that of $\hbeta_{1}$ in the proof of Theorem \ref{thm: AN origin}. Next, we prove the asymptotic normality of $\enbeta_{\rm DEEM, os}$.

Define $s_{\rho}(\beta, \rho) = \tr\{\bbV_{*}\bbD_{\gamma}\}^{-1}[\hgamma^{\T}\bbV_{*}^{-1}(\hGamma - \beta\hgamma) - (\rho - \beta)\tr\{\bbV_{*}^{-1}\bbD_{\gamma}\}]$, $s_{\tau, \rm os}(\beta, \tau, \rho) = \tr\{\bbV_{*}^{-1}\bbD_{\rm p}\}^{-1}[(\hGamma - \beta \hgamma)^{\T}\bbV_{*}^{-1}(\hGamma - \beta \hgamma) - \tr\{\bbV_{*}^{-1}(\bbD_{\Gamma} + (\beta^{2} - 2 \rho\betax)\bbD_{\gamma} + \tau \bbD_{\rm p})\}]$ and $s_{1, \rm os}(\beta, \tau, \rho) = \{\hgamma - \bbQ(\beta; \tau, \rho)(\hGamma - \beta\hgamma)\}^{\T}\bbV_{*}^{-1}(\hGamma - \beta\hgamma)$. Denote $s_{\rho}(\betax, \rho_{U})$, $s_{\tau, \rm os}(\betax, \taup, \rho_{U})$, and $s_{1, \rm os}(\betax, \taup, \rho_{U})$ as $s_{\rho}$, $s_{\tau, \rm os}$, and $s_{1, \rm os}$, respectively. According to Conditions \ref{cond: matrices origin}, \ref{cond: selection}, \ref{cond: angle}, \ref{cond: matrices pleiotropy}, \ref{cond: bound pleiotropy}, and \ref{cond: matrices os}, arguments similar to the proof of \eqref{eq: expan htaup} and \eqref{eq: expan hbeta1 p} can show that
\begin{align}
	&\hbeta_{2} - \betax = \mu_{2}^{-1}s_{2} + o_{P}(\hbeta_{2} - \betax), \label{eq: expan hbeta2 os}\\
	&\hat{\rho} - \rho_{U} = s_{\rho} + \mu_{\beta}(\hbeta_{2} - \betax) + o_{P}(\hat{\rho} - \rho_{U})
	= s_{\rho} + \mu_{\beta}\mu_{2}^{-1}s_{2} + o_{P}(\hat{\rho} - \rho_{U}), \nonumber\\
	&\htau_{\rm os} - \taup = s_{\tau, \rm os} + \mu_{\rho, \tau}(\hat{\rho} - \rho_{U}) + o_{P}(\htau_{\rm os} - \taup),\nonumber\\
	&\phantom{\htau_{\rm os} - \taup} 
	= s_{\tau, \rm os} + \mu_{\rho, \tau}s_{\rho} + \mu_{\rho, \tau}\mu_{\beta}\mu_{2}^{-1}s_{2} + o_{P}(\htau_{\rm os} - \taup),
\end{align}
and
\begin{equation}\label{eq: expan hbeta1 os}
	\begin{aligned}
		\hbeta_{1, \rm os} - \betax 
		& = \mu_{1}^{-1}\{s_{1} + \mu_{\tau, \rm os}(\htau_{\rm os} - \taup) + \mu_{\rho}(\hat{\rho} - \rho_{U})\} + o_{P}(\hbeta_{1} - \betax)\\
		& = \mu_{1}^{-1}\{s_{1} + \mu_{\tau, \rm os}s_{\tau, \rm os} + (\mu_{\tau, \rm os}\mu_{\rho, \tau} + \mu_{\rho})s_{\rho} + (\mu_{\tau, \rm os}\mu_{\rho, \tau} + \mu_{\rho})\mu_{\beta}\mu_{2}^{-1}s_{2}\}\\
		& \quad  + o_{P}(\hbeta_{1} - \betax).
	\end{aligned}
\end{equation}
where $\mu_{\beta}$, $\mu_{\tau, \rm os}$, $\mu_{\rho, \tau}$, and $\mu_{\rho}$ are defined before Condition \ref{cond: linearity os}.
Combining \eqref{eq: expan hbeta2 os} and \eqref{eq: expan hbeta1 os}, we have
\begin{equation}\label{eq: expan tbeta os}
	\begin{aligned}
		\enbeta_{\rm DEEM, os} - \betax 
		&= w_{1}\mu_{1}^{-1}\{s_{1} + \mu_{\tau, \rm os}s_{\tau, \rm os} + (\mu_{\tau, \rm os}\mu_{\rho, \tau} + \mu_{\rho})s_{\rho}\}\\
		& \quad  + \{w_{1}(\mu_{\tau, \rm os}\mu_{\rho, \tau} + \mu_{\rho})\mu_{\beta} + w_{2}\}\mu_{2}^{-1}s_{2} + o_{P}(\hbeta_{1} - \betax).
	\end{aligned}
\end{equation}
By \eqref{eq: normal model}, we have
\begin{equation}\label{eq: dist equiv os}
	\begin{aligned}
		&w_{1}\mu_{1}^{-1}\{s_{1} + \mu_{\tau, \rm os}s_{\tau, \rm os} + (\mu_{\tau, \rm os}\mu_{\rho, \tau} + \mu_{\rho})s_{\rho}\} + \{w_{1}(\mu_{\tau, \rm os}\mu_{\rho, \tau} + \mu_{\rho})\mu_{\beta} + w_{2}\}\mu_{2}^{-1}s_{2} \\
		&\stackrel{d}{=}
		w_{1}\mu_{1}^{-1}
		\begin{pmatrix}
			\bxi\\
			\taup^{-\frac{1}{2}}\bbeta_{G}
		\end{pmatrix}^{\T}
		\bbH_{\rm os}
		\begin{pmatrix}
			\bxi\\
			\taup^{-\frac{1}{2}}\bbeta_{G}
		\end{pmatrix} + \bg_{\rm os}^{\T}
		\begin{pmatrix}
			\bxi\\
			\taup^{-\frac{1}{2}}\bbeta_{G}
		\end{pmatrix},
	\end{aligned}
\end{equation}
where $\bg_{\rm os} = w_{1}\bl_{1\rm os} + w_{2}\bl_{2\rm os}$, $\bbH_{\rm os}$ is defined before Condition \ref{cond: linearity os}.

Define
\[
\begin{aligned}
	\bbSig_{\rm os}^{\dag} & = 
	\begin{pmatrix}
		\bbSig_{\gamma} & (\rho_{U} - \betax)\bbSig_{\gamma}\\
		(\rho_{U} - \betax)\bbSig_{\gamma} & \bbSig_{\Gamma} + (\betax^{2} - 2\rho_{U}\betax)\bbSig_{\gamma} +\taup\bbSig_{\rm p}
	\end{pmatrix},\\
	\bbM_{\rm os} & = \frac{1}{2}
	\begin{pmatrix}
		\mathbf{0} & \bbV_{*}^{-1}\\
		\bbV_{*}^{-1} & - \bbQ(\betax; \taup, \rho_{U})^{\T}\bbV_{*}^{-1} - \bbV_{*}^{-1}\bbQ(\betax; \taup, \rho_{U})
	\end{pmatrix},\\
	\bbM_{\rho} & = 
	\frac{1}{2}\begin{pmatrix}
		\mathbf{0} & \bbV_{*}\\
		\bbV_{*} & 0
	\end{pmatrix}.
\end{aligned}
\]
Let
\begin{equation}\label{eq: Psi os}
	\begin{aligned}
		\bbOmega_{\rm os} =
		\begin{pmatrix}
			\omega_{11} &  \omega_{12} & \omega_{13} 
			& \omega_{14}\\
			\omega_{21} & 
			\omega_{22} & 0 & \omega_{24}\\
			\omega_{31} & 0 & \omega_{33} & \omega_{34}\\
			\omega_{41} & \omega_{42} & \omega_{43} & \omega_{44}
		\end{pmatrix}.
	\end{aligned}
\end{equation}
be the variance-covariance matrix of $(s_{1, \rm os}, s_{2}, s_{\tau, \rm os})^{\T}$ calculated under the assumption that the components of $\bbeta_{G}$ follows a normal distribution, where 
\[
\begin{aligned}
	\omega_{11} & = \bgamma^{\T}\bbV_{*}^{-1}\bbSig\bbV_{*}^{-1}\bgamma + 2\tr\{\bbM_{\rm os}\bbSig_{\rm os}^{\dag}\bbM_{\rm os}\bbSig_{\rm os}^{\dag}\},\\
	\omega_{12} & = \omega_{21} = \bgamma^{\T}\bbV_{*}^{-1}\bbSig\bbV_{*}^{-1}\tgamma\\
	\omega_{13} & = \omega_{31} = 2\tr\{\bbV_{*}^{-1}\bbD_{\rm p}\}^{-1}\tr\{\bbM_{\rm os}\bbSig_{\rm os}^{\dag}\bbM_{\tau}\bbSig_{\rm os}^{\dag}\},\\
	\omega_{14} & = \omega_{41} = \tr\{\bbV_{*}^{-1}\bbD_{\gamma}\}^{-1}[\bgamma^{\T}\bbV_{*}^{-1}\bbSig\bbV_{*}^{-1}\bgamma + 2\tr\{\bbM_{\rm os}\bbSig_{\rm os}^{\dag}\bbM_{\rho}\bbSig_{\rm os}^{\dag}\}],\\
	\omega_{22} & = \tgamma^{\T}\bbV_{*}^{-1}\bbSig\bbV_{*}^{-1}\tgamma\\
	\omega_{24} & = \omega_{42} = \bgamma^{\T}\bbV_{*}^{-1}\bbSig\bbV_{*}^{-1}\tgamma\\
	\omega_{33} & = 2\tr\{\bbV_{*}^{-1}\bbD_{\rm p}\}^{-2}\tr\{\bbM_{\tau}\bbSig_{\rm os}^{\dag}\bbM_{\tau}\bbSig_{\rm os}^{\dag}\},\\
	\omega_{34} & = 2\tr\{\bbV_{*}^{-1}\bbD_{\rm p}\}^{-1}\tr\{\bbV_{*}^{-1}\bbD_{\rm \gamma}\}^{-1} \tr\{\bbM_{\tau}\bbSig_{\rm os}^{\dag}\bbM_{\rho}\bbSig_{\rm os}^{\dag}\},\\
	\omega_{44} & = \tr\{\bbV_{*}^{-1}\bbD_{\rm \gamma}\}^{-2}[\bgamma^{\T}\bbV_{*}^{-1}\bbSig\bbV_{*}^{-1}\bgamma + 2\tr\{\bbM_{\rho}\bbSig_{\rm os}^{\dag}\bbM_{\rho}\bbSig_{\rm os}^{\dag}\}].
\end{aligned}
\]
Define
\[
\bbPsi_{\rm os} = 
\begin{pmatrix}
	\mu_{1}^{-1} & 0\\
	(\mu_{\tau, \rm os}\mu_{\rho, \tau} + \mu_{\rho})\mu_{\beta}\mu_{1}^{-1}\mu_{2}^{-1} & 1\\
	\mu_{1}^{-1}\mu_{\tau, \rm os}
	& 0\\
	\mu_{1}^{-1}(\mu_{\tau, \rm os}\mu_{\rho, \tau} + \mu_{\rho}) & 0
\end{pmatrix}^{\T}
\bbOmega_{\rm p}
\begin{pmatrix}
	\mu_{1}^{-1} & 0\\
	(\mu_{\tau, \rm os}\mu_{\rho, \tau} + \mu_{\rho})\mu_{\beta}\mu_{1}^{-1}\mu_{2}^{-1} & 1\\
	\mu_{1}^{-1}\mu_{\tau, \rm os}
	& 0\\
	\mu_{1}^{-1}(\mu_{\tau, \rm os}\mu_{\rho, \tau} + \mu_{\rho}) & 0
\end{pmatrix}.
\]
Let $\bxi^{\dag}$ be a $3q$-dimensional random vector with independent standard normal components and $f_{w, \rm os}(t)$ and $f_{w, \rm os}^{\prime}(t)$ be the characteristic functions of
\[
(\bw^{\T}\bbPsi_{\rm os}\bw)^{-\frac{1}{2}}\left\{w_{1}\mu_{1}^{-1}
\begin{pmatrix}
	\bxi\\
	\taup^{-\frac{1}{2}}\bbeta_{G}
\end{pmatrix}^{\T}
\bbH_{\rm os}
\begin{pmatrix}
	\bxi\\
	\taup^{-\frac{1}{2}}\bbeta_{G}
\end{pmatrix} + \bg_{\rm os}^{\T}
\begin{pmatrix}
	\bxi\\
	\taup^{-\frac{1}{2}}\bbeta_{G}
\end{pmatrix}
\right\}
\]
and
\[
(\bw^{\T}\bbPsi_{\rm os}\bw)^{-\frac{1}{2}}\left(w_{1}\mu_{1}^{-1}
\bxi^{\dag\T}
\bbH_{\rm os}
\bxi^{\dag} 
+ \bg_{\rm os}^{\T}
\bxi^{\dag}
\right),
\]
respectively.
By Condition \ref{cond: linearity os}, Theorem 3.2d.4 in \cite{mathai1992quadratic}, and the proof of Theorem 2 in \cite{rotar1973some}, it can be shown that 
\[
f_{w, \rm os}(t) - f_{w, \rm os}^{\prime}(t) \to 0
\]
in probability. By Condition \ref{cond: linearity os}, similar arguments as in the proof of \eqref{eq: characteristic origin} can show that
$f_{w, \rm os}^{\prime}(t) \to \exp\left(-t^{2}/2\right)$
in probability and hence
\[
f_{w, \rm os}(t) \to \exp\left(-\frac{t^{2}}{2}\right)
\]
in probability. Combining this with \eqref{eq: expan tbeta os} and \eqref{eq: dist equiv os}, we obtain the asymptotic normality result in Theorem \ref{thm: AN os} according to Corollary 3 in \cite{bulinski2017conditional}. A plug-in estimator for $\bbPsi_{\rm os}$ can be constructed similarly to $\hPsi$.
\end{proof}

\section{Estimation based on Summary Statistics}\label{app: est summary}
\subsection{Estimate $\bbSig_{\gamma}$, $\bbSig_{\Gamma}$, $\bbD_{\gamma}$, $\bbD_{\Gamma}$, and $\bbSig_{\rm p}$}\label{app: est Sig}
In this subsection, we introduce the summary statistics based estimators for $\bbSig_{\gamma}$ and $\bbSig_{\Gamma}$. Let $(X_{1}, \bG_{1}^{(e)}), \dots, (X_{n_{e}}, \bG_{n_{e}}^{(e)})$ be independent and identically distributed copies of $(X, \bG)$ from the exposure sample, $(Y_{1}, \bG_{1}^{(o)}), \dots, (Y_{n_{o}}, \bG_{n_{o}}^{(o)})$ be independent and identically distributed copies of $(Y, \bG)$ from the outcome sample, where $n_{e}$ and $n_{o}$ are the sample sizes of the exposure and the outcome sample, respectively. The exposure and outcome samples are independent of each other in the two sample setting and they are the same sample in the one-sample setting. 
Let $\bbY = (Y_{1} - \bar{Y},\dots, Y_{n} - \bar{Y})^{\T}$, $\bbX = (X_{1} - \bar{X},\dots, X_{n} - \bar{X})^{\T}$, $\bbG^{(e)} = (\bG_{1}^{(e)} - \bar{\bG}^{(e)},\dots, \bG_{n_{e}}^{(e)} - \bar{\bG}^{(e)})^{\T}$, and $\bbG^{(o)} = (\bG_{1}^{(o)} - \bar{\bG}^{(o)},\dots, \bG_{n_{o}}^{(o)} - \bar{\bG}^{(o)})^{\T}$ be the centered outcome vector, exposure vector, and the genotype matrices, respectively, where $\bar{Y} = \sum_{i=1}^{n_{o}}Y_{i}/n_{o}$, $\bar{X} = \sum_{i=1}^{n_{e}}X_{i}/n_{e}$, and $\bar{\bG}^{(e)} = \sum_{i=1}^{n_{e}}\bG_{i}^{(e)}/n_{e}$, and $\bar{\bG}^{(o)} = \sum_{i=1}^{n_{o}}\bG_{i}^{(o)}/n_{o}$. 
For $j = 1,\dots,p$, the marginal regression coefficients $\widehat{\gamma}_{j} = \bbG_{\cdot j}^{(e)\T}\bbX / (\bbG_{\cdot j}^{(e)\T}\bbG_{\cdot j}^{(e)})$ and $\widehat{\Gamma}_{j} = \bbG_{\cdot j}^{(o)\T}\bbY / (\bbG_{\cdot j}^{(o)\T}\bbG_{\cdot j}^{(o)\T})$ of $\bbX$, $\bbY$ on $\bbG_{\cdot j}^{(e)}$, $\bbG_{\cdot j}^{(o)}$ are available from GWAS summary statistics,  where $\bbG_{\cdot j}^{(e)}$, $\bbG_{\cdot j}^{(o)}$ is the $j$th column of $\bbG^{(e)}$, $\bbG^{(o)}$, respectively.  Besides the marginal regression coefficients $\widehat{\gamma}_{j}$ and $\widehat{\Gamma}_{j}$, the variance estimates
$\widehat{\sigma}_{\gamma j}^{2} = n_{e}^{-1}(\bbG_{\cdot j}^{(e)\T}\bbG_{\cdot j}^{(e)})^{-1}(\bbX - \bbG_{\cdot j}^{(e)} \widehat{\gamma}_{j})^{\T}(\bbX - \bbG_{\cdot j}^{(e)} \widehat{\gamma}_{j})$ and 
$\widehat{\sigma}_{\Gamma j}^{2} = n_{o}^{-1}(\bbG_{\cdot j}^{(o)\T}\bbG_{\cdot j}^{(o)})^{-1}(\bbY - \bbG_{\cdot j}^{(o)} \widehat{\Gamma}_{j})^{\T}(\bbY - \bbG_{\cdot j}^{(o)} \widehat{\Gamma}_{j})$ are also available from GWAS summary statistics for $j = 1,\dots, d$ \citep{zhu2017bayesian}. In addition, the block diagonal approximation $\hR_{\rm ref}$ of the correlations matrix among SNPs  (``LD matrix") can be obtained from public reference databases such as the 1000 Genomes Project \citep{10002010map}. 

According to Equation (2.4) in \cite{zhu2017bayesian}, $\hgamma$ approximately follows $N(\bgamma, \bbS_{\gamma}\bbR\bbS_{\gamma})$ and $\hGamma$ approximately follows $N(\bGamma, \bbS_{\Gamma}\bbR\bbS_{\Gamma})$, where $\bbR$ is the population LD matrix, 
\[
\bbS_{\gamma} = \diag\{\sqrt{n_{e}^{-1}\var(G_{1})^{-1}\var(X)}, \dots, \sqrt{n_{e}^{-1}\var(G_{d})^{-1}\var(X)}\},
\] 
and 
\[
\bbS_{\Gamma} = \diag\{\sqrt{n_{o}^{-1}\var(G_{1})^{-1}\var(Y)}, \dots, \sqrt{n_{o}^{-1}\var(G_{d})^{-1}\var(Y)}\}.
\]
In this paper, we assume $\hgamma$ and $\hGamma$ follow exactly the normal distribution \eqref{eq: normal model} with $\bbSig_{\gamma} = \bbS_{\gamma}\bbR\bbS_{\gamma}$ and $\bbSig_{\Gamma} = \bbS_{\Gamma}\bbR\bbS_{\Gamma}$.
Define $\hS_{\gamma}^{2} = \diag\{\widehat{\sigma}_{\gamma 1}^{2} + n_{e}^{-1}\widehat{\gamma}_{1}^{2}, \dots, \widehat{\sigma}_{\gamma d}^{2} + n_{e}^{-1}\widehat{\gamma}_{d}^{2}\}$
and $\hS_{\Gamma}^{2} = \diag\{\widehat{\sigma}_{\Gamma 1}^{2} + n_{o}^{-1}\widehat{\Gamma}_{1}^{2}, \dots, \widehat{\sigma}_{\Gamma d}^{2} + n_{o}^{-1}\widehat{\Gamma}_{d}^{2}\}$. 
Notice that $n_{e}\widehat{\sigma}_{\gamma j}^{2} + \widehat{\gamma}_{j}^{2} = \bbX^{\T}\bbX/\bbG_{\cdot j}^{(e)\T}\bbG_{\cdot j}^{(e)} \to \var(G_{j})^{-1}\var(X)$ and  $n_{o}\widehat{\sigma}_{\Gamma j}^{2} + \widehat{\Gamma}_{j}^{2} = \bbY^{\T}\bbY/\bbG_{\cdot j}^{(o)\T}\bbG_{\cdot j}^{(o)} \to \var(G_{j})^{-1}\var(Y)$ in probability.
Thus, $\bbSig_{\gamma}$ and $\bbSig_{\Gamma}$ can be estimated by $\hSig_{\gamma} = \hS_{\gamma}\hR_{\rm ref}\hS_{\gamma}$ and $\hSig_{\Gamma} = \hS_{\Gamma}\hR_{\rm ref}\hS_{\Gamma}$, respectively. Based on 
$\hSig_{\gamma}$ and $\hSig_{\Gamma}$, the estimates $\hD_{\gamma}$, $\hD_{\Gamma}$ for $\bbD_{\gamma}$, $\bbD_{\Gamma}$ can be obtained according to the formulation provided in Section \ref{subsec: orDEEM} in the main text. 

We then show the convergence result of $\hD_{\gamma}$ and $\hD_{\Gamma}$ in Remark \ref{remark: diag rate} in Section \ref{subsec: orDEEM} of the main text. We state the result for $\hD_{\gamma}$ and the result for $\hD_{\Gamma}$ follows similarly.
Recall that \[
\bbD_{\gamma} = \diag\left\{[\bbV^{-1}\bbSig_{\gamma}]_{11}/[\bbV^{-1}]_{11},\dots, [\bbV^{-1}\bbSig_{\gamma}]_{11}/[\bbV^{-1}]_{dd}\right\}
\]
and  
\[
\hD_{\gamma} = 
\diag\left\{[\bbV^{-1}\hSig_{\gamma}]_{11}/[\bbV^{-1}]_{11}, \dots, [\bbV^{-1}\hSig_{\gamma}]_{dd}/[\bbV^{-1}]_{dd}\right\}.
\]
Since both $\hD_{\gamma}$ and $\bbD_{\gamma}$ are diagonal matrices, we have
\[
\|\hD_{\gamma} - \bbD_{\gamma}\| = \max_{j}\left\{\frac{|[\bbV^{-1}(\hSig_{\gamma} - \bbSig_{\gamma})]_{jj}|}{[\bbV^{-1}]_{jj}}\right\} \leq s(\bbV) \max_{i, j}|[\hSig_{\gamma} - \bbSig_{\gamma}]_{ij}|,
\] 
where $s(\bbV) = \max_{j} \sum_{i=1}^{d}|[\bbV^{-1}]_{ij}|/[\bbV^{-1}]_{jj}$ and the $\max$ is taken over $\{1, \dots, d\}$. The factor $s(\bbV)$ can be upper bounded by a universal constant if the off-diagonal elements of $\bbV^{-1}$ decay exponentially. The exponential decay assumption can be easily met as $\bbV$ is user-specified. Then, the convergence rate in the main text is established if we show that 
\begin{equation}\label{eq: element-wise convergence}
\max_{i, j}|[\hSig_{\gamma} - \bbSig_{\gamma}]_{ij}| = O_{P}\left(\max_{j}[\bbSig_{\gamma}]_{jj}\sqrt{\frac{\log d}{n_{\rm ref}}}\right),
\end{equation}
where $n_{\rm ref}$ is the size of the reference sample.
Suppose $X$ is subgaussian and $\var(G_{j})$ is bounded away from zero for $j = 1,\dots, d$. Considering that SNPs are bounded random variables, the Bernstein inequality combined with the union bound \citep{wainwright2019high} can show that $\max_{j}|[\hS_{\gamma} - \bbS_{\gamma}]_{jj} / [\bbS_{\gamma}]_{jj}| = O_{P}(\sqrt{\log d / n_{e}})$ and $\max_{i,j}|[\hR_{\rm ref} - \bbR]_{ij}| = O_{P}(\sqrt{\log d / n_{\rm ref}})$. Thus, 
\[
\max_{i,j}|[\hSig_{\gamma} - \bbSig_{\gamma}]_{ij}| = O_{P}(\max_{j}[S_{\gamma}]_{jj}^{2}\sqrt{\log d / n_{\rm ref}}) = O_{P}(\max_{j}[\bbSig_{\gamma}]_{jj}\sqrt{\log d / n_{\rm ref}}),
\] 
provided $n_{\rm ref} = O(n_{e})$, which establishes the convergence rate result \eqref{eq: element-wise convergence}.

The convergence rate of the spectral norm $\|\hSig_{\gamma} - \bbSig_{\gamma}\|$ is $\|\bbSig_{\gamma}\|\sqrt{d / n_{\rm ref}}$ according to Theorem 1 in \cite{koltchinskii2017concentration}. Similar results can be established for $\hD_{\Gamma}$ and $\hSig_{\Gamma}$. 

Next, we move on to the estimation of $\bbSig_{\rm p}$. Note that 
\[
\bbP = \diag\{\var(G_{1})^{-1/2}, \dots, \var(G_{d})^{-1/2}\}\bbR\diag\{\var(G_{1})^{1/2}, \dots, \var(G_{d})^{1/2}\}.
\]
According to the above arguments, we have $n_{o}\hS_{\Gamma}^{2} \approx \var(Y)\diag\{\var(G_{1}^{2})^{-1}, \dots, \var(G_{d}^{2})^{-1}\}$ for sufficiently large $n_{o}$. Thus
$\bbP$ can be estimated by $\hS_{\Gamma}\hR_{\rm ref}\hS_{\Gamma}^{-1}$.
Since $\bbSig_{\rm p} = \bbP\bbP^{\T}$, we can estimate  $\bbSig_{\rm p}$ using
\[
\hSig_{\rm p} = \hS_{\Gamma}\hR_{\rm ref}\hS_{\Gamma}^{-2}\hR_{\rm ref}\hS_{\Gamma}.
\]

\subsection{Construction of the Working Covariance Matrix $\bbV$}\label{app: construction of V}
Recall that the optimal choice for $\bbV$ is the variance-covariance matrix $\bbSig$ of $\hGamma - \betax\hgamma$. In general, $\bbSig$ can depend on the causal effect $\betax$ and the unknown quantities $\betax$, $\taup$, and $\rho_{U}$ defined in Sections \ref{subsec: DEEM p} and \ref{subsec: DEEM os}, respectively. To avoid the additional uncertainty in estimating these unknown quantities, we construct the working covariance matrix to approximate $\bbSig$ under the setting where $\betax = \taup = \rho_{U} = 0$, which corresponding to the two sample setting without the causal effect or direct effects. Then, we have $\bbSig = \bbSig_{\Gamma}$. According to the discussion in Section \ref{app: est Sig}, $\bbSig_{\Gamma}$ can be estimated by $\hSig_{\Gamma} = \hS_{\Gamma}\hR_{\rm ref}\hS_{\Gamma}$. However, $\hSig_{\Gamma}$ can be ill-conditioned due to the (nearly) singularity of $\hR_{\rm ref}$. To mitigate this problem, we replace $\hR_{\rm ref}$ by its regularized blockwise diagonal approximation $\hR$ defined in Remark \ref{remark: choice V} in the main text and set $\bbV = \bbV_{d}\hR\bbV_{d}$ with $\bbV_{d} = \hS_{\Gamma}$.

Next, we show that the above working covariance matrix satisfies Condition \ref{cond: matrices origin} in Section \ref{app: notations and conds} under regularity conditions. Recall that there are $K$ LD blocks, $d_{k}$ is the size of the $k$-th LD block, $\hR_{{\rm ref}, k}$ is the estimated correlation matrix for the $k$-th LD block obtained from a reference sample for $k = 1,\dots, K$, $\hR = \diag\{\hR_{1}, \dots, \hR_{K}\}$ with $\hR_{k} = \hR_{{\rm ref}, k} + c \bbI_{k}$ for some constant $c$ and $k = 1,\dots, K$. 
Recall that $\bbV = \hS_{\Gamma}\hR\hS_{\Gamma}$ and the population counterpart of $\hS_{\Gamma}$ is
\[
\bbS_{\Gamma} = \diag\{\sqrt{n_{o}^{-1}\var(G_{1})^{-1}\var(Y)}, \dots, \sqrt{n_{o}^{-1}\var(G_{d})^{-1}\var(Y)}\}.
\]
Let $\bbV_{*} = \bbS_{\Gamma}\bbR_{*}\bbS_{\Gamma}$  where $\bbR_{*} = \diag\{\bbR_{*, 1}, \dots, \bbR_{*, K}\}$, $\bbR_{*, k} = \bbR_{{\rm pop}, k} + c\bbI_{k}$, and $\bbR_{{\rm pop}, k}$ is the population LD matrix of the $k$-th block for $k = 1,\dots, K$. Suppose $\|\bbR_{{\rm pop}, k}\| \leq C$ for some constant $C$. The condition number of $\bbR_{*, k}$ is bounded by $C/c$. Suppose $\var(Y)$ is bounded and $\var(G_{j})$ is bounded away from zero for $j = 1,\dots, d$. Note that $\var(G_{j})$ is uniformly bounded for $j = 1,\dots, d$ as SNPs are bounded random variables. Then, $\|\bbS_{\Gamma}\| \asymp n^{-1/2}$ according to its definition. Hence, Condition \ref{cond: matrices origin} (ii) is satisfied by the above defined $\bbV_{*}$. Because $\hS_{\Gamma}$ and $\bbS_{\Gamma}$ are diagonal matrices, the Bernstein inequality combined with the union bound \citep{wainwright2019high} can show that $\|\hS_{\Gamma} - \bbS_{\Gamma}\| / \|\bbS_{\Gamma}\| = O_{P}(\sqrt{\log d / n_{e}})$. Furthermore, it can be shown that $\max_{k}\|\hR_{k} - \bbR_{*, k}\| = O_{P}(\sqrt{\max_{k}d_{k} / n_{\rm ref}} + \sqrt{\log K / n_{\rm ref}})$ according to Theorem 1 in \cite{koltchinskii2017concentration}, where $n_{\rm ref}$ is the size of the reference sample. Suppose 
\[
\max\{\sqrt{\log d / n_{e}}, \sqrt{\max_{k} d_{k} / n_{\rm ref}}, \sqrt{\log K / n_{\rm ref}}\} = O(n^{- \kappa})
\]
for some $\kappa > 0$. Then, we have $\|\bbV^{-1} - \bbV_{*}^{-1}\| / \|\bbV^{-1}\| = O_{P}(n^{-\kappa})$ and Condition \ref{cond: matrices origin} (iv) is satisfied.

\subsection{Adjusting for covariates}\label{app: adjust for covariate}
Suppose the outcome model involves some covariates, i.e.,
\begin{equation*}
Y = \beta_{0} + X\betax + \boldsymbol{C}^{\T}\boldsymbol{\beta}_{c} + f(\bU)  + \epsilon_{Y},
\end{equation*}
where $\boldsymbol{C}$ is a vector of covariates and $(\bG^{\T}, \boldsymbol{C}^{\T})^{\T}\Perp \bU$. Then, if we redefine $\widehat{\gamma}_{j}$ and $\widehat{\Gamma}_{j}$ as the regression coefficients of $G_{j}$ under the marginal working models 
\[X = \gamma_{0j} + G_{j}\gamma_{j} + \boldsymbol{C}^{\T}\boldsymbol{\gamma}_{C} + \epsilon_{X,j}
\]
and 
\[
Y = \Gamma_{0j} + G_{j}\Gamma_{j} + \boldsymbol{C}^{\T}\boldsymbol{\Gamma}_{C} + \epsilon_{Y,j}
\]
for $j = 1,\dots, d$. Define $\gamma_{j}$ and $\Gamma_{j}$ as the probability limits of $\widehat{\gamma}_{j}$ and $\widehat{\Gamma}_{j}$, respectively. Letting $\bgamma = (\gamma_{1},\dots,\gamma_{d})^{\T}$ and $\bGamma = (\Gamma_{1},\dots,\Gamma_{d})^{\T}$, we have $\bGamma = \betax \bgamma$ and the derivations and results in this paper still hold in the presence of covariate.

\section{Details of Simulation Studies in Section \ref{sec: sim} and Some Additional Simulation}
\subsection{Generation of Genotype data in Section \ref{sec: sim}}\label{app: gen genotype}
The genotype data used in Section \ref{sec: sim}, comprising 120,000 subjects, was adopted from a prior study from our group \citep{zhang2023new}, generated via HAPGEN2 \citep{su2011hapgen2} based on the 1000 Genomes Project (phase 3) \citep{10002015global} with a European reference panel. Given the computational intensity required to generate genotype data anew via Hapgen2, we utilized the existing 120,000 subjects dataset from the prior study as a pool for our simulation runs. From this pool, samples were drawn with replacements for each simulation.

\subsection{Additional Details of the Simulation Studies in Section \ref{sec: sim}}\label{app: detail sim}
Table \ref{table: nSNP main} shows the average number of SNPs included in each method over $200$ repetitions under the two-sample and one-sample settings in Section \ref{sec: sim} in the main text.  
\begin{table}[H]
\caption{Average number of SNPs included in each method under the under the two-sample and one-sample settings in Section \ref{sec: sim} in the main text}\label{table: nSNP main}
\resizebox{\textwidth}{!}{
	\begin{tabular}{*{8}{l}}
		\toprule
		Setting &
		Method&$(0.005, 0.1)$&$(0.05, 0.1)$ & $(0.1, 0.1)$
		&$(0.005, 0.5)$&$(0.05, 0.5)$ & $(0.1, 0.5)$\\
		\midrule
		\multirow{4}{*}{Two-sample}&
		IVW(s) & 9.21& 5.45& 5.93& 18.92& 21.66& 21.88\\
		&IVW, dIVW, pIVW, RAPS  & 27.44& 27.64& 27.95& 25.63& 26.37& 26.36\\
		&WLR(s), PCA-IVW(s)  & 29.91& 10.17& 14.80& 204.82& 159.63& 187.05\\
		&WLR, PCA-IVW, DEEM  & 1726.68& 1834.87& 1860.63& 2402.57& 2976.05& 3025.56\\
		\midrule
		\multirow{4}{*}{One-sample}&
		IVW(s) & 9.36& 6.52& 5.10& 19.03& 21.72& 21.08\\
		&IVW, dIVW, pIVW, RAPS  & 27.53& 27.46& 27.96& 25.59& 26.22& 26.52\\
		&WLR(s), PCA-IVW(s)  & 30.96& 21.20& 9.52& 203.98& 193.14& 147.74\\
		&WLR, PCA-IVW, DEEM  & 1732.61& 1848.47& 1834.90& 2411.93& 2944.24& 3017.00\\
		\bottomrule
	\end{tabular}
}
\end{table}

Figure~\ref{fig: sim ts} and \ref{fig: sim os} show the results of all method in comparison  in Section \ref{sec: sim} in the main text under the two-sample and one-sample settings, respectively.

\begin{figure}[h]
\centering
\includegraphics[scale = 0.3]{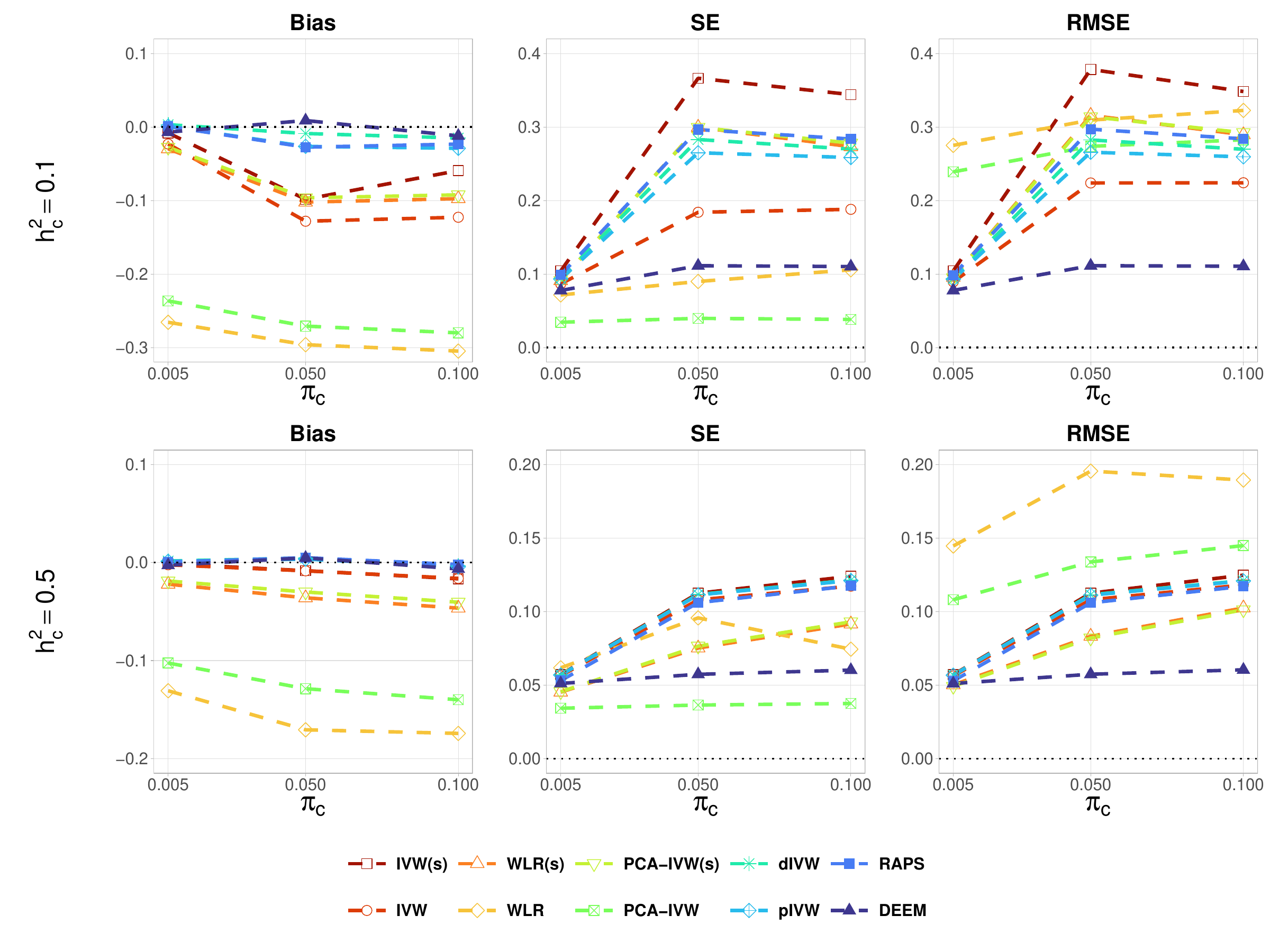}
\caption{Simulation results on the biases, SEs, and RMSEs of different methods in the two-sample setting with different combinations of $(\pi_{c}, h_{c}^{2})$. The ``s" in the parentheses means the stringent p-value threshold $10^{-4}$ is adopted.}\label{fig: sim ts}
\end{figure}

\begin{figure}[h]
\centering
\includegraphics[scale = 0.3]{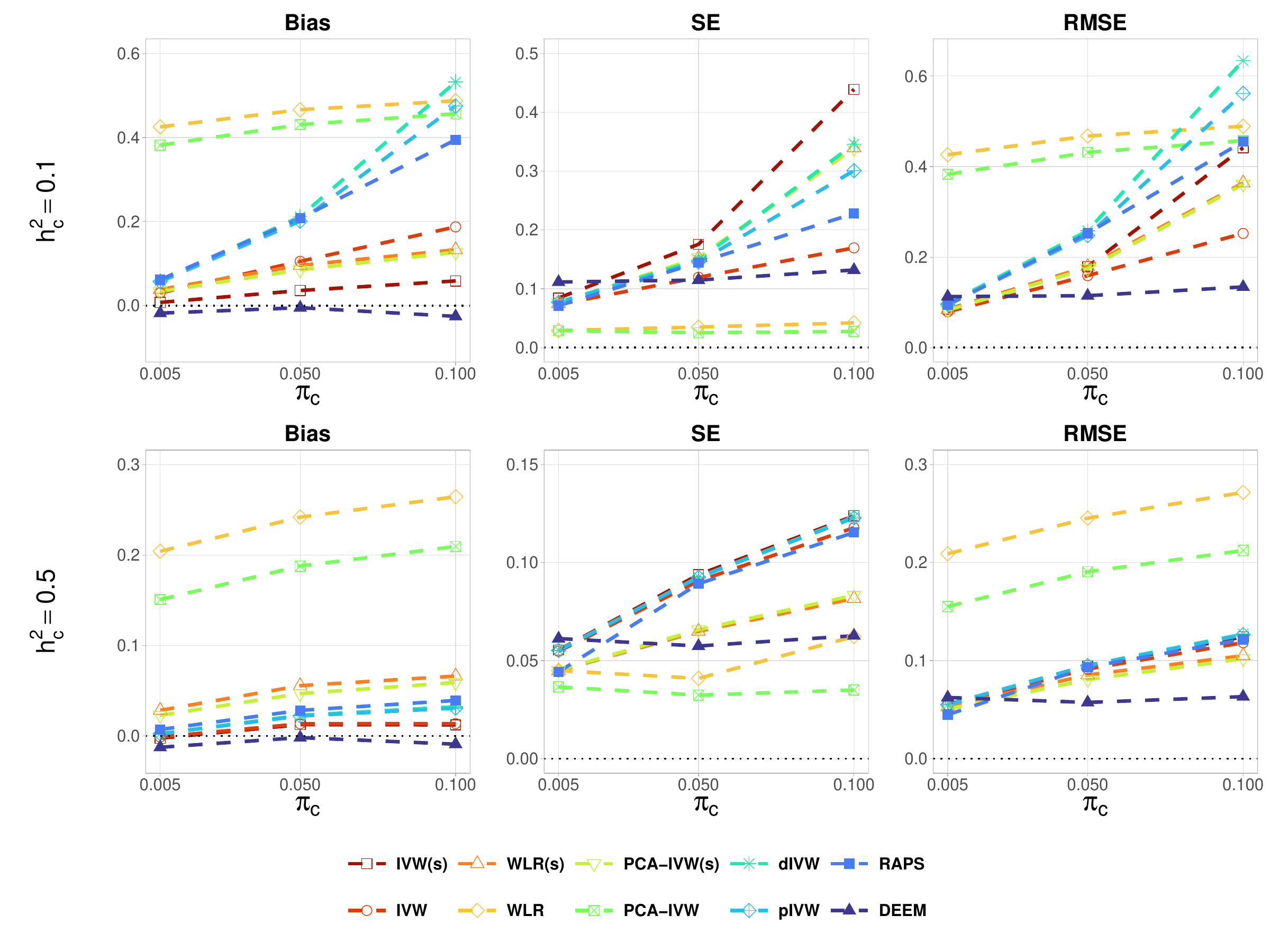}
\caption{Simulation results on the biases, SEs, and RMSEs of different methods in the one-sample setting with different combinations of $(\pi_{c}, h_{c}^{2})$. The ``s" in the parentheses means the stringent p-value threshold $10^{-4}$ is adopted.}\label{fig: sim os}
\end{figure}

In addition to the methods mentioned in the main text, we further compare DEEM with some robust MR methods which are designed to handle pleiotropic SNPs, including MR-APSS \citep{hu2022mendelian}, the weighted median method \citep{bowden2016consistent}, MRMix \citep{qi2019mendelian}, MR-PRESSO \citep{verbanck2018detection}, ContMix \citep{burgess2020robust}, and cML \citep{xue2021constrained}. Results are presented in Figure~\ref{fig: sim ts robust MR}. We adopt the absolute correlation coefficient threshold $0.01$ and p-value threshold $10^{-4}$ in the implementation of all robust MR methods.  
\begin{figure}[h]
\centering
\includegraphics[scale = 0.3]{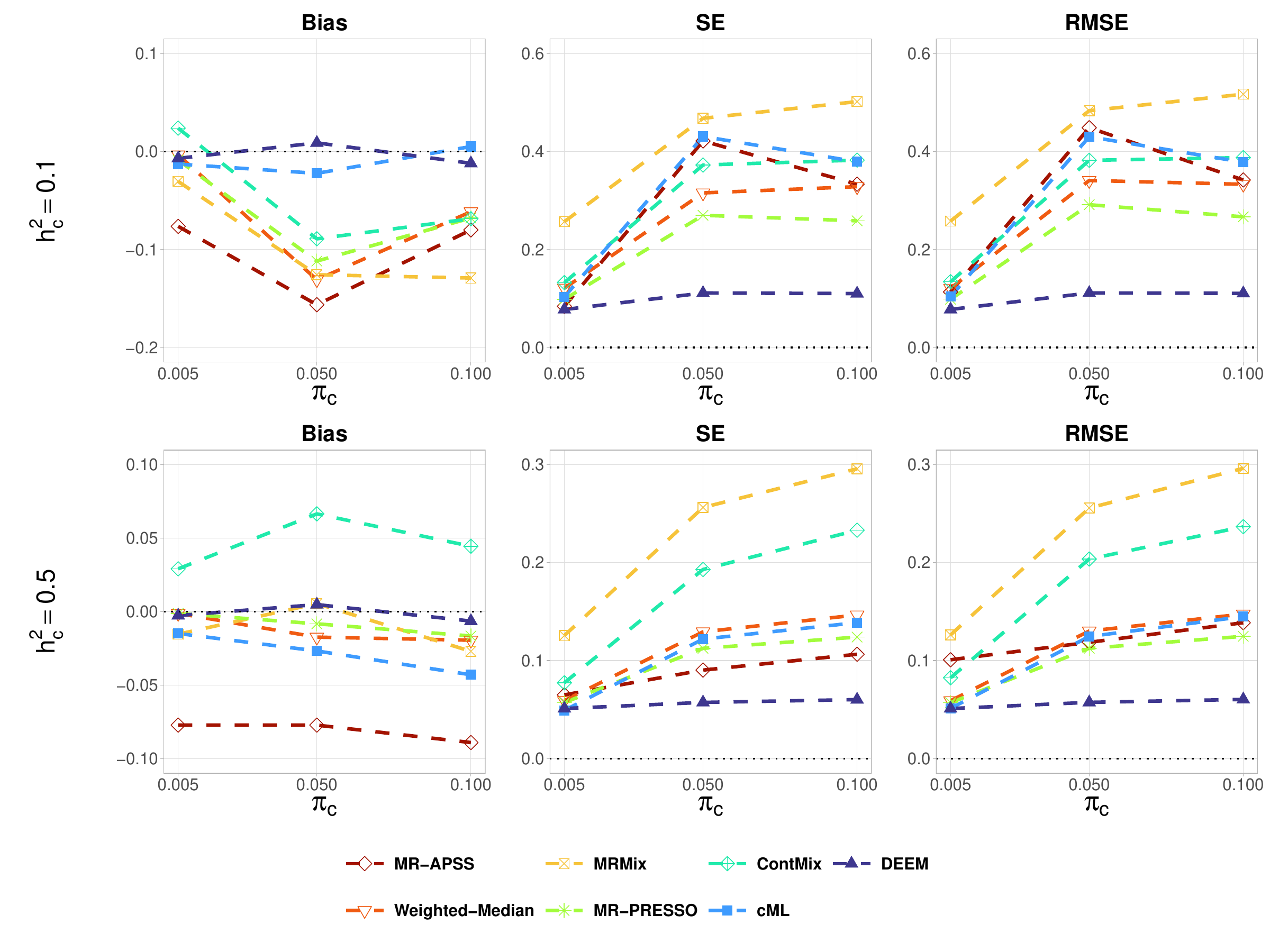}
\caption{Simulation results on the biases, SEs, and RMSEs of DEEM and robust MR methods in the two-sample setting with different combinations of $(\pi_{c}, h_{c}^{2})$.}\label{fig: sim ts robust MR}
\end{figure}

Figure \ref{fig: sim ts robust MR} shows that the proposed DEEM outperform robust MR methods in terms of bias, SE, and RMSE in most settings.

\subsection{Comparison between different choices of $\bbV$}\label{app: choice V}
We demonstrate the impact of different choices of $\bbV$ on the variance of the resulting DEEM estimator through a simulation study under the two-sample setting in the main text. Two working covariance matrices are considered: the working covariance matrix $\hS_{\bGamma}\hR\hS_{\bGamma}$ proposed in Remark \ref{remark: choice V} in the main text with $c = 1$ and the matrix $\hS_{\bGamma}^{2}$ which replace the approximated correlation matrix $\hR$ in $\hS_{\bGamma}\hR\hS_{\bGamma}$ by an identity matrix. The two matrices are denoted by Cor and Id, respectively. Note that the correlation information among SNPs is incorporated in $\hS_{\bGamma}\hR\hS_{\bGamma}$ while ignored in $\hS_{\bGamma}^{2}$. Figure~\ref{fig: sim ts choice V} presents the biases and SEs with different choices of $\bbV$ across different combinations of $(\pi_{c}, h_{c}^{2})$ and $200$ replications.
\begin{figure}[h]
\centering
\includegraphics[scale = 0.3]{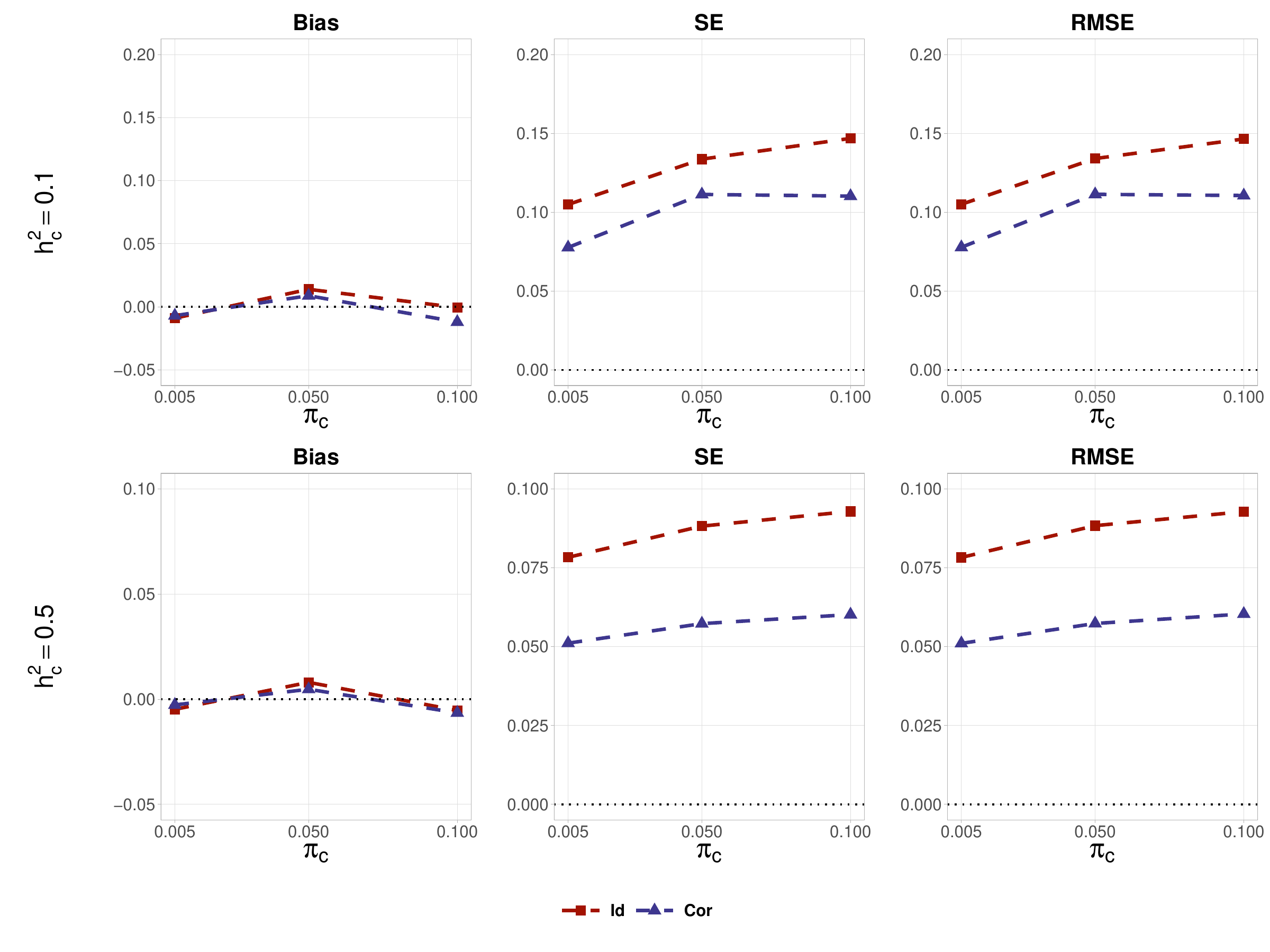}
\caption{Simulation results on the biases and SEs under different choices of $\bbV$ in the two-sample setting with different combinations of $(\pi_{c}, h_{c}^{2})$. $\pi_c$ denotes the proportion of SNPs that affect $X$.}\label{fig: sim ts choice V}
\end{figure}
The results in Figure~\ref{fig: sim ts choice V} indicate that while incorporating the correlation information among SNPs has a minor impact on the bias of the resulting estimator, it substantially reduces the SE and RMSE.

\subsection{Comparison between the ensemble estimator and the two base estimators}\label{app: en}
In this section, we compare the performance of the ensemble estimator $\hbeta_{\rm p}$ with its two base estimators $\hbeta_{1, \rm p}$ and $\hbeta_{2}$ under the two-sample setting in the main text. Furthermore, the ensemble estimator $\hbeta_{\rm os}$ is compared with its two base estimators $\hbeta_{1, \rm os}$ and $\hbeta_{2}$ under the one-sample setting in the main text. The comparative results, including the bias and SE of these different estimators, are presented in Figure \ref{fig: En}.

\begin{figure}[h]
\centering
\subcaptionbox{}{\includegraphics[scale = 0.25]{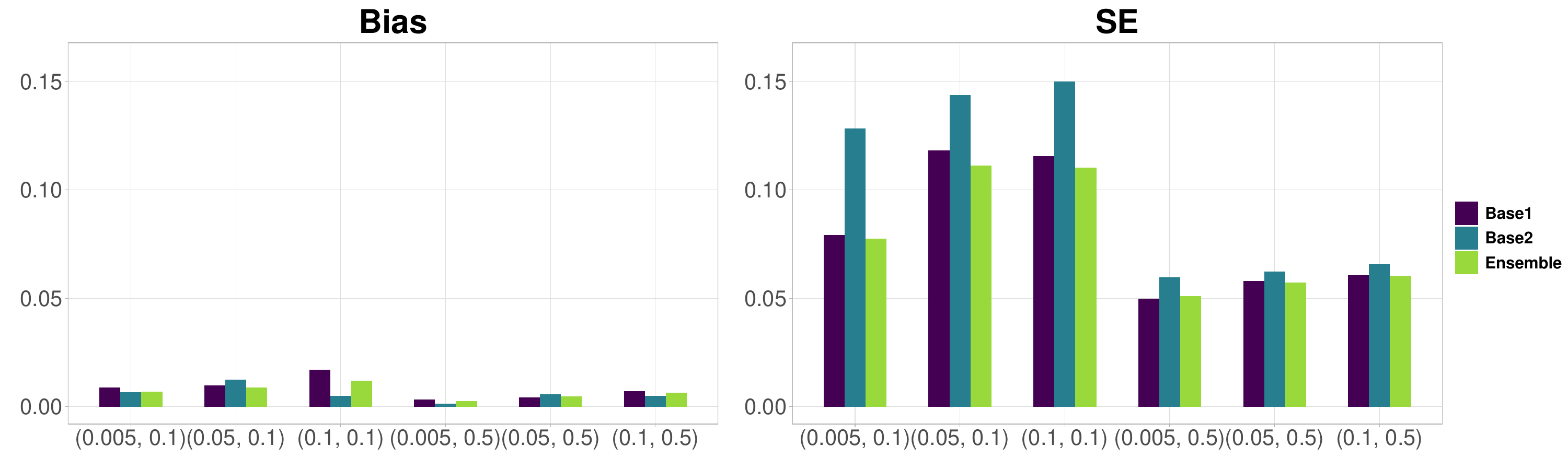}}
\subcaptionbox{}{\includegraphics[scale = 0.25]{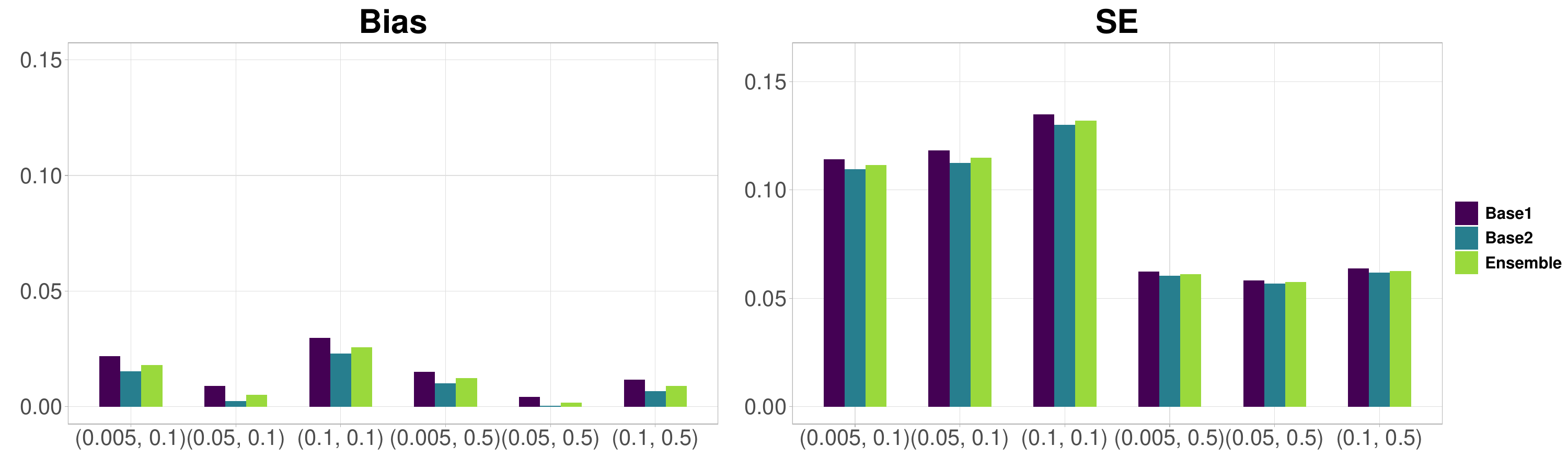}}
\caption{Simulation results on the bias and SE of the ensemble estimator and its two base estimators under different combinations of $(\pi_{c}, h_{c}^{2})$. Base 1 stands for $\hbeta_{1, \rm p}$ in the two-sample setting and stands for $\hbeta_{\rm os}$ in the one-sample setting. Base 2 stands for $\hbeta_{2}$ in both settings. (a) Bias and SE in the two-sample setting; (b) Bias and SE in the one-sample setting.}\label{fig: En}
\end{figure} 

Figure \ref{fig: En} shows that the ensemble estimator perform no worth than both of its base estimators across all different settings.
The ensemble estimator outperforms both base estimators in terms of the bias and SE when $h_{c}^{2} = 0.1$ in the two sample setting. 

\subsection{Simulation with Non-normal Effects}\label{app: sim laplace}
Here we provide some additional simulation results. The simulation settings are identical to those in the main text, except for that the effects $\balpha_{G}$ and $\bbeta_{G}$ are generated from the Laplace distribution with the same variances considered in Section \ref{sec: sim}. 

The results under two-sample and one-sample setting are summarized in Figure~\ref{fig: sim ts laplace} and Figure~\ref{fig: sim os laplace}, respectively.
Performances of different methods are similar to those in the main text.
In the two-sample setting, the coverage rates of $95\%$ CI based on DEEM are $96.5\%$, $96.5\%$, $97.0\%$, $95.0\%$, $96.0\%$, and $95.5\%$
when $(\pi_{c}, h_{c}^{2}) = (0.005, 0.1)$, $(0.05, 0.1)$, $(0.1, 0.1)$, $(0.005, 0.5)$, $(0.05, 0.5)$, and $(0.1, 0.5)$, respectively.
In the one-sample setting, the coverage rates of $95\%$ CI based on the proposed method are $96.0\%$, $96.5\%$, $94.5\%$, $93.5\%$, $95.0\%$, and $93.0\%$
when $(\pi_{c}, h_{c}^{2}) = (0.005, 0.1)$, $(0.05, 0.1)$, $(0.1, 0.1)$, $(0.005, 0.5)$, $(0.05, 0.5)$, and $(0.1, 0.5)$, respectively.
\begin{figure}[h]
\centering
\includegraphics[scale = 0.3]{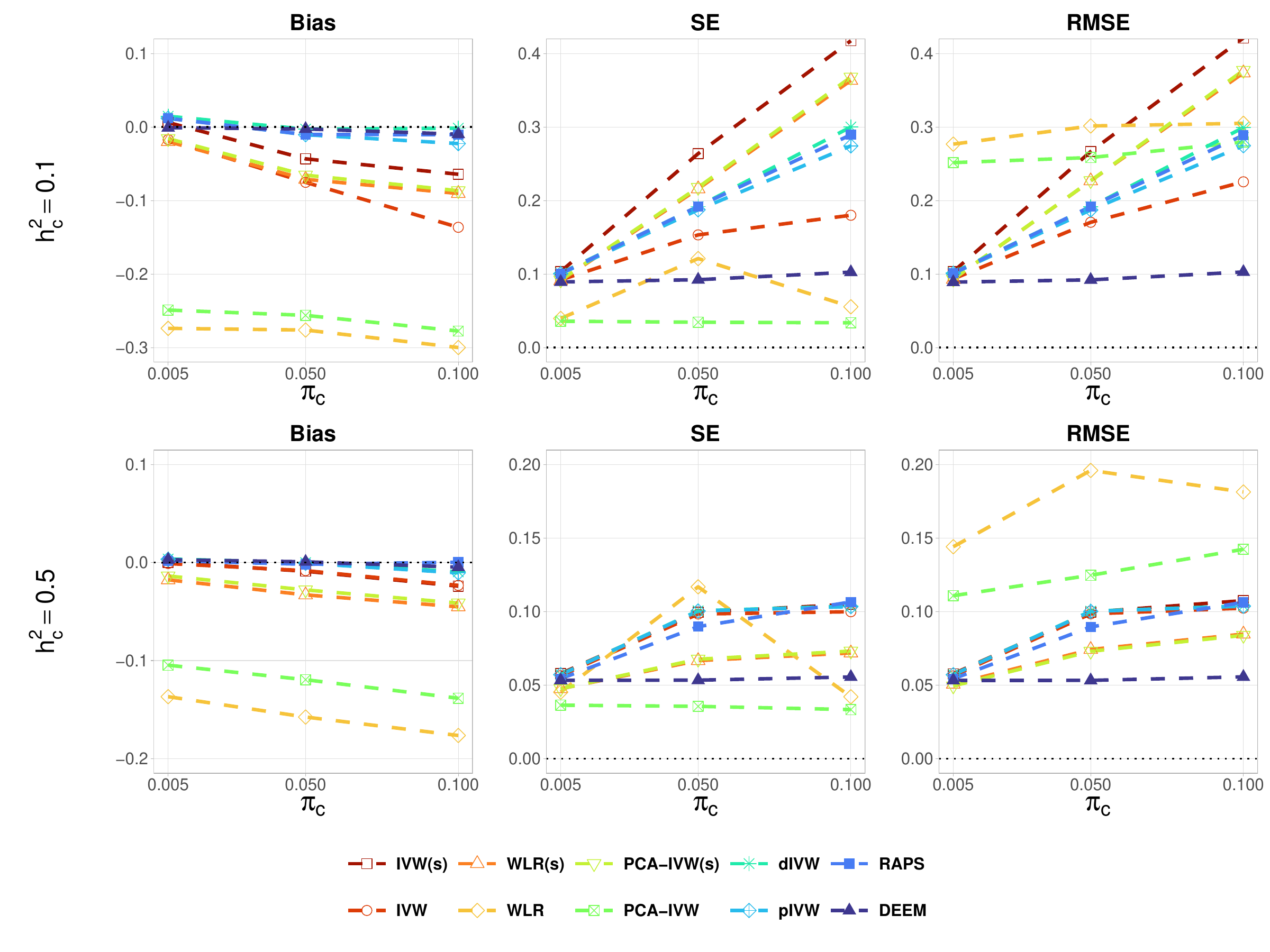}
\caption{Simulation results on the biases, SEs, and RMSEs of different methods in the two-sample setting with different combinations of $(\pi_{c}, h_{c}^{2})$. Coefficients generated from the Laplace distribution. The ``s" in the parentheses means the stringent p-value threshold $10^{-4}$ is adopted.}\label{fig: sim ts laplace}
\end{figure}

\begin{figure}[h]
\centering
\includegraphics[scale = 0.3]{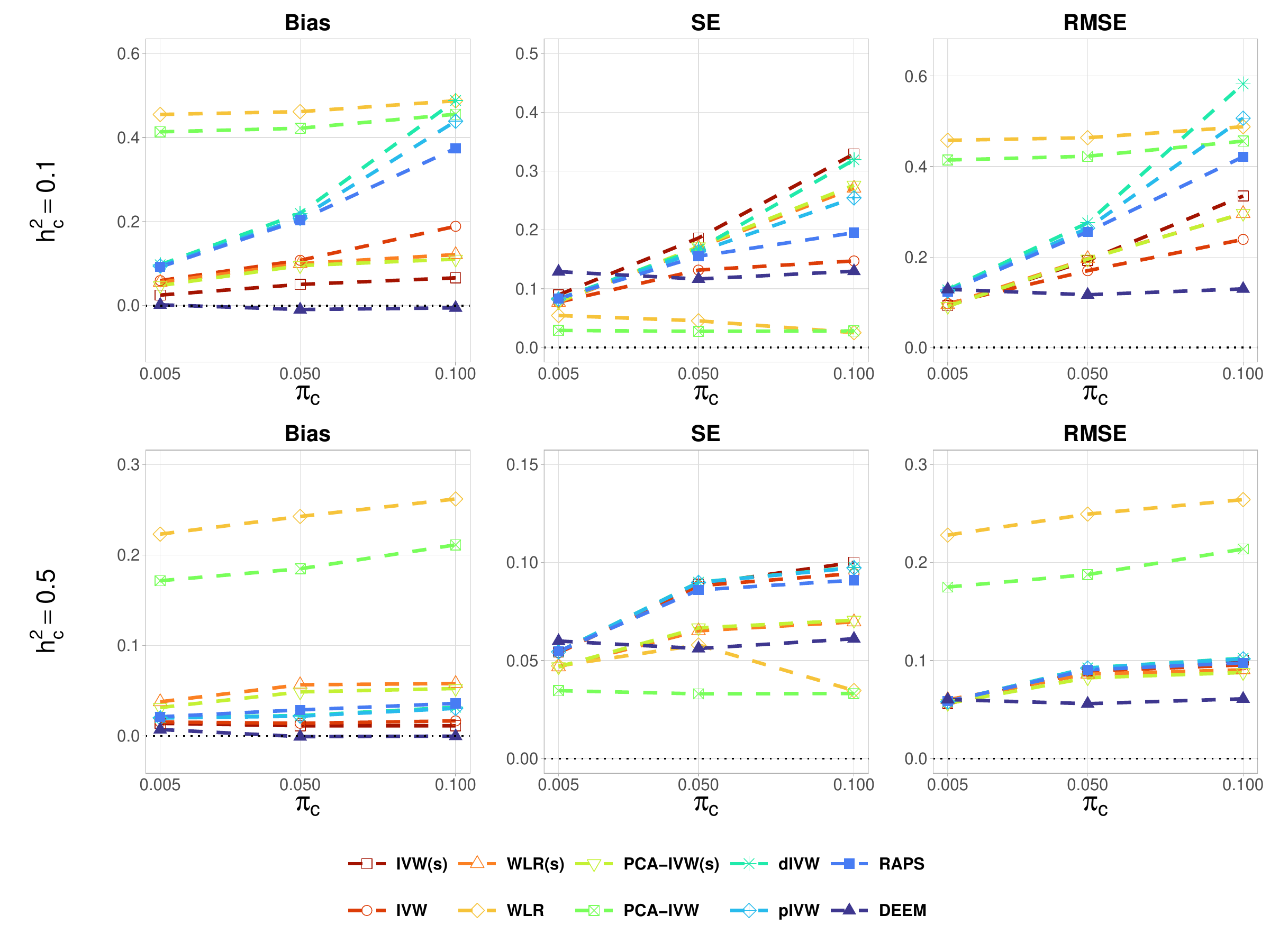}
\caption{Simulation results on the biases, SEs, and RMSEs of different methods in the one-sample setting with different combinations of $(\pi_{c}, h_{c}^{2})$. Coefficients generated from the Laplace distribution. The ``s" in the parentheses means the stringent p-value threshold $10^{-4}$ is adopted.}\label{fig: sim os laplace}
\end{figure}

\subsection{Correlated Pleiotropy}\label{app: sim CHP}
To further explore the robustness of different methods, we investigate the performance of different estimators when the SNPs may affect the exposure and outcome through an unmeasured confounder, which is commonly referred to as ``correlated pleiotropy" phenomenon in the literature of MR \citep{morrison2020mendelian,cheng2022mendelian}. The IV assumptions and the pleiotropic effect model \eqref{eq: pleiotropy outcome model} are both violated in the presence of correlated pleiotropy, which may introduce biases to many MR methods.

In this section, we generate $\bbG$ in the same way as that in the previous sections and generate the exposure and outcome from the data generation process
\[
\begin{aligned}
&U_{1} = \bG^{\T}\boldsymbol{\eta}_{G} + \epsilon_{U}\\
&X = \bG^{\T}\balpha_{G} + U_{1} + U_{2} + \epsilon_{X};\\
&Y =  X\betax + \bG^{\T}\bbeta_{G} + U_{1}\beta_{U} + U_{2} +  \epsilon_{Y},
\end{aligned}
\]
where $\epsilon_{U}$, $\epsilon_{X}$ and $\epsilon_{Y}$ are independent and follows $N(0, 0.5)$, and $U_{2} \sim N(0, 1)$. The effects $\balpha_{G}$ and $\bbeta_{G}$ are generated in the same way as in the above sections. For $j$ with $I_{\alpha_{G,j}} = 1$, generate $I_{\eta_{G,j}} \sim {\rm Bernoulli}(0.1)$.
Let
$\eta_{G,j} = 0$ if $I_{\eta_{G}, j} = 0$, and  $\eta_{G,j} \sim N(0, 0.1 / \sum_{j=1}^{d}I_{\eta_{G,j}})$ if $I_{\eta_{G,j}} = 1$, where $\eta_{G,j}$ is the $j$th component of $\boldsymbol{\eta}_{G}$. 
Set $\betax = 0.4$,  $p_{d} = 0.1$, $h_{d}^{2} = h_{c}^{2} = 0.5$, $\pi_{c} = 0.005, 0.05$ or $0.1$ and $\beta_{U} = 0, 0.25, 0.5, 0.75, 1$. The effect $\boldsymbol{\eta}_{G}$ is treated as fixed similarly to $\balpha_{G}$. In this section, the two estimates $\hgamma$ and $\hGamma$ come from two independent samples.

Next, we evaluate the performance of different MR methods in the presence of correlated pleiotropy.
Figure \ref{fig: CHP} shows different methods' biases, SEs and RMSEs in the presence of correlated pleiotropy across $200$ simulation runs. WLR and PCA-IVW adopt the p-value threshold $10^{-4}$ in this comparison because simulations in the main text show that these two methods are biased with the p-value threshold $10^{-1}$.
\begin{figure}
\centering
\includegraphics[scale = 0.3]{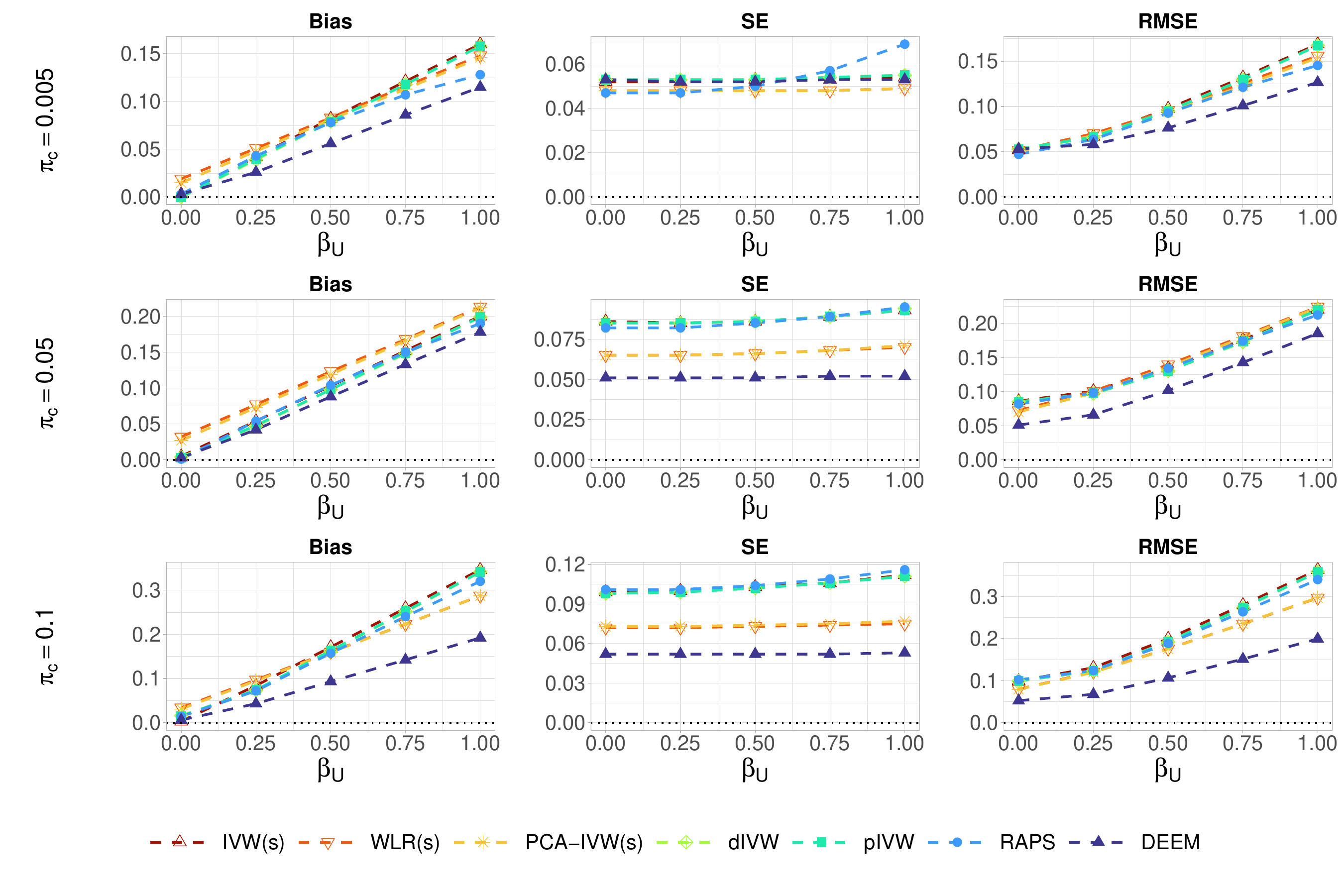}
\caption{Simulation results on the absolute biases, SEs, and RMSEs of different methods in the presence of pleiotropic effects through an unmeasured confounder in the two-sample setting with $h_{c}^{2} = 0.5$ and different $\pi_{c}$'s.}\label{fig: CHP}
\end{figure}

Figure \ref{fig: CHP} shows that the absolute biases of all methods increase as the value of $\beta_{U}$ grows. The proposed DEEM exhibits relatively small biases compared to other methods in most cases. This may be attributed to the fact that DEEM includes more valid instruments than other methods, which reduces the influence of the invalid IVs. As a result, DEEM appears to be less impacted by correlated pleiotropy compared to other methods. Similar to the previous simulations, DEEM tend to have higher efficiency than other methods when the effects are dense, i.e., $\pi_{c}$ is large. The RMSEs of DEEM are smaller than other methods in most settings.

Next, we compare DEEM with robust MR methods considered in Figure~\ref{fig: sim ts robust MR} in the presence of correlated pleiotropy. Figure~\ref{fig: CHP robust MR} presents absolute biases, SEs, and RMSEs of different methods across $200$ replications.
\begin{figure}
\centering
\includegraphics[scale = 0.3]{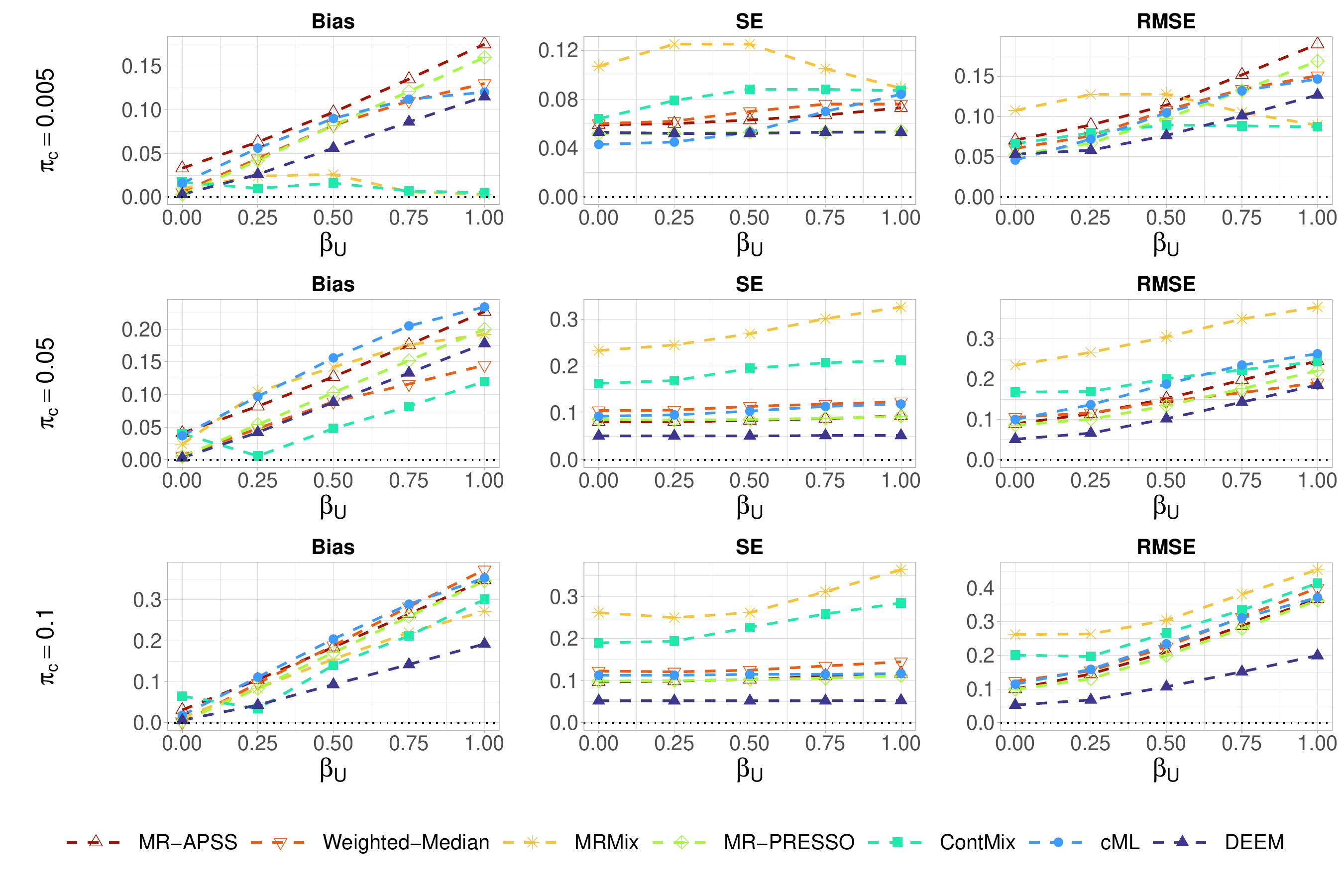}
\caption{Simulation results on the absolute biases, SEs, and RMSEs of DEEM and robust MR methods in the presence of pleiotropic effects through an unmeasured confounder in the two-sample setting with $h_{c}^{2} = 0.5$ and different $\pi_{c}$'s.}\label{fig: CHP robust MR}
\end{figure}

Figure~\ref{fig: CHP robust MR} demonstrates that MRMix and ContMix can mitigate bias caused by correlated pleiotropy when $\pi_{c} = 0.005$. However, they remain susceptible to bias when $\pi_{c} = 0.05$ or $0.1$. The biases of all other methods generally increase as $\beta_{U}$ grows. Notably, DEEM exhibits competitively small biases, particularly in dense-effect settings ($\pi_{c} = 0.1$). Furthermore, DEEM achieves the smallest SEs and RMSEs among all methods across most settings.

\section{Packages and Data Availability}\label{app: package and data}
The R packages to implement RAPS and pIVW are available at Github (\url{https://github.com/qingyuanzhao/mr.raps};\url{https://github.com/siqixu/mr.pivw}). The R packages \texttt{MendelianRandomization}, \texttt{MRPRESSO}, \texttt{https://github.com/gqi/MRMix}, and \texttt{MRAPSS} used in this paper to implement benchmark MR methods in comparison
are available at \url{https://cran.r-project.org/web/packages/MendelianRandomization/index.html}, \url{https://github.com/rondolab/MR-PRESSO}, \url{https://github.com/gqi/MRMix}, and \url{https://github.com/YangLabHKUST/MR-APSS}, respectively.

The genotype data in the simulation study is available at \url{https://dataverse.harvard.edu/dataset.xhtml?persistentId=doi:10.7910/DVN/COXHAP}. Genotype data from the 1000 Genomes Project (Phase 3) are available from the plink website \url{https://www.cog-genomics.org/plink/2.0/resources}.

In Section \ref{subsec: LDL-CAD}, the supplemental exposure dataset is a GWAS for LDL-C conducted by the Global Lipids Genetics Consortium \citep{global2013discovery} (sample size: $1.89 \times 10^{5}$, unit: mmol/L, available at \url{http://csg.sph.umich.edu/willer/public/lipids2013}); the exposure dataset is a GWAS for LDL-C from the UK biobank (sample size: $4.00\times 10^{5}$, unit: mmol/L, available at \url{https://pan.ukbb.broadinstitute.org/downloads}); the outcome dat-aset is a GWAS for CAD risk conducted by the CARDIoGRAMplusC4D Consortium \citep{schunkert2011large} (sample size: $1,85\times 10^{5}$, available at \url{http://www.cardiogramplusc4d.org/data-downloads}).
In Section \ref{subsec: BMI-SBP},
the supplemental dataset is a GWAS for BMI conducted by the GIANT Consortium \citep{locke2015genetic} (sample size: $3.39 \times 10^{5}$, unit: kg / m$^2$, available at \url{https://portals.broadinstitute.org/collaboration/giant/index.php/GIANT_consortium_data_files#GIANT_Consortium_2012-2015_GWAS_Summary_Statistics}); the exposure dataset is a GWAS for BMI from the UK biobank (sample size: $3.36 \times 10^{5}$, unit: kg / m$^2$, available at \url{https://pan.ukbb.broadinstitute.org/downloads}); the outcome dataset is a GWAS for SBP from the UK biobank  (sample size: $3.18 \times 10^{5}$, unit: mmHg, available at \url{https://pan.ukbb.broadinstitute.org/downloads}).
\bibliographystyle{asa}
\bibliography{DEEM-arXiv.bib}
\end{document}